\newtheorem{theorem}{Theorem}[section]
\newtheorem{corollary}[theorem]{Corollary}
\newtheorem{lemma}[theorem]{Lemma}
\newtheorem{observation}[theorem]{Observation}
\newtheorem{definition}[theorem]{Definition}
\newtheorem{conjecture}[theorem]{Conjecture}
\newcommand{\qed}{\rule{7pt}{7pt}}
\newenvironment{proof}{\noindent{\bf Proof:}\hspace*{1em}}{\qed\bigskip}
\newenvironment{proof-sketch}{\noindent{\bf Proof Sketch:}\hspace*
{1em}}{\qed\bigskip}
\newenvironment{proof-idea}{\noindent{\bf Proof Idea:}\hspace*{1em}}
{\qed\bigskip}
\newenvironment{proof-of-lemma}[1]{\noindent{\bf Proof of Lemma #1}
  \hspace*{1em}}{\qed\bigskip}
\newenvironment{proof-attempt}{\noindent{\bf Proof Attempt}\hspace*
{1em}}{\qed\bigskip}
\newenvironment{proofof}[1]{\noindent{\bf Proof
of #1:}}{\qed\bigskip}
\newcommand{\calp}{\mathcal{P}}
\newcommand{\calb}{\mathcal{B}}
\newcommand{\cala}{\mathcal{A}}
\newcommand{\eps}{\epsilon}
\DeclareMathOperator*{\E}{\mathbb{E}}
\newcommand{\Z}{{\mathbb Z}}
\newcommand{\F}{{\mathbb F}}
\newcommand{\N}{{\mathbb N}}
\newcommand{\R}{{\mathbb R}}
\newcommand{\C}{{\mathbb C}}
\newcommand{\eqdef}{{\stackrel{\rm def}{=}}}
\newcommand{\zo}{{\{0,1\}}}
\newcommand{\ignore}[1]{}
\newcommand{\email}[1]{\textcolor{gray}{{\tt #1}}}
\newcommand{\indd}{\mathrm{indd}}
\newcommand{\indm}{\mathrm{indm}}
\newcommand{\cali}{\mathcal{I}}
\newcommand{\calg}{\mathcal{G}}
\newcommand{\expo}[1]{\mathsf{e}\left(#1\right)}
\begin{document}

\title{Testing Low Complexity Affine-Invariant Properties}
\author{Arnab Bhattacharyya\thanks{Center for Computational
    Intractability. Supported by NSF Grants CCF-0832797, 0830673, and 0528414.}\\Princeton University\\
    \email{arbhat@gmail.com}
\and Eldar Fischer\thanks{Faculty of Computer Science. The research leading to these results has received funding from the ERC 7th Framework Programme grant number 202405.}\\Israel Institute of Technology\\ \email{eldar@cs.technion.ac.il}
  \and Shachar Lovett\thanks{School of Mathematics. Supported by NSF grant DMS-0835373.}\\Institute for Advanced Study\\ \email{slovett@math.ias.edu}}

\maketitle

\begin{abstract}
Invariance with respect to linear or affine transformations
of the domain is arguably the most common symmetry exhibited by natural
algebraic properties.  In this work, we show that any {\em low complexity}
affine-invariant property of multivariate functions over finite fields is
testable with a constant number of  queries. This immediately reproves,
for instance, that the Reed-Muller code over $\F_p$ of degree $d < p$ is
testable, with an argument that uses no detailed algebraic information
about polynomials, except that low degree is preserved by composition
with affine maps.

The complexity of an affine-invariant property $\calp$ refers to the maximum
complexity, as defined by Green and Tao (Ann.~Math.~2008), of the sets
of linear forms used to characterize $\calp$. A more precise statement
of our main result is that  for any fixed prime $p \geq 2$ and fixed
integer $R\geq 2$, any affine-invariant property $\calp$ of functions
$f: \F_p^n \to [R]$ is testable, assuming that the complexity of the
property  is less than $p$. Our proof involves developing analogs of
graph-theoretic techniques in an algebraic setting, using tools from
higher-order Fourier analysis.
\end{abstract}
\thispagestyle{empty}
\setcounter{page}{0}
\thispagestyle{empty}
\newpage

\section{Introduction}\label{sec:intro}
The field of property testing, as initiated by \cite{BLR,BFL} and
defined formally by \cite{RS,GGR}, is the study of algorithms that
query their input a very small number of times and with high
probability decide correctly whether their input satisfies a given property or
is ``far'' from satisfying that property.  A property is called {\em
testable}, or sometimes {\em strongly testable} or {\em locally
testable}, if the number of queries can be made independent of the
size of the object without affecting the correctness probability.
Perhaps surprisingly, it has been found that a large number of
natural properties satisfy this strong requirement; see e.g.\
the surveys \cite{FischerSurvey, RubinfeldICM,
RonSurvey09, SudanSurvey} for a general overview.

A fundamental problem in the area is then to find a combinatorial
{\em characterization} of the testable properties. The
characterization problem was explicitly raised even in the early work
of \cite{GGR}, and for dense graphs it was addressed in a long
series of works culminating in \cite{AFNS06} and \cite{BCLSSV06}.

In this work, we make steps towards such a
characterization for the class of affine-invariant properties of
multivariate functions over finite fields. Before stating our results,
let us define some useful notions that will be helpful to know
throughout this paper.

\subsection{Testability and  Invariances}
Fix a prime $p \geq 2$ and an integer $R \geq 2$ throughout.
Given a property $\calp$ of functions in $\{\F_p^n \to [R]\}$,
we say that $f : \F_p^n \to [R]$ is {\em $\eps$-far} from $\calp$ if
$\min_{g \in \calp} \Pr_{x \in \F_p^n}[f(x) \neq g(x)] > \eps$,
and we say that it is {\em $\eps$-close} otherwise.
$\calp$ is said to be {\em testable} (with one-sided error)
if there is a function $q: (0,1) \to \Z^+$ and an algorithm $T$ that,
given as input a parameter $\eps \in (0,1)$ and oracle access to a
function $f: \F_p^n \to [R]$, makes at most $q(\eps)$ queries to the
oracle for $f$, always accepts if $f \in \calp$ and rejects
with probability at least $2/3$ if $f$ is $\eps$-far from $\calp$.

As an example of a testable property, let us recall the famous
result by Blum, Luby and Rubinfeld \cite{BLR} which started off
this whole line of research. They showed that for testing whether a
function $f: \F_p^n \to \F_p$ is linear or whether it is $\eps$-far
from linear, it is enough to  query the value of $f$ at only
$O(1/\eps)$ points of the domain.

Linearity, in addition to being testable, is also an example of a
{\em linear-invariant} property. We say that a property $\calp
\subseteq \{\F_p^n \to [R]\}$ is linear-invariant if it is the case
that for any $f \in \calp$ and for any linear transformation $L:
\F_p^n \to \F_p^n$, it holds that $f\circ L \in \calp$.  Similarly, an
{\em affine-invariant} property is closed under composition with
affine transformations $A: \F_p^n \to \F_p^n$ (an affine transformation
$A$ is of the form $L+c$ where $L$ is linear and $c$ is a constant).
The property of a function $f: \F_p^n \to \F_p$ being affine is testable
by a simple reduction to \cite{BLR}, and is itself affine-invariant. Other well-studied
examples of affine-invariant (and hence, linear-invariant) properties
include Reed-Muller codes (in other words, bounded degree
polynomials) \cite{BFL, BFLS,FGLSS,RS, AKKLR}, homogeneous polynomials of
bounded degree \cite{KS08}, and subspace juntas \cite{VX11}.

In general, invariance under a large group of symmetries seems to be a
common trait of mathematically natural properties, and in particular,
affine invariance underlies most interesting properties that one would
classify as ``algebraic''. Kaufman and Sudan in \cite{KS08} made
explicit note of this phenomenon and urged a study of the testability of
properties with focus on their invariance. In their paper, Kaufman and
Sudan showed that {\em linear} affine-invariant properties are
automatically testable but left open the general question. Note that
arbitrary affine-invariant properties are not testable; in fact,
testing a random affine-invariant property requires querying nearly
all of the domain. So, the question becomes: what is the minimal set
of restrictions an affine-invariant property must satisfy in order to
be testable? In order to state the conjectured answer to this
question, as well as our progress here, we need to introduce some
more notions.

\subsection{Hereditariness and Induced Affine Constraints}
We now introduce the subclass of affine-invariant properties which, we
believe, captures every property testable with a $1$-sided error test.
\begin{definition}[Affine subspace hereditary properties]
An affine-invariant property $\calp$ is said to be {\em affine subspace
  hereditary} if for any  $f: \F_p^n \to [R]$ satisfying $
\calp$, the restriction of $f$ to any affine subspace of $\F_p^n$ also
satisfies $\calp$.
\end{definition}
Affine subspace hereditariness thus provides something like a
uniformity condition, relating the definition of the property for
different values of $n$. Specializing the conjecture in \cite{BGS10}
for linear-invariant properties to affine-invariant properties gives
the following:
\begin{conjecture}[\cite{BGS10}]\label{conj:main}
Any affine subspace hereditary property is testable with $1$-sided error.
\end{conjecture}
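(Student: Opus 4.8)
The plan is to transplant to the algebraic setting the regularity-method proof that hereditary graph properties are testable (Alon--Shapira), with hypergraph regularity replaced by higher-order Fourier analysis. The tester is the obvious one: on input $\eps$ and oracle access to $f:\F_p^n\to[R]$, choose a uniformly random affine subspace $V\leq\F_p^n$ of a constant dimension $k=k(\eps,\calp)$, query $f$ on all of $V$, and accept if and only if the restriction $f|_V$, transported to $\F_p^k\to[R]$ by an arbitrary affine isomorphism $V\cong\F_p^k$, lies in $\calp$. Since $\calp$ is affine subspace hereditary, every restriction of an $f\in\calp$ is again in $\calp$, so the tester never rejects a member of $\calp$: one-sided error comes for free. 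The entire content is the converse soundness claim --- if $f$ is $\eps$-far from $\calp$ then $\Pr_V[f|_V\notin\calp]$ is bounded below by a constant --- which is equivalent to an arithmetic analog of the Alon--Shapira infinite removal lemma: if for every large $k$ almost every $k$-dimensional restriction of $f$ satisfies $\calp$, then $f$ can be modified on an $o(1)$ fraction of $\F_p^n$ (as $n\to\infty$) to satisfy $\calp$ exactly.

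First I would apply the arithmetic regularity lemma to decompose $f=f_{\mathrm{str}}+f_{\mathrm{err}}+f_{\mathrm{unf}}$, where $f_{\mathrm{str}}$ is a bounded-complexity function of a polynomial factor $\calb$ (finitely many polynomials $\F_p^n\to\mathbb{T}$ of degree at most $d$ and arbitrarily high rank), $\|f_{\mathrm{err}}\|_2$ is negligible, and $f_{\mathrm{unf}}$ has negligible Gowers $U^{d+1}$-norm. Because $\calp$ is affine-invariant, I expect membership in $\calp$ to be controlled, up to $o(1)$, by the \emph{affine-pattern profile} of $f$: for every fixed finite system $\Phi=(\phi_1,\dots,\phi_t)$ of affine-linear forms in $m$ free variables and every $(c_1,\dots,c_t)\in[R]^t$, the density over $\vec x\in(\F_p^n)^m$ with which $(f(\phi_1(\vec x)),\dots,f(\phi_t(\vec x)))=(c_1,\dots,c_t)$. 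The central technical step is a counting lemma asserting that this profile is determined, to within $o(1)$, by $\calb$ together with the distribution of $f$-values on the atoms of $\calb$. Over $\F_p^n$ such a counting lemma at pattern complexity below $p$ follows from equidistribution of high-rank classical polynomial factors and the $U^{d+1}$-inverse theorem in the range $d<p$; pushing to unbounded complexity is exactly where the difficulty concentrates, since one then needs non-classical polynomials, the inverse theorem for all $U^k$ over $\F_p^n$ (Bergelson--Tao--Ziegler, Tao--Ziegler), regularization and equidistribution of non-classical factors, and the true-complexity phenomenon (Gowers--Wolf) in full generality. The restriction ``complexity $<p$'' in the present paper is precisely what keeps the whole argument inside the classical regime.

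Granting the counting lemma, the argument would conclude as in the graph case. Attach to $f$ its \emph{model object} --- the factor $\calb$ plus the approximate distribution of $f$ on each atom --- which by the counting lemma pins down the affine-pattern profile. If $f$ is $\eps$-far from $\calp$, then, via a compactness argument over the sequence of dimensions $n$ that produces a limit object for the affine-invariant property $\calp$ playing the role graphons play for graph properties, one should extract a fixed finite forbidden configuration: a system of affine-linear forms that $f$ realizes with constant density but no $g\in\calp$ realizes at all. Since every such system embeds into an affine subspace of bounded dimension, this forces $f|_V\notin\calp$ with constant probability. Conversely, if every model object arising from $f$ is ``$\calp$-compatible,'' a cleaning step --- round $f$ on each atom to the dominant value prescribed by a witnessing $g\in\calp$, discard $f_{\mathrm{err}}$ and $f_{\mathrm{unf}}$, and invoke hereditariness to glue these atomwise-consistent pieces into one global $g$ --- yields a function that is $o(1)$-close to $f$ and lies in $\calp$. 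I expect the main obstacle to be exactly the passage beyond complexity $<p$: besides the heavy non-classical higher-order Fourier machinery, one must reconcile the \emph{exact}, zero-error closure condition defining hereditariness with the inherently \emph{approximate} model-object description, and making that bridge rigorous --- in effect developing the correct ``algebraic graphon'' limit theory for affine-invariant properties --- is the crux that the bounded-complexity hypothesis of this paper lets one sidestep.
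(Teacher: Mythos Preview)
The statement you are attempting to prove is a \emph{conjecture}; the paper does not prove it and, in its ``Further research'' section, explicitly raises the possibility that it is false. The paper's main result (Theorem~\ref{thm:main}) establishes only the special case where every constraint in the characterizing family $\cala$ has Cauchy--Schwarz complexity strictly less than $p$, and even the intermediate Conjecture~\ref{conj:main2} (a global complexity bound, but possibly $\geq p$) is left open. Your high-level architecture --- random-subspace tester, arithmetic regularity, counting lemma, compactness, cleanup --- is exactly the scaffolding the paper uses for that bounded-complexity case, and you have correctly located the point at which the classical argument stops.

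The genuine gap, however, is not merely that one must import non-classical polynomials and the full $U^k$ inverse theorem. In the paper the compactness step is carried by the function $\Psi_\cala$ (Definition~\ref{def:psi}): because all constraints in $\cala$ have complexity at most some fixed $d<p$, one chooses the factor degree $d$ once and for all, regularizes at that degree, and then $\Psi_\cala(|\calb|)$ bounds the size of a forbidden pattern detectable from the big-picture data. Without a global complexity bound on $\cala$ there is no a priori choice of $d$: the witness to $\eps$-farness may require arbitrarily high Gowers norm to detect, and no single polynomial factor of any fixed degree controls it. Your sentence ``every such system embeds into an affine subspace of bounded dimension'' hides precisely this circularity --- the dimension bound in the paper comes from the complexity bound, not the other way around. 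As the authors note, unbounded complexity here is analogous to unbounded uniformity in hypergraphs, a regime in which no regularity lemma is known; so your plan is not a proof missing a technical lemma, but a strategy whose central engine (one regularization simultaneously controlling all relevant patterns) may simply not exist in this generality.
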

Moreover, \cite{BGS10} show that {\em every} affine-invariant property
testable by a ``natural'' tester is very ``close'' to an affine
subspace hereditary property\footnote{We omit the technical definitions
of ``natural'' and ``close'' here, since they are unimportant
here. Informally, the behavior of a ``natural'' tester is independent of the
size of the domain and ``close'' means that the property deviates from
an actual affine subspace hereditary property on functions over a
finite domain. See \cite{BGS10} for details, or \cite{AS08} for
the analogous definitions in a graph-theoretic
context.}. In fact, resolving Conjecture \ref{conj:main} would yield a
combinatorial {\em characterization} of the (natural) one-sided testable
affine-invariant properties, similar to the characterization for dense
graph properties \cite{AS08}.

Before proceeding, let us give some examples of affine subspace
hereditary properties in order to build intuition about how to test them.  Consider the property
of being affine, by which we mean here that the function is a polynomial
of degree at most $1$. This is clearly an affine-invariant hereditary
property. As we remarked earlier, the property is known to be
testable. Note that here, we could also have defined being affine as the
condition of satisfying the identity $f(x)-f(x+y)-f(x+z)+f(x+y+z) = 0$
for every $x,y,z\in \F_p^n$. This is a ``local'' characterization of
being affine, in the sense that the functional equation does not depend on
the value of $n$. Moreover, this characterization
automatically suggests a $4$-query test: pick random $x, y, z \in
\F_p^n$ and check whether the identity holds or not for that choice of
$x, y, z$.

More generally, consider the property of being a
polynomial of degree at most $d$, for some fixed positive integer $d$.
Again, the property is clearly affine subspace hereditary. It is also
known to be testable \cite{AKKLR} over finite fields. And just as in
the case of linearity, the test arises out of a local
characterization for degree $d$: the $(d+1)$th derivative in every
$d+1$ directions at every point should be $0$. The test is then to choose a
random point and random $d+1$ directions and to check whether the $(d+1)$th
derivative in the chosen directions at the chosen point is $0$ or not.

In fact, one can describe any affine subspace hereditary property
using (finitely or infinitely many) such local characterizations. To state this formally, let us
put forth a useful definition.
\begin{definition}[Affine constraints]
\
\begin{itemize}
\item
An {\em affine constraint of size $m$ on $\ell$ variables} is a tuple  $A
= (a_1,\dots,a_m)$ of $m$ linear forms $a_1,\dots, a_m$ over $\F_p$ on
$\ell$ variables, where $a_1(X_1,\dots,X_\ell) = X_1$ and for every $i
\geq 2$, $a_i(X_1,\dots, X_\ell) = X_1 + \sum_{j=2}^\ell c_{i,j}X_j$ where
 each $c_{i,j} \in \F_p$.
\item
An {\em induced affine constraint of size $m$ on $\ell$ variables} is
a pair $(A,\sigma)$ where $A$ is an affine constraint of size $m$ on
$\ell$ variables and $\sigma \in [R]^m$.
\item
Given such an induced affine constraint $(A,\sigma)$, a function $f:
\F_p^n \to [R]$ is said to be {\em $(A,\sigma)$-free} if there exist no
$x_1, \dots, x_\ell \in \F_p^n$ such that $(f(a_1(x_1, \dots, x_\ell)),
\dots, f(a_m(x_1,\dots,x_\ell))) = \sigma$. On the other hand, if such
$x_1, \dots, x_\ell$ exist, we say that {\em $f$ induces $(A,\sigma)$ at
  $x_1, \dots, x_\ell$}.
\item
Given a (possibly infinite) collection $\cala = \{(A^1,\sigma^1),
(A^2, \sigma^2), \dots, (A^i,\sigma^i),\dots\}$ of induced affine constraints, a function $f: \F_p^n
\to [R]$ is said to be {\em $\cala$-free} if it is
$(A^i,\sigma^i)$-free for every $i \geq 1$.
\end{itemize}
\end{definition}

The connection between affine subspace hereditariness and affine
constraints is given by the following simple observation.
\begin{observation}
An affine-invariant property $\calp$ is affine subspace hereditary if
and only if it is equivalent to the property of $\cala$-freeness for
some fixed collection $\cala$ of induced affine constraints.
\end{observation}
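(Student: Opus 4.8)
The plan is to prove the two implications separately; the direction ``$\cala$-freeness $\Rightarrow$ affine subspace hereditary'' is essentially a bookkeeping check, while the converse requires choosing the family $\cala$ judiciously.

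For the first direction, I would establish the single statement that $\cala$-freeness is preserved under composition with an \emph{arbitrary} affine map $T:\F_p^{n'}\to\F_p^n$, since affine-invariance and closure under restriction to affine subspaces are both special cases (take $T$ an affine automorphism of $\F_p^n$, resp.\ an affine embedding of a subspace into $\F_p^n$). So suppose $f$ is $\cala$-free but $f\circ T$ induces some $(A^i,\sigma^i)\in\cala$ at points $z_1,\dots,z_\ell$. Write $T=L+c$ with $L$ linear. The crucial structural feature of an affine constraint is that every form $a^i_j$ has coefficient exactly $1$ on $X_1$; using this one checks directly that $T\bigl(a^i_j(z_1,\dots,z_\ell)\bigr)=a^i_j(w_1,\dots,w_\ell)$ for the points $w_1=T(z_1)$ and $w_k=L(z_k)$ for $k\geq 2$ (the constant $c$ attaches uniformly to the $X_1$-part precisely because that coefficient is $1$, so the $w_k$ do not depend on $j$). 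Hence $f$ itself induces $(A^i,\sigma^i)$ at $w_1,\dots,w_\ell$, contradicting $\cala$-freeness of $f$.

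For the converse, let $\calp$ be affine subspace hereditary and let $\cala$ be the collection of \emph{all} induced affine constraints $(A,\sigma)$ that are induced by no member of $\calp$. By construction every $f\in\calp$ is $\cala$-free. For the reverse containment I would argue by contrapositive: given $f:\F_p^n\to[R]$ with $f\notin\calp$ (the case $n=0$ being trivial), I exhibit a constraint in $\cala$ that $f$ induces. Take $\ell=n+1$, $m=p^n$, the forms $a_v(X_1,\dots,X_{n+1})=X_1+\sum_{i=1}^n v_iX_{i+1}$ indexed by $v\in\F_p^n$ (with $a_0=X_1$ as the distinguished first form), and the pattern $\sigma=(f(v))_{v\in\F_p^n}$, i.e.\ the full value table of $f$. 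Evaluating at $x_1=0$, $x_{i+1}=e_i$ gives $a_v(x_1,\dots,x_{n+1})=v$, so $f$ induces $(A,\sigma)$ at these points. To see that $(A,\sigma)\in\cala$: if some $g\in\calp$ over some $\F_p^N$ induced $(A,\sigma)$ at $y_1,\dots,y_{n+1}$, then the affine map $\phi:v\mapsto y_1+\sum_i v_iy_{i+1}$ would satisfy $g\circ\phi=f$, exhibiting $f$ as an affine pullback of a member of $\calp$ and hence forcing $f\in\calp$ (using the closure discussed next), a contradiction.

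The step that needs care — and the main obstacle — is exactly the claim that $g\circ\phi=f$ with $g\in\calp$ forces $f\in\calp$. When $\phi$ is injective this is immediate, since $f$ is then the restriction of $g$ to the affine subspace $\mathrm{Im}(\phi)$ up to an affine isomorphism, so hereditariness together with affine-invariance of $\calp$ suffices. When $\phi$ is not injective, $f$ only factors through the quotient of $\F_p^n$ by $\ker\phi$, and closing the argument additionally needs $\calp$ to be closed under pullback along affine \emph{surjections}, equivalently under adjoining dummy variables. This closure is automatic for every property of the form ``$\cala$-freeness'' (so any $\calp$ to which the observation can apply must possess it), and it is best read as part of the standing convention that a ``property'' is specified uniformly across all $n$; granting it, the general affine-pullback closure invoked above follows by factoring an arbitrary affine map as an affine surjection onto its image followed by a subspace inclusion, and both inclusions of the observation are then complete.
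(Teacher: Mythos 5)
Your argument is correct and follows essentially the same route as the paper's: both attach to each $f\notin\calp$ over $\F_p^n$ the ``full value-table'' constraint with one form $a_v=X_1+\sum_{i=1}^n v_iX_{i+1}$ per $v\in\F_p^n$ and $\sigma=f$, then close the converse via an affine pullback (you take a slightly larger $\cala$ --- all constraints induced by no member of $\calp$ --- which trivializes the forward containment but does not change the crux). The genuine contribution in your write-up is the care around the pullback step. The paper's proof simply states that containment ``follows from $\calp$ being affine-invariant and hereditary,'' but as you point out, when the affine map $\phi$ with $g\circ\phi=f$ is not injective, hereditariness and affine-invariance (the latter acting, in the paper's definition, only within a fixed $\F_p^n$) yield $f$ only as the pullback of a subspace restriction of $g$ along an affine \emph{surjection}, and closure under that last operation is not a literal consequence of the two stated hypotheses. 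Your resolution --- that this closure is forced on any $\calp$ claiming to equal an $\cala$-freeness property (so it must be taken as part of the standing uniformity convention on ``property'') --- is the right diagnosis and makes explicit a point the paper's one-line proof leaves implicit. The unified treatment of the forward direction, via preservation of $\cala$-freeness under an arbitrary affine map and the observation that the unit coefficient on $X_1$ lets the translation be absorbed into $w_1$, is likewise a clean strengthening of what the paper dismisses as trivial.
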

\begin{proof}
Given an affine invariant property $\calp$, a simple (though inefficient) way to obtain the set $\cala$ is to let it be the following: For every $n$ and a function $f:\F_p^n$ that is not in $\calp$, we include in $\cala$ the constraint $(A_f,\sigma_f)$, where $A_f$ is indexed by members of $\F_p^n$ and contains $\{a_z(X_1,\ldots,X_{n+1})=X_1+\sum_{i=1}^nz_iX_{i+1}:z=(z_1,\ldots,z_n)\in\F_p^n\}$, and $\sigma_f$ is just set to $f$. From here it is easy to see that the property defined by $\cala$ is contained in $\calp$, while containment in the other direction follows from $\calp$ being affine-invariant and hereditary.

The other direction of the observation is trivial.
\end{proof}

Thus, resolving Conjecture \ref{conj:main} boils down to showing
testability for all $\cala$-freeness properties.

\subsection{Main Result}
We show that $\cala$-freeness is testable as long as all affine
constraints in $\cala$ are of {\em complexity} less than $p$. We next
define the complexity of an affine constraint, and more generally, of
an arbitrary set of linear forms.

\begin{definition}[Cauchy-Schwarz complexity, \cite{GT06}]\label{def:cplx}
Let $\mathcal{L} = \{L_1,\dots,L_m\}$ be a set of linear forms. The
{\em (Cauchy-Schwarz) complexity of $\mathcal{L}$} is the minimal $s$
such that the following holds. For every $i \in [m]$, we can partition
$\{L_j\}_{j \in [m]\setminus \{i\}}$ into $s+1$ subsets such that
$L_i$ does not belong to the linear span of any subset.
\end{definition}

Given this, one can formulate a conjecture that is a weakened version
of Conjecture \ref{conj:main}:

\begin{conjecture}\label{conj:main2}
A property that is given by a collection of induced affine constraints with a global bound
on their complexity is testable with a $1$-sided error.
\end{conjecture}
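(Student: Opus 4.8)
We do not aim for the conjecture in full; the plan is to establish the case in which the global complexity bound is some $d<p$. The reason for this restriction is that exactly in this range the higher-order Fourier analysis of functions on $\F_p^n$ is ``classical'': the only obstructions to Gowers $U^{d+1}$-uniformity are correlations with honest degree-$\leq d$ polynomials (no non-classical polynomials intrude), and both an inverse theorem and a usable arithmetic regularity lemma are available. The plan is to reduce one-sided testability of $\cala$-freeness to an \emph{algebraic removal lemma}: for every $\eps>0$ there are $\delta=\delta(\eps,d,p,R)>0$ and $k=k(\eps,d,p,R)$ such that any $f:\F_p^n\to[R]$ that is $\eps$-far from $\cala$-free induces, on at least a $\delta$-fraction of the relevant tuples, some $(A,\sigma)\in\cala$ of size and arity at most $k$. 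Given this, the tester is immediate: sample the $O_k(1)$ points needed to instantiate every size-$\leq k$, arity-$\leq k$ affine constraint and accept unless an induced copy of some member of $\cala$ is witnessed; one-sidedness holds because $f\in\calp$ never induces a copy, and an $\eps$-far $f$ is rejected with probability bounded away from $0$ by the $\delta$-density of copies.

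To prove the removal lemma, first apply an arithmetic regularity lemma to the level sets of $f$: for each $j\in[R]$ write $\indic_{f=j}=g_j+h_j+e_j$, where $g_j$ is measurable with respect to a polynomial factor $\calb$ generated by boundedly many polynomials of degree $\leq d$, $\|h_j\|_{U^{d+1}}$ is arbitrarily small, $\|e_j\|_2$ is small, and --- at the cost of worse but still bounded (via the growth function) complexity of $\calb$ --- one may assume $\calb$ has high rank, so its atoms are near-equidistributed under every bounded-complexity linear system. A generalized von Neumann estimate then shows that, since every constraint in $\cala$ has Cauchy--Schwarz complexity $\leq d$, the number of copies of any $(A,\sigma)$ induced by $f$ equals the number induced by the ``structured model'' $(g_1,\dots,g_R)$ up to an error governed by $\max_j\|h_j\|_{U^{d+1}}$ and $\max_j\|e_j\|_2$; thus copy counts in $f$ are essentially a function of the atom-pattern of $\calb$, and a counting lemma makes this quantitative on a high-rank factor.

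The combinatorial heart, and the step I expect to be the main obstacle, is the analysis of the structured model. The goal is the dichotomy: either a small recoloring of $f$ --- changing $f$ on an $\eps$-fraction of the domain, obtained by locating atom/value pairs whose total ``weighted copy mass'' in the structured model is negligible and zeroing them out, iterating greedily --- produces a $\cala$-free function, whence $f$ is $\eps$-close to $\calp$; or else the structured model, and hence (via the counting lemma) $f$ itself, induces some constraint of $\cala$ on a $\delta$-fraction of tuples. Two points make this delicate. First, since atoms are equidistributed, if the structured model induces $(A,\sigma)$ at all then it does so on a positive-density set, but one must convert this into an induced copy of a constraint of \emph{bounded} size and arity, even though $\cala$ may contain constraints on unboundedly many variables; this requires showing that on a bounded-complexity, high-rank factor the semantics of $\cala$-freeness collapses to a finite, $n$-independent condition --- intuitively because the structured model lives over the boundedly many atoms of $\calb$, so only their mutual algebraic relations (a finite ``symbolic calculus'' of degree-$\leq d$ forms) can matter. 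Second, the recoloring must be arranged so that the repaired function is genuinely $\cala$-free, which forces one to track how $\cala$ interacts with restrictions to subspaces and with the factor structure, and is where hereditariness of $\calp$ enters. The remaining ingredients --- the $R$-valued regularity lemma with a high-rank guarantee, the counting lemma for degree-$\leq d$ polynomial factors, and the generalized von Neumann bound --- are routine within the $d<p$ higher-order Fourier toolkit, modulo bookkeeping in the multi-valued setting.
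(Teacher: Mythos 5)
Your high-level plan is essentially the same as the paper's: reduce testability to a removal-type lemma saying that an $\eps$-far function induces a $\delta$-fraction of some bounded-size constraint; obtain this via a higher-order Fourier decomposition relative to a high-rank polynomial factor; control the uniform part with a generalized von Neumann estimate; and use a compactness argument and a recoloring step. That is indeed the paper's skeleton.

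However, there is a genuine gap in the central step, and you flagged it yourself only as an ``expected main obstacle'' without resolving it. With a \emph{single} regularity decomposition $\indic_{f=j}=g_j+h_j+e_j$, you can make $\|h_j\|_{U^{d+1}}$ as small as you like as a function of $|\calb|$, but the bound $\|e_j\|_2\leq\delta$ is a \emph{constant}: it cannot be made to decrease as a function of $|\calb|$ (this is the same obstruction Green and Tao hit over the integers, and the paper cites it explicitly). Now trace through your own counting step: once you have pinned a configuration of atoms $c_1,\dots,c_m$ of $\calb$ witnessing a partial induction, the contribution of the structured model to the copy count is of order $(\eps/R)^m p^{-q}$, where $q$ is the dimension of the linear system and depends on $|\calb|$, hence this target is \emph{exponentially small in} $|\calb|$. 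The error coming from a term containing some $e_j$ is controlled only by the conditional second moment of $e_j$ on the \emph{specific} cell participating in the configuration; the global bound $\|e_j\|_2\leq\delta$ says nothing useful about a fixed cell, and indeed a fixed cell could carry essentially all of the $L^2$ mass of $e_j$. Since $\delta$ cannot be driven below the (factor-size-dependent) structured-model contribution, your single-decomposition argument cannot close. Because the property is non-monotone, you also cannot retreat to ``most cells are fine'': the induced configuration must be located \emph{somewhere}, and that somewhere might be exactly the bad cells. This is precisely what the paper's Super Decomposition Theorem and Subcell Selection corollary are designed to fix: two nested factors $\calb\preceq_{\mathrm{syn}}\calb'$, with $\|f_3\|_2\leq\delta(|\calb|)$ (small as a function of the \emph{coarse} factor, not the fine one), plus a single shift $s$ so that in \emph{every} cell of $\calb$ the chosen subcell $(c,s)$ of $\calb'$ has small conditional $L^2$ error, and most subcells represent their parent cell's statistics. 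Your proposal is missing this two-scale device entirely.

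Two smaller points. First, your greedy ``zero out atom/value pairs with negligible weighted copy mass'' recoloring is not obviously terminating nor obviously $\cala$-freeness-producing; the paper replaces this by a one-shot, explicitly defined cleanup $F$ (overwrite cells where the chosen subcell misrepresents the cell, and erase rare values within a subcell), for which $\eps/2$-closeness is proved directly, and the argument then proceeds from ``$F$ still induces some constraint'' rather than from a hypothetical repaired function. Second, for the compactness step you also need to reduce the number of \emph{variables} (not just the size) of the witnessing constraint; the paper handles this via the ``concise'' normal form (Lemma~\ref{lem:conc}) together with the compactness function $\Psi_\cala$ applied to the big-picture function with values in $2^{[R]}$. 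Your ``finite symbolic calculus'' intuition is right, but without the conciseness reduction the arity bound does not follow from the size bound.
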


The following is our main result, which shows the above when
the complexity bound is strictly smaller than the field size.
\begin{theorem}[Main theorem]\label{thm:main}
For any $\eps \in (0,1)$ and for any (possibly infinite) fixed
collection $\cala = \{(A^1,\sigma^1),$ $(A^2, \sigma^2),$
$\dots,$ $(A^i,\sigma^i), \dots \}$ of induced affine constraints such
that each $A^i$ has complexity less than $p$, there is
a function $q_\cala:(0,1)\to \Z^+$ and  a one-sided tester which
determines whether a function $f : \F_p^n \to [R]$ is $\cala$-free or
$\eps$-far from being $\cala$-free, by making at most $q_\cala(\eps)$
queries to $f$.
\end{theorem}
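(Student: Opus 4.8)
The plan is to reduce Theorem~\ref{thm:main} to an \emph{arithmetic removal lemma}: for every $\eps$ there exist $\delta>0$ and $M$, depending on $\eps$ and $\cala$, so that if $f:\F_p^n\to[R]$ is $\eps$-far from $\cala$-free then there is some $(A^i,\sigma^i)\in\cala$ with $|A^i|\le M$ that $f$ induces on at least a $\delta$-fraction of tuples $(x_1,\dots,x_{\ell_i})$. Granting this, the tester and its analysis are immediate and mirror the dense graph case. After merging repeated forms (a repeated form with mismatched target values makes a constraint unsatisfiable; with matching values it is redundant), every affine constraint of size at most $M$ is equivalent to one on at most $M$ variables, so there are only finitely many of them; the test samples $x_1,\dots,x_M\in\F_p^n$, queries $f$ on the $O_{p,M}(1)$ relevant points, rejects if any of these finitely many forbidden induced constraints of size $\le M$ appears, and repeats $O(1/\delta)$ times. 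If $f$ is $\cala$-free it induces nothing, so the test always accepts (one-sided error); if $f$ is $\eps$-far, each round exposes the guaranteed witness with probability $\ge\delta$, so $O(1/\delta)$ rounds reject with probability at least $2/3$. The query complexity is a function of $\eps$ (and $\cala$) alone.

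To prove the removal lemma I would argue by contradiction: assume $f$ is $\eps$-far from $\cala$-free yet induces every size-$\le M$ constraint of $\cala$ with density below $\delta$, and construct $g:\F_p^n\to[R]$ that is $\cala$-free with $\Pr_x[f(x)\ne g(x)]\le\eps$. Apply a strong higher-order regularity lemma to the indicators $\{f_a:=\indic[f=a]\}_{a\in[R]}$: this produces a polynomial factor $\calb$ generated by classical polynomials $P_1,\dots,P_C$ of degrees below $p$, of complexity $C$ bounded in terms of $\eps$, of rank exceeding any prescribed growth function of $C$, and with $f_a=f'_a+(\text{error})$ where $f'_a=\E[f_a\mid\calb]$ and $\|f_a-f'_a\|_{U^p}$ is as small as we like (and, crucially, allowed to depend on $C$). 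On each atom $b$ of $\calb$ set $\tau(b)=\{a\in[R]:f'_a(b)\ge\delta_1\}$ with a threshold $\delta_1=\eps/R$. A counting lemma now shows that the density with which $f$ induces a constraint $(A,\sigma)$ equals, up to error, an average of $\prod_j f'_{\sigma_j}$ against the uniform distribution on the \emph{configuration support} $\Phi_A\subseteq(\F_p^C)^m$ of atom-tuples realizable by the forms of $A$: the cross terms involving the pseudorandom pieces are killed by the generalized von Neumann inequality (a system of complexity $s$ is controlled by the $U^{s+1}$-norm, and since every $A^i$ has complexity $s<p$ this is the $U^p$-norm or lower), while the main term is evaluated using equidistribution of $(P_i(a_j(\vec x)))_{i,j}$ coming from high rank. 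Hence if some size-$\le M$ constraint admitted a \emph{popular} pattern, i.e.\ an atom-tuple $(b_1,\dots,b_m)\in\Phi_A$ with $\sigma_j\in\tau(b_j)$ for all $j$, then $f$ would induce it with density at least (roughly) $|\F_p^C|^{-m}\delta_1^{\,m}>\delta$, a contradiction; so no size-$\le M$ constraint has a popular pattern.

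The next step is to upgrade ``no popular pattern of size $\le M$'' to ``no popular pattern at all,'' which is where the infinitude of $\cala$ is absorbed. The key structural fact is that $\Phi_A$ depends only on $A$, on the degrees $d_1,\dots,d_C<p$, and on $p$, \emph{not on $n$}: because the $P_i$ are classical of degree below $p$ and the forms have complexity below $p$, the joint distribution of $(P_i(a_j(\vec x)))_{i,j}$ is asymptotically uniform on a subspace of $\F_p^{Cm}$ cut out by the degree-respecting linear relations among the forms. Thus whether a given constraint has a popular pattern depends only on the finite ``abstract model'' $\big(C,(d_1,\dots,d_C),\tau:\F_p^C\to 2^{[R]}\big)$, and since $C$ is bounded in terms of $\eps$ there are finitely many such models. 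For each one the minimum size of a constraint of $\cala$ admitting a popular pattern (if any) is well-defined, and $M$ is chosen to exceed the maximum of these minima — the apparent circularity ($M$ depends on $C$, $C$ on the regularity parameters) is broken by taking the regularity error bound, the rank-growth function, and $M$ all as functions of $C$ and invoking the strong regularity lemma once. Consequently $\calb$ admits no popular pattern for \emph{any} constraint of $\cala$. Finally define $g$ by leaving $f$ unchanged on points $x$ with $f(x)\in\tau(\text{atom}(x))$ and otherwise resetting $f(x)$ to the plurality value on that atom (which lies in $\tau$ since it has conditional density $\ge 1/R>\delta_1$); then $\Pr_x[f(x)\ne g(x)]\le R\delta_1=\eps$, and $g$ takes only popular values on each atom, so any constraint $g$ induced would exhibit a popular pattern, which is impossible. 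Hence $g$ is $\cala$-free and $\eps$-close to $f$, contradicting $\eps$-farness and establishing the removal lemma.

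The main obstacle is the higher-order Fourier input: a strong regularity/decomposition theorem for bounded families of functions in terms of high-rank low-degree polynomial factors, together with the accompanying near-orthogonality (equidistribution) statement that pins down the configuration support $\Phi_A$ independently of $n$ for systems of complexity below $p$. This is exactly where the hypothesis ``complexity $<p$'' is indispensable — it guarantees control by the $U^p$-norm and therefore lets us work entirely with \emph{classical} polynomials of degree at most $p-1$, for which the relevant inverse and equidistribution theorems are available; for complexity $\ge p$ one is forced into the regime of non-classical polynomials, where the requisite quantitative equidistribution is not known (and indeed was only developed later). A secondary, bookkeeping obstacle is arranging the quantifier order so that the target constraint-size $M$, the allowed regularity error, and the required rank are mutually consistent despite $M$ depending on the a priori unknown factor complexity; this is handled by feeding the regularity lemma functions of $C$ rather than fixed constants.
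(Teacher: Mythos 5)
Your high-level architecture---reduce to a removal-type statement, regularize via a high-rank polynomial factor, clean $f$ to a nearby function taking only popular values in each cell, and absorb the infinitude of $\cala$ by a finite-abstract-model compactness argument---matches the paper's plan, and your $\tau(\cdot)$, configuration support $\Phi_A$, and equidistribution input have direct analogues (the big-picture function, Definition~\ref{def:cons}, and Theorem~\ref{thm:density}). But there is a genuine gap at the regularity step, and it is precisely the gap the paper spends most of its technical effort closing. You assume a decomposition $f_a = \E[f_a\mid\calb] + E_a$ with $\|E_a\|_{U^p}$ ``as small as we like (and, crucially, allowed to depend on $C$).'' This is not achievable. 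The strongest decomposition of this form (Theorem~\ref{thm:strongdecomp}) gives a \emph{three-term} split $f_a = \E[f_a\mid\calb] + f_{2,a} + f_{3,a}$ where $\|f_{2,a}\|_{U^{d+1}}$ can indeed be driven below any prescribed function of $|\calb|$, but the last piece is only guaranteed $\|f_{3,a}\|_2 \le \delta$ for a \emph{fixed} $\delta$ independent of $|\calb|$; as a consequence $\|f_{3,a}\|_{U^{d+1}}$ is also only a fixed constant. In your counting step, the main term you are trying to detect---a popular pattern---has size roughly $p^{-s}\delta_1^m$ with $s$ potentially as large as $Cm$, which is exponentially small in $C$, whereas your error term is a fixed constant. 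So the contradiction in your removal-lemma proof does not close once $C$ is large, and the paper explicitly records the impossibility of what you are assuming (``such a demand on $f_3$ is clearly not possible'').

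The paper's fix is its main technical novelty, which your proposal omits: the Super Decomposition theorem (Theorem~\ref{thm:superdecomp}) together with Subcell Selection (Theorem~\ref{thm:subatom2}). One produces a \emph{pair} of factors $\calb' \preceq_{syn} \calb$ with $\|f_2\|_{U^{d+1}}$ small as a function of $|\calb'|$ and $\|f_3\|_2$ small as a function of the \emph{coarse} size $|\calb|$, and then selects a single subcell $(c,s)$ of $\calb'$ inside each cell $c$ of $\calb$ on which the conditional $L^2$ norm of $f_3$ is small. The counting then restricts to these selected subcells, and the $f_3$ contribution is controlled by a Cauchy--Schwarz plus equidistribution calculation (equations (\ref{eqn:obj6})--(\ref{eqn:obj8})), \emph{not} by a Gowers-norm bound on $f_3$; the subcell-restricted $L^2$ bound is exactly what lets the error beat the $p^{-q}$ main term. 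Making this work forces careful bookkeeping of syntactic versus semantic refinements throughout Sections~\ref{sec:tools}--\ref{sec:decomp}, which is not ``quantifier order'' as you characterize your secondary obstacle but a genuine structural constraint. Once the decomposition issue is fixed, the remaining difference between your argument-by-contradiction (construct an $\eps$-close $\cala$-free $g$) and the paper's direct count via the cleanup $F$ is cosmetic.
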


The function $q_\cala$ has a rather horrible, Ackermann function-like,
dependence on $1/\eps$. Our primary concern in this work though is to
establish testability, and we make no effort in improving the growth
of $q_\cala$. We note though that recent work by Kalyanasundaram and
Shapira \cite{KS11} and by Conlon and Fox \cite{CF11}, building on
previous work by Gowers \cite{Gow97}, suggests that the very rapid growth
of the query complexity function is in fact inherent in the nature of the problem.

Let us lastly note that Theorem \ref{thm:main} is quite nontrivial
even when the collection $\cala$ is finite. Indeed, even if $\cala$
consists only of a single induced affine constraint of complexity
greater than $1$, it was not known
previously how to show testability. We give more details about past
work in Section \ref{sec:past}.

\subsection{Overview of the Proof}

To show Theorem \ref{thm:main}, we will in fact show the following statement.
Note that it uses a yet undefined notion of ``conciseness''; for now it suffices
to know that every $\cala$ is equivalent to a concise one, as we will later prove.

\begin{theorem}\label{thm:main2}
Suppose we are given a possibly infinite collection of labeled affine
constraints $\mathcal{A} = \{(A^1, \sigma^1), (A^2, \sigma^2),$
$\dots,$ $(A^i,\sigma^i),$ $\dots\}$ where $\cala$ is concise, every
$A^i$ is of complexity less than $p$ and consists of $m_i$ linear
forms over $\ell_i$ variables, and $\sigma^i \in [R]^{m_i}$ for every $i$.
Then,  there are functions $\ell_\cala(\cdot)$ and
$\delta_\cala(\cdot)$ such that the following is true for any $\eps
\in (0,1)$.  If a function $f: \F_p^n \to [R]$ with  is $\eps$-far from being $\cala$-free,
then $f$ induces at least $\delta_{\cala}(\eps) \cdot p^{n\ell_i}$ many copies of $(A^i,\sigma^i)$ for
some $i$ such that $\ell_i < \ell_\cala(\eps)$.
\end{theorem}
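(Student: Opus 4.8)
The statement is an arithmetic removal lemma: being $\eps$-far from $\cala$-freeness forces many induced copies of some constraint of bounded arity. I would prove it by the now-standard higher-order Fourier analysis / regularity paradigm, adapted to the affine-invariant setting, using the complexity-$<p$ hypothesis crucially.

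First I would invoke a higher-order Fourier decomposition for $f : \F_p^n \to [R]$. For the relevant level, $f$ (or more precisely, for each $r \in [R]$, its indicator $\indic_{f = r}$) decomposes as $f = f_{\mathrm{str}} + f_{\mathrm{err}} + f_{\mathrm{unif}}$, where $f_{\mathrm{str}}$ is a function of boundedly many degree-$<p$ nonclassical polynomials $P_1, \dots, P_C$ forming a high-rank polynomial factor $\calb$, $f_{\mathrm{err}}$ is small in $L^1$, and $f_{\mathrm{unif}}$ has negligible Gowers $U^d$-norm for $d$ chosen large enough relative to the complexities $\le p-1$ of the constraints. The rank (regularity) of $\calb$ is taken much larger than anything depending on $\eps$ and on the number of atoms — this is the analog of graph regularity, and it is exactly here that the complexity-$<p$ restriction is used, since for such low-complexity constraints the counting operator is controlled by a single low-degree Gowers norm (by a Cauchy–Schwarz argument à la Green–Tao), so non-classical polynomials of degree $<p$ suffice and no inverse theorem beyond that range is needed.

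Next comes the counting step. For each constraint $(A^i, \sigma^i)$ of complexity $<p$, the number of induced copies, normalized by $p^{n\ell_i}$, equals $\E_{x_1,\dots,x_{\ell_i}} \prod_{j=1}^{m_i} \indic_{f(a_j(x)) = \sigma^i_j}$. Substituting the decomposition and expanding, the contribution of any term involving $f_{\mathrm{unif}}$ is negligible because the linear forms of $A^i$ have Cauchy–Schwarz complexity $<p$ (so the relevant Gowers norm controlling that term has order $\le p$, which is exactly the level at which $f_{\mathrm{unif}}$ is small); the contribution of $f_{\mathrm{err}}$ terms is $O(\|f_{\mathrm{err}}\|_1)$ and hence tiny; and the main term depends only on the distribution of the tuple $(P_k(a_j(x)))_{k,j}$ over random $x_1,\dots,x_{\ell_i}$. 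By the equidistribution theory for high-rank polynomial factors (the near-orthogonality/joint-distribution lemma for systems of linear forms of complexity $<p$), this joint distribution is $\approx$ uniform over its "natural" support, determined by the linear-algebraic relations among the forms $a_j$. Consequently the induced-copy count for $\calb$'s structured part is within, say, $\gamma$ of the count one gets for the "idealized" function $\tilde f$ that is constant on each atom of $\calb$ and equals the majority value there.

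Finally I would run the contrapositive / "cleaning" argument. Suppose $f$ induces $o(p^{n\ell_i})$ copies of $(A^i,\sigma^i)$ for every $i$ with bounded arity. By the counting step, the idealized $\tilde f$ — a function of the bounded factor $\calb$, hence affine-invariant-ishly simple — induces very few copies of each such constraint; since $\tilde f$ takes only finitely many "patterns" (it is determined by data of bounded size, independent of $n$, because $\calb$ has boundedly many defining polynomials of bounded degree), a compactness / finite-case-analysis argument shows that for the finitely many low-arity constraints $\tilde f$ must in fact be genuinely $\cala$-free after discarding a tiny fraction of atoms, and then it can be modified on those atoms to be exactly $\cala$-free. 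Here one must use conciseness of $\cala$: it guarantees that once the low-arity constraints are (approximately) avoided, no hidden high-arity constraint is violated "for free," so avoidance of the bounded-arity part propagates to $\cala$-freeness. This yields a function $g \in \cala$-free with $\Pr[f \ne g] \le \|f - f_{\mathrm{str}}\|_1 + \|f_{\mathrm{err}}\|_1 + (\text{fraction of discarded atoms}) \le \eps$, contradicting that $f$ is $\eps$-far. Tracking the quantifiers gives the promised $\ell_\cala(\eps)$ and $\delta_\cala(\eps)$.

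The main obstacle I expect is the counting step combined with making the "idealized function is essentially $\cala$-free" argument rigorous: one needs the equidistribution of polynomial factors along systems of linear forms of complexity $<p$ in a form strong enough to transfer, atom by atom, the (non-)existence of induced patterns between $f$ and $\tilde f$, and one needs conciseness to rule out that repairing the finitely many bounded-arity constraints inadvertently reintroduces an unbounded-arity one. Getting the order of quantifiers right — choosing the degree $d$ of the factor from the constraints, then the rank from $d$ and the target error, then $n$ large — is where the Ackermann-type dependence enters and where the bookkeeping is delicate.
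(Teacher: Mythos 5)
Your overall plan --- decompose, count, clean --- is the right high-level paradigm and matches the paper's, but it misses the central technical obstacle that the paper's new ``super decomposition'' (Theorem~\ref{thm:superdecomp} / Corollary~\ref{cor:subatom}) was built to overcome. In a single decomposition $f = f_{\mathrm{str}} + f_{\mathrm{err}} + f_{\mathrm{unif}}$, the $L^1$ (or $L^2$) norm of $f_{\mathrm{err}}$ can be made smaller than any fixed $\delta>0$, but it \emph{cannot} be made small as a function of the complexity $|\calb|$ of the factor, because $|\calb|$ itself blows up as $\delta \to 0$. Consequently a constant fraction of the cells of $\calb$ may have $f$ very poorly approximated by $f_{\mathrm{str}}$, and for a non-monotone property you cannot afford this: one must argue \emph{cell by cell} that values witnessed in the approximation are actually attained by $f$ at density $\gtrsim(\eps/R)$, not merely ``on average.'' The paper resolves this by producing two nested factors $\calb \preceq_{\mathrm{sem}}\calb'$ and a single subcell index $s$ such that the error term is $<\delta(|\calb|)$ in \emph{every} selected subcell $(c,s)$; the cleanup and the counting are then confined to these subcells. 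Your proposal has no mechanism to achieve uniform control over all cells, which is exactly where the paper says a naive decomposition fails.

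Two further issues. First, the claimed step ``the induced-copy count for $\calb$'s structured part is within $\gamma$ of the count for the idealized $\tilde f$ taking the majority value'' is false: $f_{\mathrm{str}}$ gives \emph{conditional expectations} on cells, and e.g.\ with $R=2$ and $f$ close to $50/50$ in every cell the structured count behaves like $2^{-m}$ while $\tilde f$'s count is $0$ or $1$ per consistent cell tuple. The paper never equates these; it retains the conditional expectations and derives a lower bound $(\eps/8R)^m$ from the cleanup guarantee. Second, running the argument in the contrapositive --- showing that a nearby $\tilde f$ can be \emph{modified} to be exactly $\cala$-free --- is precisely the step that is hard for non-monotone properties (patching one violation can create another), and you give no mechanism for it. The paper sidesteps this by going in the direct direction: the cleanup $F$ is $\eps/2$-close to the $\eps$-far $f$, hence still violates some constraint, and then the count is bounded \emph{below} in the original $f$. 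Finally, conciseness is used in the paper only to guarantee $\ell_i \le m_i$ so that the compactness bound on $m_i$ transfers to $\ell_i$; it is not what prevents ``high-arity constraints from reappearing.''
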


Theorem \ref{thm:main} immediately follows. Consider the following test: choose
uniformly at random $x_1,\dots,$ $x_{\ell_\cala(\eps)}$ $\in$ $\F_p^n$, let $H$
denote the affine space $\{x_1+\sum_{j=2}^{\ell_\cala(\eps)} c_j x_j : c_j
\in \F_p\}$, and check whether $f$ restricted to $H$ is $\cala$-free
or not. By Theorem \ref{thm:main2}, if $f$ is $\eps$-far from
$\cala$-freeness, then this test rejects with probability at least
$\delta_\cala(\eps)$. Repeating the test $O(1/\delta_\cala(\eps))$
times then guarantees a constant rejection probability. And of course,
if $f$ is $\cala$-free, the test always accepts.

Let us now give an overview of our proof of Theorem \ref{thm:main2}. For simplicity of
exposition, assume for now that $\cala$ consists only of a single
induced affine constraint $(A,\sigma)$ where $A$ is the tuple of
linear forms $(a_1,\dots,a_m)$, each over $\ell$ variables, and
$\sigma \in [R]^m$. For $i \in [R]$, let $f^{(i)}:\F_p^n \to \zo$ be
the indicator function for the set $f^{-1}(\{i\})$. Our goal will then be
to show that, when $f$ is $\eps$-far from $(A,\sigma)$-free, then:
\begin{equation}\label{eqn:avg}
\E_{x_1,\dots,x_\ell}\left[f^{(\sigma_1)}(a_1(x_1,\dots,x_\ell)) \cdot
f^{(\sigma_2)}(a_2(x_1,\dots,x_\ell)) \cdots
f^{(\sigma_m)}(a_m(x_1,\dots,x_\ell))  \right] \geq \delta(\epsilon),
\end{equation}
where crucially, $\delta$ is a positive function that does not depend
on $n$. If we could show this, then we would be done since a valid
test would be to repeat the following procedure $O(1/\delta)$ times:
uniformly pick $x_1,\dots,x_\ell \in \F_p^n$ and immediately reject if
$(f(a_1(x_1,\dots,x_\ell)), \dots, f(a_m(x_1,\dots,x_\ell))) =
\sigma$.

Studying averages of products, as in (\ref{eqn:avg}), has been crucial
to a wide range of problems in additive combinatorics and analytic
number theory. Szemer\'edi's theorem about the density of arithmetic
progressions in subsets of the integers is a classic
example. Szemer\'edi's work \cite{Szem75} arguably initiated such
questions in additive combinatorics, but the major development which
led to a more systematic understanding of these averages was Gowers'
definition of a new notion of uniformity in a Fourier-analytic proof
for Szemer\'edi's theorem \cite{Gow01}. In particular, Gowers introduced
the {\em Gowers norm} $\|\cdot\|_{U^d}$ for a parameter $d \geq 1$,
which allows us to say the following about (\ref{eqn:avg}). If
$\|f_1\|_{U^{d+1}} < \eps$ (for some $d$), $f_2,\dots,f_m$ are arbitrary functions
that are bounded inside $[-1,1]$, and $L_1,\dots,L_m$ are arbitrary
linear forms, then 
$\E_{x_1,\dots,x_\ell \in \F_p^n} \left[ \prod_{i=1}^m f_i(L_i(x_1,\dots,x_\ell))\right]$
is at most $\eps$.

This observation leads to the study of {\em decomposition theorems},
that express an arbitrary function as a linear combination of
functions which have either small Gowers norm or are structured in
some sense. This is an extension of classical Fourier analysis over
$\F_p^n$, where a function is expressed as a linear combination of a
small number of characters with high Fourier mass plus a small error
term. To deal with Gowers norm, the ``characters'' need to be
exponentials of not only linear functions, as in classical Fourier
analysis, but of higher degree
polynomials. Approximate orthogonality among these ``characters'' was
established by Green and Tao in \cite{GT07} and by Kaufman and Lovett
in \cite{KL08}. At this stage, one might expect that results by Hatami and Lovett
\cite{HL11,HL11b} can allow us to use orthogonality to approximate the
expectation of the form in (\ref{eqn:avg}).

Unfortunately, the proof does not follow that easily from \cite{HL11}.  There are two
main reasons for this. The first is that the only information we have about
the original function $f$ is  $\eps$-farness from
$(A,\sigma)$-freeness. Information about correlation, as was assumed
in \cite{HL11}, allows more straightforward application of the
higher-order Fourier analytic tools. We use ideas inspired by previous work
on property testing in the dense model, as in \cite{AFKS} and
\cite{AS08a}, to locate regions of the domain in which we are
guaranteed to find at least one induced occurrence of
$(A,\sigma)$. This leads to a new combinatorially flavored
decomposition theorem (Theorem \ref{thm:subatom2}), which may be of
independent interest. 

The second problem we face is one which also arose in a work by
Green and Tao on decomposition theorems (a.k.a., regularity lemmas)
over the integers \cite{GT10}. Namely, the decomposition theorem we
use decomposes an arbitrary function $f: \F_p^n \to \R$ to a sum of
three functions $f_1, f_2, f_3$. $f_1$ consists of the approximate
``characters'' as mentioned above, $f_2$ has small Gowers norm, and
$f_3$ has low $L^2$-norm. Now, the closeness to
orthogonality for $f_1$ and the smallness of the Gowers norm for $f_2$
decreases as a function of the ``complexity'' of the decomposition,
and are thus, essentially negligible for the purposes of the proof. On
the other hand, the bound on the $L^2$-norm for $f_3$ is only
moderately small and cannot be made to decrease as a function of the complexity
of the decomposition. To get around we essentially use a sequence of
two decompositions, and make the norm of the second one
decrease as a function of the complexity of the first,
where we show that this is enough for our purposes.

\subsection{Previous Work}\label{sec:past}
This work is part of a sequence of works investigating the
relationship between invariance and testability of properties. As
described, Kaufman and Sudan \cite{KS08} initiated the
program. Subsequently, Bhattacharyya, Chen, Sudan and Xie \cite{BCSX09}
investigated {\em monotone} linear-invariant properties of functions
$f:\F_2^n \to \zo$, where a property $\calp$ is monotone if it
satisfies the condition that for any function $g \in \calp$, modifying
$g$ by changing some outputs from $1$ to $0$ does not make it violate
$\calp$. Kr\'al, Serra and Vena \cite{KSV12} and, independently,
Shapira \cite{Sha09} showed testability for any monotone linear-invariant
property characterized by a finite number of linear constraints
(of arbitrary complexity).

Progress has been significantly slower for the non-monotone
properties. Bhattacharyya, Grigorescu, and Shapira proved in
\cite{BGS10} that linear-invariant properties of functions in
$\{\F_2^n \to \zo\}$ are testable if the complexity of the property is
$1$. When restricted to affine-invariant properties, the result of
\cite{BGS10} is a special case of the main result here for $p=2$.
The previous works did not explicitly use higher-order Fourier
analysis; \cite{KSV12} and \cite{Sha09} used variants of the
hypergraph regularity lemma which are similar in spirit to higher-order
Fourier analysis, but are somewhat harder to manipulate due to the lack
of analytic tools.

Higher-order Fourier analysis began with the work of Gowers
\cite{Gow98} and parallel ergodic-theoretic work by Host and Kra
\cite{HK05}. Applications to analytic number theory inspired much more
study by Gowers, Green, Tao, Wolf, and Ziegler among others. A book in
preparation by Tao \cite{Tao11} surveys the current theory of
higher-order Fourier analysis. Our work in this paper relies on
decomposition theorems over finite fields of the type first explicitly
described by Green in \cite{Gre07}. We also heavily use decomposition
results by Hatami and Lovett \cite{HL11}, as described in the
previous section. 

At a high level, the argument to prove our main theorem mirrors ideas
used in a sequence of works \cite{AFKS, AS08a, AS08, graphestim,
  AFNS06, BCLSSV06} to characterize the testable graph properties.  In
particular, the technique of simultaneously decomposing the domain
into a coarse partition and a fine partition with very strong
regularity properties is due to \cite{AFKS}, and the compactness
argument used to handle infinitely many constraints is due to
\cite{AS08a}. However, implementing these graph-theoretic techniques
using higher-order Fourier analysis required several new ideas which, we hope,
can be extended to eventually prove Conjecture \ref{conj:main2}.

\subsection{Further research}
We study affine subspace hereditary properties, and show that if they are defined
by affine constraints of low complexity then they are locally testable. There are several
obvious possible generalizations to this work:
\begin{enumerate}
\item Remove the condition that the field size is larger than the complexity of the affine forms,
thus proving Conjecture \ref{conj:main2}; this requires
non-trivial generalizations of several technical lemmas to small fields, and may require new methods.
\item Handle all linear invariant properties (and not just affine invariant properties).
\end{enumerate}
A third generalization, which might be too strong to hold, is to completely remove the bounded complexity assumption on the linear forms, thus proving Conjecture \ref{conj:main}. In several analogs of this line of research in hypergraph testing, this requirement is analogous to requiring bounded uniformity from the hypergraphs, which is implicitly assumed in all previous works on hypergraph testing. It would be thus also be interesting if the full Conjecture \ref{conj:main} can be {\em disproved}.

\ignore{
\subsection{Organization}

In this extended abstract, we describe our argument at an informal
level, highlighting the new tools we develop and the proof techniques
we employ. The formal proof is deferred to the appendix. 
}

%\subsection{Organization}
%The proof of our main result requires some technical preparation. In
%Section \ref{sec:decomp}, we describe useful arithmetic decomposition
%theorems and prove extensions that are helpful for proving
%testability. In Section \ref{sec:counting}, we show that we can
%accurately count the number of linear structures localized to
%particular cells of the decomposition. Finally, in Section
%\ref{sec:proof}, we complete the proof of  Theorem \ref{thm:main}.

\section{Map of the proof}\label{sec:map}

The rest of this section will be devoted to an informal description of
the building
blocks required to prove Theorem \ref{thm:main}, and by extension
Theorem \ref{thm:main2}. We believe that some of these building
blocks, and especially the ``Super Decomposition'' Theorem
\ref{thm:superdecomp} that we describe below, will be of independent
interest. 

In Section \ref{sec:tools} and Section \ref{sec:decomp}, we develop the main technical tools that we will need for our testability proof. Some of the following lemmas and arguments were proved before: Decomposition lemmas (without rank) were implicit in previous works by Green and Tao and explicit in \cite{HL11}; the existence of a refinement of a given rank was first proved in \cite{GT07} (which is combined here with a decomposition lemma); other prior works are cited along with the proofs below.

Our new contributions there lie in the following:
\begin{itemize}
\item Our final ``Super Decomposition'' Theorem \ref{thm:superdecomp}, and its related ``Subcell Selection'' Corollary \ref{cor:subatom}, are new. Their relation to the original decomposition lemma could be thought of as somewhat akin to the relation of the strong graph regularity lemma in \cite{AFKS} to the original regularity lemma of Szemer\'edi.
\item For the subcell selection corollary to work at all, we need to take careful count of when is a refinement of a partition by polynomials syntactic (i.e.\ there is a containment relationship between the polynomials defining the two partitions) or merely semantic (i.e.\ the polynomials may be different but the partitions they define satisfy a combinatorial refinement relationship). We add the accounting of syntactical vs semantic refinements to all the arguments leading up to our super decomposition theorem.
\item We set the entire analysis in a ``robustness'' framework akin to the one developed for graphs in \cite{graphestim}. This streamlines the argument (essentially allowing us to encapsulate and move away iterative refinement arguments), which could get very unwieldy by the time the super decomposition theorem is reached.
\end{itemize}

In Section \ref{sec:count}, we then develop algebraic and combinatorial constructions, that allow us to {\em use} Corollary \ref{cor:subatom} to provide counting type theorems, and in our case the main ``algebro-combinatorial'' Theorem \ref{thm:main2}. The algebraic part mostly involve procedures that calculate the numbers of affine configuration of a given type that satisfy given polynomial constraints; we also prove, using basic algebra, that we can assume the technical condition of $\cala$ being ``concise'', that is not having more variables than conditions in any of its constraints. The combinatorial part is the ``cleanup'' procedure that we describe below.

We now describe the main components of our proofs in detail.

\subsection{Partition by Polynomial Factors}

We generally deal with a function $f:\F_p^n\to \{0,1\}$ (where a larger fixed size range $[R]$ is handled by considering a sequence of functions rather than one function -- see Subsection \ref{subsec:multi}), and would like to partition its domain $\F_p$ into a small number of regions, so that $f$ has certain ``randomness'' properties in every region (or at least most of them). In the broadest terms, we seek algebraic analogs to Szemer\'edi's regularity lemma and its derivatives that have revolutionized graph theory. Recall that Szemer\'edi's lemma partitions the vertex set of the graph so that most vertex set pairs exhibit random-like properties in the bipartite subgraphs that they induce.

The groundwork providing this started with the works of Green and Tao. In general, a function $f:\F_p^n\to\{0,1\}$ can be decomposed to a sum of three real-valued functions. One that is constant on large regions of the input, one that generally takes small values (in terms of its $l_2$ norm), and one that is ``very random'' (in the sense of the Gowers norm). The relevance of the Gowers norm to our arguments is highlighted in Subsection \ref{subsec:norms}.

In an ideal world, the large regions of the input over which we have a constant function should come from a partition of $\F_p^n$ into affine subspaces, but in fact this cannot be the case. The next best thing is to have a partition based on the values of a fixed length sequence of low degree polynomials over $\F_p^n$. These are called {\em polynomial factors} as per Definition \ref{def:factor}, and the regions of $\F_p^n$ of their respective partitions are called {\em cells}.

However, now we need to re-address the question of independence. Standard linear independence would be insufficient to even guarantee that all regions are of similar sizes, let alone provide other ``randomness'' features. For this we use the notion of polynomial {\em rank}, first developed in \cite{GT07}. Subsection \ref{subsec:poly} provides the details about polynomial factors and their rank.

\subsection{Refinements and the Robustness Framework}

For our purpose it is not enough to prove the existence of certain factors, and we will consider a relationships between pairs of factors, namely the {\em refinement} relationship. There are two kinds of refinements. The ``combinatorial'' {\em semantic} refinement notion means that the partition induced by the second factor consists of subsets of the sets of the first factor, while the ``explicit'' {\em syntactic} refinement notion means that the second factor is in fact defined by a sequence of polynomials extending the sequence that defines the first factor. Definition \ref{def:refine} provides the details.

An important measure of a factor with respect to a function $f:\F_p^n\to\{0,1\}$ is its {\em density index}, as per Definition \ref{def:indd}. This was used in previous decomposition proofs, and is analogous to the index of a graph partition used in the proof of Szemer\'edi's regularity lemma and its variants. In Subsection \ref{subsec:refrob} we introduce and analyze the framework of factor {\em robustness}, where a factor is considered robust if it cannot be refined (with respect to a size bound given as a function of the current size) in a way that significantly increases its index. Robust factors, including ones that refine existing factors, exist by a simple argument, Observation \ref{obs:rob}.

The robustness framework greatly simplifies the arguments used to prove the decomposition theorems in Section \ref{sec:decomp}. Where previously such proofs used an iterative argument, basically repeating a construction of a refining factor as long as the factor does not provide the required properties, in the proofs here we start with a robust factor and then show that it provides the required object.

However, we need a factor to be both robust and of high rank. The high rank requirement (also as a function of the factor size) is in fact also provided through an iterative argument resembling the proof of regularity. In Subsection \ref{subsec:robrank} we integrate arguments similar to those originally made in \cite{GT07} to provide Lemma \ref{lem:rankrob}, the driving engine of our decomposition theorems. This lemma provides factor that is both robust and of high rank. Moreover, if we start from an existing factor that is a syntactic refinement of a base factor that also has high rank, then our new robust factor will additionally be a syntactic refinement of the same base factor. This is crucial to our super decomposition theorem, that requires such a refinement to be provided.

\subsection{Decompositions and Super Decompositions}

Chronologically, decomposition theorems for functions $f:\F_p^n\to\{0,1\}$ have progressed in stages. First a weak decomposition theorem was shown, where a factor is found and $f$ is decomposed into a sum of two functions, $f=f_1+f_2$, where $f_1:\F_n^p\to [0,1]$ is constant over every cell of the factor, and $f_2:\F_p^n\to [-1,1]$ has a bounded Gowers norm. In an ideal world we would like $f_2$ to have a bounded $l_2$ norm, as it denotes an ``error'' of some kind, but this is not possible.

However, for the Gowers norm bound to be of any use, it has to be bounded as a decreasing function of the factor size $C$. The next step was then to find a factor and a decomposition $f=f_1+f_2+f_3$, where $f_3$ is an ``error'' term that is of bounded $l_2$ norm (as we originally intended), and $f_2$ now has a Gowers norm that is smaller than the required function of $C$. The proof ``internally'' uses a sequence of two factors, one refining the other, and a corresponding ``iterated argument of iterated arguments''. However here we can encapsulate it through a robustness requirement. We provide the full details in Subsection \ref{subsec:strong}, which culminates in Theorem \ref{thm:strongdecomp}, providing also a rank requirement. It is similar to theorems proved in  previous works, but here we also maintain a syntactic refinement relationship to a base factor, a feature that will be used later.

This brings us to our new super decomposition Theorem \ref{thm:superdecomp}. Its motivation is that for our purpose, we would also need the $l_2$ norm of the error function $f_3$ to decrease as a function of the factor size. This is required because for our analysis of non-monotone properties, we cannot make do with most of the cells of the factor exhibiting a random-like behavior of $f$ -- we would like {\em all} of them to exhibit it. However, such a demand on $f_3$ is clearly not possible.

The solution is then to provide a sequence of two factors, where the second factor is a syntactic refinement of the first. We then decompose $f$ with respect to the second factor, as a sum of a constant-over-cells function $f_1$, a small Gowers norm function $f_2$, and a function $f_3$ whose $l_2$ norm is not small as a function of the second factor, but at least it is small as a function of the first factor. Additionally, we want $f_1$ to be ``faithful'' also with respect to the first factor: That is, if we had decomposed $f$ according to the first factor rather than the second, then the corresponding ``$f_1$ function'' would still be close in most places to the function we got by decomposing according to the second factor.

In the next step of the proof of our main testability theorem, we will pick one ``subcell'', a cell of the second factor, out of every cell of the first factor. We will want most of these cells to be faithful (with respect to $f_1$) and all of them to exhibit the randomness properties. The syntactic refinement relationship in our super decomposition theorem is what allows us to pick these cells in a ``uniform'' manner, as per our subcell selection Corollary \ref{cor:subatom}.

We believe that Theorem \ref{thm:superdecomp} and its proof methods are of independent interest, as they could open up possibilities for more analogies to the big body of knowledge concerning the applications of Szemer\'edi's lemma and its variants for graphs.

\subsection{Function Cleanup}

To find many induced structures in $f$, we restrict ourselves to the ``good'' subcells chosen by use of Corollary \ref{cor:subatom}. However, to find the correct configuration of subcells exhibiting the induced structures, we refer to a modification of $f$ called a {\em cleanup}. The modified $f$ will be close to the original, and hence will still contain an induced structure. This particular structure might not exist in the original $f$, but the way the cleanup is performed, as per Definition \ref{def:clean}, ensures the existence of the corresponding subcell configuration which ``mimics'' the location of the points of the structure (even that it may not actually contain those points). We then use the configuration of subcells with respect to the original $f$ to find our affine structures.

This argument is in fact somewhat analogous to the argument considering forbidden induced subgraphs that appeared first in \cite{AFKS}. The function closeness lemma is Lemma \ref{lem:cleanup}, while the mimicking subcell argument is found in the proof of Theorem \ref{thm:main2} in Subsection \ref{subsec:proof}.

\subsection{Randomness and consistency}

After we find the subcell configuration corresponding to an affine induced structure, we still need to lower-bound the number of actual copies of the structure that it guarantees for $f$. This requires giving a lower bound for the number of actual small affine sets that reside in this configuration, and within them the number of sets for which $f$ has the corresponding values. The second task is in fact accomplished by the function decomposition that we have. For the first task, we build upon works of Hatami and Lovett \cite{HL11b} and of Gowers and Wolf \cite{gowers-wolf-1,gowers-wolf-2} in Subsection \ref{subsec:counting}.

We use there the notion of {\em consistent values}, Definition \ref{def:cons}, as an algebraic characterization of when is a configuration of cells feasible for a given affine structure. This allows us to regulate ``all-or-nothing'' lemmas from previous works in Theorem \ref{thm:density}, to provide a calculated bound for the number of structures. We also utilize it for Lemma \ref{lem:present}, showing that the subcell selection process does not ``spoil'' a good configuration.

\subsection{Wrapping Up}

There are some final ingredients that we need before finalizing the proof of Theorem \ref{thm:main2}. One of which is a compactness argument, analogous to the one made in \cite{AS08}, to be able to bound the size of the constraints we need to test for, even when the property is defined by an infinite number of constraints. In our case, we also need to perform a slight ``preprocessing'' to representation of the property, to make it {\em concise} as per Definition \ref{def:concise}, which is done through Lemma \ref{lem:conc}. Apart from this, Subsection \ref{subsec:more} contains a few other algebraic tools that help with the calculations used in the proof.

Finally, Subsection \ref{subsec:proof} contains the proof of Theorem \ref{thm:main2}, tying it all together, from finding a factor with a subcell selection, through consistency and randomness arguments, to finally using the function cleanup to bound from below the number of copies of the corresponding induced structure.

\section{Tools of the Proof}\label{sec:tools}

In this section we lay the groundwork for the decomposition theorems that follow. This include the formal definition of partition by polynomial factors, the definition of factor robustness and rank with proofs of their impact, and finally we prove the main lemma about the existence of partitions that are both robust and of high rank.

\subsection{Functions and Norms}\label{subsec:norms}

In the most general setting we consider functions $f: G \to \C$, where $G$ is a finite Abelian group\footnote{Later we would mostly consider $G=\F_p^n$. Our main theorem is formulated for functions whose range is $\{0,1\}$, but its proof uses interim function with larger ranges.}.

Unless stated otherwise, expectations are taken over the uniform probability space with respect to the relevant range, e.g.\ $\E_x[f(x)]$ is set to $|G|^{-1}\sum_{x\in G}f(x)$. Apart from the traditional norms such as $\|f\|_2^2=\E_x[|f(x)|^2]$, we will make extensive use of Gowers norms.

\begin{definition}[Gowers norm]
Let $G$ be a finite Abelian group and $f: G \to \C$. For an integer $k
\geq 1$, the $k$'th Gowers norm of $f$, denoted $\|f\|_{U^k}$, is
defined by:
\begin{equation*}
\|f\|_{U^k}^{2^k} = \E_{x, y_1, y_2, \dots, y_k \in G}\left[\prod_{S \subseteq [k]}\mathcal{C}^{k-|S|}
  f\left(x + \sum_{i \in S} y_i\right)\right]
\end{equation*}
where $\mathcal{C}$ denotes the complex conjugation operator, i.e.\ $\mathcal{C}^l(a+bi)=a+(-1)^lbi$ for $a,b\in\R$ and integer $l$.
\end{definition}

Two facts about the Gowers norm will be absolutely crucial in what
follows. First is the Gowers Inverse theorem, established by
\cite{BTZ10, TZ}. Throughout, we let $\expo{x}$ denote the
complex number $e^{2\pi i x/p}$ for $x \in \F_p$.

\begin{theorem}[Gowers Inverse Theorem]\label{thm:git}
Given positive integers $d < p$,
for every $\delta > 0$, there exists $\eps =
\eps_{\ref{thm:git}}(\delta,p)$ such that if $f: \F_p^n \to \R$
satisfies $\|f\|_\infty\leq 1$ and $\|f\|_{U^{d+1}} \geq \delta$, then there
exists a polynomial $P: \F_p^n \to \F_p$ of degree at most $d$ so that
$|\E_x[f(x)\cdot \expo{P(x)}]| \geq \eps$.
\end{theorem}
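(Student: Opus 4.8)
The plan is to prove the inverse theorem by induction on the degree $d$, following the strategy of Bergelson--Tao--Ziegler \cite{BTZ10} and Tao--Ziegler \cite{TZ}, with $\eps$ depending on $\delta$, $d$, $p$ only and never on $n$. The base case $d=1$ is classical Fourier analysis over $\F_p^n$: expanding $\|f\|_{U^2}^4 = \sum_{\xi} |\widehat f(\xi)|^4$ and bounding this by $\max_\xi |\widehat f(\xi)|^2 \cdot \|f\|_2^2 \le \max_\xi |\widehat f(\xi)|^2$, so $\|f\|_{U^2} \ge \delta$ forces a single Fourier coefficient of magnitude at least $\delta^2$, i.e.\ correlation of $f$ with a linear phase $\expo{P}$ with $\deg P \le 1$.

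For the inductive step, assume the statement for degree $d-1$ over all primes. Using the identity $\|f\|_{U^{d+1}}^{2^{d+1}} = \E_{h \in \F_p^n}\bigl[\|\Delta_h f\|_{U^d}^{2^d}\bigr]$, where $\Delta_h f(x) = f(x+h)\overline{f(x)}$ is the multiplicative derivative, the hypothesis $\|f\|_{U^{d+1}} \ge \delta$ implies that for at least a $\delta' = \delta'(\delta,d,p)$ fraction of shifts $h$ one has $\|\Delta_h f\|_{U^d} \ge \delta'$. By the inductive hypothesis, for each such $h$ there is a polynomial $P_h : \F_p^n \to \F_p$ of degree $\le d-1$ with $|\E_x[\Delta_h f(x)\, \expo{P_h(x)}]| \ge \eps'$.

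The heart of the argument is to upgrade this family $\{P_h\}$ of approximate \emph{derivative polynomials} into a single genuine polynomial $P$ of degree $\le d$ whose derivatives agree with the $P_h$, so that $f$ correlates with $\expo{P}$. This proceeds in stages: (i) a Cauchy--Schwarz / additive-quadruple manipulation showing that for many quadruples with $h_1 + h_2 = h_3 + h_4$ the combination $P_{h_1} + P_{h_2} - P_{h_3} - P_{h_4}$ has large bias, hence by the equidistribution theory of Green--Tao \cite{GT07} and Kaufman--Lovett \cite{KL08} is a polynomial of controlled rank; (ii) using this "cocycle up to low-rank error" to linearize the dependence $h \mapsto P_h$ modulo corrections of lower degree and low rank, crucially invoking that in characteristic $p > d$ there are no nonclassical polynomials of degree $\le d$, so the linearized derivative system can be integrated over $\F_p$; (iii) integrating the derivatives, i.e.\ producing $P$ with $\Delta_h P \approx P_h$ for many $h$ and checking that $\expo{P}$ still correlates with $f$. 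I would carry all of this out in the polynomial-factor language already set up in this paper, regularizing the factors generated by the $P_h$ to high rank (while keeping the degree bounded) so that the equidistribution inputs apply cleanly.

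The main obstacle is stage (ii): the $P_h$ are only defined on a positive-density set of $h$, are determined only up to additive constants and lower-order terms, and the correlation bound does not compose multiplicatively, so extracting an honest symmetric multilinear-type form requires several rounds of Cauchy--Schwarz / van der Corput, each degrading the parameters, interleaved with algebraic regularization of the involved factors. Getting the bookkeeping right so that the final $\eps$ still depends only on $\delta, d, p$ — and verifying that the constructed $P$ genuinely has degree $\le d$ rather than being a nonclassical polynomial, which is exactly where $d<p$ enters — is where essentially all the technical weight of \cite{BTZ10, TZ} lies.
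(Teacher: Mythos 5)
The paper does not prove Theorem \ref{thm:git}; it invokes it as a black box, citing \cite{BTZ10, TZ} immediately before the statement, so there is no in-paper argument to compare against. The question is therefore only whether your sketch would constitute a proof on its own, and it would not. By your own account, ``essentially all the technical weight'' lives in the references, and your stage (ii) -- upgrading the family $\{P_h\}$ of derivative polynomials to a single genuine degree-$d$ polynomial $P$ -- is named rather than carried out. That stage is the entire content of the theorem: the $P_h$ exist only on a density set of $h$, are determined only modulo lower-order terms and additive constants, the correlation constants degrade under every Cauchy--Schwarz, and one must finally verify that the integrated object is a classical polynomial of degree at most $d$ (which is where $d<p$ enters). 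Listing these obstructions is not the same as overcoming them, so the inductive step is a genuine gap.

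Separately, the outline you give does not describe the proofs of \cite{BTZ10, TZ}, which proceed via an ergodic/infinitary route: a Host--Kra-type structure theory for measure-preserving $\F_p^\omega$-actions, transferred to the finitary setting through a Furstenberg-type correspondence principle, not a direct finitary symmetrization. The strategy you sketch -- the $U^{d+1}\to U^d$ recurrence, additive-quadruple and Cauchy--Schwarz manipulations, linearizing $h\mapsto P_h$, and integrating up -- is Gowers' original $U^3$ strategy and the template of more recent quantitative inverse theorems such as Gowers--Mili\'cevi\'c. That route is viable in high characteristic, but it is a different proof from the one the paper cites, and as presented it stops precisely where the work begins. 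If you want to pursue it, you should cite and actually follow the works that execute the symmetry argument rather than the ergodic-theoretic references.
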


The second is a lemma due to Green and Tao \cite{GT06} based on
repeated applications of the Cauchy-Schwarz inequality. Refer to
Definition \ref{def:cplx} for the term ``complexity''.
\begin{lemma}\label{lem:cnt}
Let $f_1, \dots, f_m : \F_p^n \to [-1,1]$. Let $\mathcal{L} =
\{L_1,\dots,L_m\}$ be a system of $m$ linear forms in $\ell$ variables
of complexity $s$. Then:
$$\left|\E_{x_1,\dots,x_\ell \in \F_p^n} \left[ \prod_{i=1}^m
    f_i(L_i(x_1,\dots,x_\ell))\right] \right| \leq \min_{i \in [m]}\|f_i\|_{U^{s+1}}$$
\end{lemma}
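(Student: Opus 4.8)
The plan is to fix the index $i \in [m]$ achieving the minimum on the right-hand side (in fact one proves the bound by $\|f_i\|_{U^{s+1}}$ for \emph{every} $i$, and then takes the minimum at the end), and to bound the average by $\|f_i\|_{U^{s+1}}$ through $s+1$ successive applications of the Cauchy--Schwarz inequality, one for each block of the partition guaranteed by the complexity-$s$ hypothesis. By the definition of complexity, the forms $\{L_j\}_{j\ne i}$ split into $s+1$ classes $\mathcal{L}^{(1)},\dots,\mathcal{L}^{(s+1)}$ such that $L_i$ lies in the span of no single class. The key linear-algebra fact to extract first is: if $L_i$ is not in the span of a set of forms $\{L_j\}_{j\in J}$ (all forms in $\ell$ variables over $\F_p$), then after an invertible linear change of the variables $x_1,\dots,x_\ell$ one may assume $L_i = x_1$ while every $L_j$ with $j\in J$ depends only on $x_2,\dots,x_\ell$. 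This is the device that lets one ``freeze'' a block of forms while doubling the $x_1$-variable.

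The main induction is then the following. Write $\Lambda = \E_{x_1,\dots,x_\ell}\big[\prod_{j=1}^m f_j(L_j(x))\big]$. Process the classes one at a time. At stage $t$, after a suitable invertible change of variables, the forms in class $\mathcal{L}^{(t)}$ (together with any forms already ``stripped'') do not involve a designated variable, whereas $L_i$ does; applying Cauchy--Schwarz in that designated variable doubles it (introducing a new variable and a shift) and replaces the product of the functions attached to class $\mathcal{L}^{(t)}$ by $1$ in absolute value, while the remaining functions get replaced by products of two shifted copies, and $f_i$ gets replaced by a ``derivative'' $f_i(x_1)\overline{f_i(x_1')}$-type expression along the new direction. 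After running through all $s+1$ classes we are left with $|\Lambda|^{2^{s+1}}$ bounded by an average over $x$ and $s+1$ shift parameters $y_1,\dots,y_{s+1}$ of $\prod_{S\subseteq[s+1]}\mathcal{C}^{s+1-|S|}f_i\big(x+\sum_{k\in S}y_k\big)$ — exactly the quantity defining $\|f_i\|_{U^{s+1}}^{2^{s+1}}$. (One must track that the forms not in $\mathcal{L}^{(t)}$ survive each doubling as $\pm1$-bounded functions, which is fine since each $f_j$ maps into $[-1,1]$ and products/shifts of such functions stay in $[-1,1]$; and one must check that the only form which, at each stage, genuinely depends on the variable being doubled is $L_i$, which is what the complexity hypothesis buys us.)

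The step I expect to be the main obstacle — and the one that requires the most care — is the bookkeeping of the change of variables across the $s+1$ rounds: one has to verify that freezing class $\mathcal{L}^{(t)}$ at round $t$ does not destroy the property that earlier classes $\mathcal{L}^{(1)},\dots,\mathcal{L}^{(t-1)}$ were already set up to be independent of $L_i$ in the right way, and that $L_i$ can be kept in the role of the ``singled-out'' form $x_1$ throughout while each round introduces a fresh doubling variable. The cleanest way to handle this is to phrase the induction as a statement about an arbitrary system of forms together with a distinguished form $L_i$ and a partition of the rest into classes with the stated span property, and to peel off one class at a time, at each step reducing to a system with one fewer class (on a doubled and re-coordinatized variable set), invoking the linear-algebra lemma above to realize the required coordinate change. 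Once the reduction is set up correctly, each Cauchy--Schwarz step is routine, and the base case ($s+1$ classes exhausted, only $f_i$-factors remaining) is precisely the definition of the Gowers $U^{s+1}$ norm. Finally, since the argument works verbatim with $i$ replaced by any index, we get $|\Lambda| \le \|f_j\|_{U^{s+1}}$ for all $j$, hence $|\Lambda| \le \min_{i\in[m]}\|f_i\|_{U^{s+1}}$.
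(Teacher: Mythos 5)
The paper does not prove this lemma; it cites it from Green--Tao \cite{GT06}, where it is established by exactly the iterated Cauchy--Schwarz argument you sketch (the ``generalized von Neumann'' or ``Cauchy--Schwarz counting'' lemma). Your plan is essentially the standard proof, and the linear-algebra fact you extract (change coordinates so $L_i = x_1$ and the forms in $J$ depend only on $x_2,\dots,x_\ell$, valid whenever $L_i \notin \mathrm{span}\{L_j : j\in J\}$) is correct. The one slip is your parenthetical remark that ``the only form which, at each stage, genuinely depends on the variable being doubled is $L_i$,'' attributed to the complexity hypothesis. That is neither what the hypothesis gives nor what the argument needs: the hypothesis yields only $L_i\notin\mathrm{span}(\mathcal{L}^{(t)})$ for the class $\mathcal{L}^{(t)}$ frozen at stage $t$, allowing a coordinate change making the $\mathcal{L}^{(t)}$-forms independent of $x_1$ while $L_i = x_1$, but forms in other, not-yet-processed classes may well still depend on $x_1$. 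This is harmless: under Cauchy--Schwarz those forms just acquire a second shifted copy, the resulting products remain in $[-1,1]$, and they are bounded by $1$ and discarded at the round in which their own class is frozen --- only the iterated differencing of $f_i$ is tracked through all $s+1$ rounds, which is what yields $\|f_i\|_{U^{s+1}}^{2^{s+1}}$. The bookkeeping you flag as the main obstacle is also lighter than you fear: at each round the doubling direction can be chosen inside the original $\ell$-dimensional coordinate space (with zero component along the shift parameters already introduced), since the span condition $L_i\notin\mathrm{span}(\mathcal{L}^{(t)})$ is a statement about the original forms; so earlier rounds never interfere with locating the annihilating direction needed in later ones.
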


\subsection{Polynomial Factors and their Rank}\label{subsec:poly}

While partitioning the domain  to affine linear subspaces would be the
most intuitive for counting affine cubes, we in fact need higher degree algebraic partitions.

\begin{definition}[Polynomial factor] \label{def:factor} A {\em polynomial factor} $\calb$ is a sequence of
polynomials $P_1, \dots, P_C:\F_p^n \to \F_p$. We also identify it with the
function $\calb:\F_p^n\to\F_p^C$ sending $x$ to
$(P_1(x),\ldots,P_C(x))$.
A {\em cell} of $\calb$ is a preimage $\calb^{-1}(y)$ for some
$y\in\F_p^C$. On the other hand, given a cell of $\calb$, the common value
$y=\calb(x)\in\F_p^C$ is called the {\em image} of the cell. When there is no ambiguity,
we will in fact abuse notation and identify a cell of $\calb$ with its image $y$.

The {\em partition induced by $\calb$} is the
partition of $\F_p^n$ given by $\left\{\calb^{-1}(y):y\in \F_p^C\right\}$.
The {\em complexity} of $\calb$ is the number of defining polynomials
$|\calb|=C$. The {\em degree} of $\calb$ is the maximum degree among
its defining polynomials $P_1,\ldots,P_C$.
\end{definition}

Next, we define the notion of conditional expectation with respect to
a given factor.

\begin{definition}[Expectation over polynomial factor]
Given a factor $\calb$ and a function $f:\F_p^n\to\zo$, the
{\em expectation} of $f$ over a cell $y \in \F_p^{|\calb|}$ is the average
$\E_{x:\calb(x)=y}[f(x)]$, which we denote by $\E[f|y]$. The {\em
  conditional expectation} of $f$ over $\calb$, is the real-valued
function over $\F_p^n$ given by $\E[f|\calb](x)=\E[f|\calb(x)]$. In
particular, it is constant on every cell of the polynomial factor.
\end{definition}

In essence we would want to choose a polynomial factor so that, among other things, the restriction of $f$ in every cell would essentially consist of a constant element and other elements of small norms. However, since we are not dealing with affine linear subspaces, for our arguments to follow we also need the factor itself to be ``well behaved''. This is exemplified in the notion of polynomial rank \cite{GT07}, in essence a strengthening of linear independence.

\begin{definition}[Rank of polynomial factors]
Suppose that $\calb$ is a polynomial factor defined by
polynomials $P_1, \dots, P_C: \F_p^n \to \F_p$. The {\em rank of
  $\calb$} is the largest integer $r$ such that for every $(\alpha_1,
\dots, \alpha_C) \in \F_p^C \setminus \{0^C\}$, the polynomial
$P_\alpha = \sum_{i=1}^C \alpha_i P_i$ cannot be expressed as a
function of $r$ polynomials of degree $d-1$, where $d = \max_{i\in
  [C]: \alpha_i \neq 0} \deg(P_i)$.

The rank of a single polynomial $P$ is defined similarly (but without needing to relate to linear combinations).
\end{definition}

The following result, proved by Kaufman and Lovett
\cite{KL08} for all $p$ (extending previous work of Green and Tao
\cite{GT06} over large characteristic fields), is crucial:
\begin{theorem}\label{thm:rankreg}
For any $\eps > 0$ and integer $d \geq 1$, there exists
$r = r_{\ref{thm:rankreg}}(d,\eps)$ such that:  If $P: \F_p^n \to \F_p$ is a
degree-$d$ polynomial with rank at least $r$, then $|\E_x[\expo{P(x)}]|
< \eps$.
\end{theorem}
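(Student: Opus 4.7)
The plan is to prove the contrapositive by induction on the degree $d$: there exists $r = r_{\ref{thm:rankreg}}(d,\eps)$ such that if $P$ has degree $d$ and $|\E_x \expo{P(x)}| \geq \eps$, then the rank of $P$ is at most $r$.

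For the base case $d=1$, an affine form $P(x) = \langle a,x\rangle + b$ satisfies $\E_x \expo{P(x)} = 0$ whenever $a \neq 0$, so the hypothesis forces $a = 0$ and $P$ to be constant. A constant polynomial trivially has rank $0$ (it is a function of zero polynomials of degree $0$), so $r(1,\eps) = 1$ suffices regardless of $\eps$.

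For the inductive step, I would apply the standard Weyl/Cauchy--Schwarz differencing:
$$\eps^2 \;\leq\; |\E_x \expo{P(x)}|^2 \;=\; \E_{x,y}\expo{P(x+y)-P(x)} \;=\; \E_y\bigl[\E_x \expo{D_y P(x)}\bigr],$$
where $D_y P(x) := P(x+y) - P(x)$ has degree $d-1$. By averaging and the triangle inequality, for at least an $\eps^2/2$ fraction of directions $y \in \F_p^n$, one has $|\E_x \expo{D_y P(x)}| \geq \eps^2/2$. The inductive hypothesis, applied with parameters $(d-1, \eps^2/2)$, then implies that for each such ``good'' direction $y$, the polynomial $D_y P$ has rank at most $r' := r_{\ref{thm:rankreg}}(d-1, \eps^2/2)$, i.e., $D_y P$ is a function of at most $r'$ polynomials of degree $\leq d-2$.

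The main obstacle, and the technical heart of the proof, is the algebraic step deducing a rank bound on $P$ itself from this information about its derivatives. Following the Kaufman--Lovett regularization scheme, the idea is to iteratively build a polynomial factor $\calb$ consisting of polynomials of degree $\leq d-1$, chosen so that a positive fraction of directions $y$ have $D_y P$ measurable with respect to $\calb$; in parallel, one regularizes $\calb$ (in the sense of Theorem \ref{thm:rankreg} applied inductively to its constituent polynomials) so that its defining polynomials themselves have high rank and thus behave as if algebraically independent. Once the iteration stabilizes, the linearity of the derivative operator in the direction variable, $D_{y+y'}P = D_y P + D_{y'} P + (\text{lower-order cross terms})$, allows one to ``integrate'' the structured derivatives back into a structured description of $P$, showing $P$ is a function of a bounded number of polynomials of degree $\leq d-1$. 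Quantitatively the resulting bound $r(d,\eps)$ will blow up in an Ackermann-like fashion with $d$, but mere existence is all that the statement demands.
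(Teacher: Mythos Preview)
The paper does not prove Theorem~\ref{thm:rankreg} at all: it is quoted as a black-box result due to Kaufman and Lovett \cite{KL08} (extending Green and Tao \cite{GT06} in large characteristic), and is used throughout without proof. So there is no ``paper's own proof'' to compare your attempt against.

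That said, your sketch is a faithful outline of the argument in those references. The Weyl differencing and averaging steps are correct as stated, and you are right that the substantive content lies entirely in the ``integration'' step, passing from low rank of a positive fraction of derivatives $D_y P$ back to low rank of $P$. Two remarks: first, the Green--Tao version of this step relies on a symmetry/polarization argument that uses $d < p$, whereas the Kaufman--Lovett argument (the one the paper actually cites, valid for all $p$) proceeds via an iterated regularization of a polynomial factor, closer to what you describe. Second, your one-paragraph summary of that step (``build a factor, regularize, use linearity of the derivative in the direction variable to integrate'') is correct in spirit but glosses over exactly the place where the work is: one must show that once the factor is regular, the set of $y$ for which $D_y P$ is measurable with respect to it is a coset of a bounded-codimension subspace, and then integrate along that subspace. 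As written, this is a proof plan rather than a proof, but it is the right plan.
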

As an example of how useful Theorem \ref{thm:rankreg} is, consider the
following simple lemma which states that every cell of a polynomial
factor with large enough rank has approximately the same size.
\begin{lemma}\label{lem:cellsize}
Given a polynomial factor $\calb$ of degree $d$, complexity $C$, and
rank at least $r_{\ref{thm:rankreg}}(d,\eps)$ generated by the
polynomials $P_1,\dots,P_C: \F_p^n \to \F_p$, and an element $b \in
\F_p^C$, we have that:
\begin{equation*}
\Pr_{x \in \F_p^n}[\calb(x) = b] = p^{-C} \pm \eps
\end{equation*}
\end{lemma}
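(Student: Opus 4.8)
The plan is to expand the indicator of the event $\{\calb(x)=b\}$ via discrete Fourier analysis on $\F_p$ and then kill all the non-trivial terms using the rank hypothesis together with Theorem \ref{thm:rankreg}. First I would write, for each coordinate $i\in[C]$, the identity $\indic[P_i(x)=b_i]=p^{-1}\sum_{\lambda_i\in\F_p}\expo{\lambda_i(P_i(x)-b_i)}$, multiply these $C$ identities together, and take the expectation over $x\in\F_p^n$. Writing $P_\lambda=\sum_{i=1}^C\lambda_iP_i$ and $\langle\lambda,b\rangle=\sum_{i=1}^C\lambda_ib_i$ for $\lambda=(\lambda_1,\dots,\lambda_C)\in\F_p^C$, this produces the clean formula
\[
\Pr_{x\in\F_p^n}[\calb(x)=b]=\frac{1}{p^C}\sum_{\lambda\in\F_p^C}\expo{-\langle\lambda,b\rangle}\,\E_x\bigl[\expo{P_\lambda(x)}\bigr].
\]
The term $\lambda=0^C$ contributes exactly $p^{-C}$, so it remains only to bound the contribution of the nonzero $\lambda$.

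Next I would fix $\lambda\neq 0^C$ and set $d_\lambda=\max_{i:\lambda_i\neq 0}\deg(P_i)\le d$. By the definition of the rank of $\calb$, the polynomial $P_\lambda$ cannot be written as a function of fewer than $\mathrm{rank}(\calb)$ polynomials of degree $d_\lambda-1$; in particular (taking $\mathrm{rank}(\calb)\ge 2$, which is the only interesting case) $P_\lambda$ has degree exactly $d_\lambda$, and its rank as a single polynomial is at least $\mathrm{rank}(\calb)\ge r_{\ref{thm:rankreg}}(d,\eps)$. Assuming as we may — by replacing $r_{\ref{thm:rankreg}}(d,\eps)$ with $\max_{d'\le d}r_{\ref{thm:rankreg}}(d',\eps)$ — that $r_{\ref{thm:rankreg}}(\cdot,\eps)$ is non-decreasing in its degree argument, Theorem \ref{thm:rankreg} applies to the degree-$d_\lambda$ polynomial $P_\lambda$ and yields $\bigl|\E_x[\expo{P_\lambda(x)}]\bigr|<\eps$.

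Finally I would combine the two steps: there are $p^C-1$ nonzero vectors $\lambda$, so
\[
\bigl|\Pr_x[\calb(x)=b]-p^{-C}\bigr|\le\frac{1}{p^C}\sum_{0\neq\lambda\in\F_p^C}\bigl|\E_x[\expo{P_\lambda(x)}]\bigr|<\frac{p^C-1}{p^C}\,\eps<\eps,
\]
which is exactly the assertion $\Pr_x[\calb(x)=b]=p^{-C}\pm\eps$.

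The argument is essentially routine Fourier analysis; the only place demanding any care is the middle step, namely checking that \emph{every} nonzero linear combination $P_\lambda$ of the defining polynomials genuinely meets the rank threshold required to invoke Theorem \ref{thm:rankreg}. This is precisely what the notion of factor rank was designed to guarantee — and it is also what forces the convenient "no cancellation" fact that $\deg P_\lambda=d_\lambda$, so that the degree parameter fed into Theorem \ref{thm:rankreg} is the one controlled by the rank bound.
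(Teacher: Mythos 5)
Your proof is correct and takes essentially the same approach as the paper's: expand the indicator $\indic[\calb(x)=b]$ by discrete Fourier analysis, isolate the $\lambda = 0^C$ term as $p^{-C}$, and bound each of the remaining $p^C-1$ terms by $\eps$ via Theorem \ref{thm:rankreg}. The paper states this in three lines and defers the justification of the last step to a single clause; your explicit check that every nonzero $P_\lambda$ meets the rank threshold (no degree drop, monotonicity of $r_{\ref{thm:rankreg}}$ in the degree argument) is exactly the implicit content of that clause.
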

\begin{proof}
This is implicit in previous work, e.g. \cite{Gre07}. For
completeness, we repeat the argument:
\begin{align*}
\Pr_{x \in \F_p^n}[\calb(x) = b]
&= \E_{x} \left[ \prod_{i \in [C]}\frac{1}{p} \sum_{\lambda_i \in
    \F_p} \expo{\lambda_i\cdot (P_i(x) - b_i)}\right]\\
&= p^{-C} \sum_{(\lambda_1,\ldots,\lambda_C) \in \F_p^C}
\E_{x}\left[\expo{\sum_{i \in [C]} \lambda_i(P_i(x) - b_i)}\right]\\
&= p^{-C}\left(1 \pm p^C\eps \right)
\end{align*}
where the last line uses Theorem \ref{thm:rankreg} whenever $(\lambda_1,\ldots,\lambda_C)\neq 0^C$.
\end{proof}

\subsection{Refinement and Robustness}\label{subsec:refrob}

The decomposition theorems will iteratively partition the domain
$\F_p^n$ into finer and finer partitions (though we will use a
mechanism that hides the refinements that do not have to be
``visible'' for the other proofs). We will need to be
careful about distinguishing between two different types of
refinements.

\begin{definition}[Refinement of a polynomial factor] \label{def:refine}
$\calb'$ is called a {\em syntactic refinement} of $\calb$, and
denoted $\calb' \preceq_{syn} \calb$, if the sequence of polynomials
defining $\calb'$ extends that of $\calb$. It is called a {\em
  semantic refinement}, and denoted $\calb' \preceq_{sem} \calb$ if the
induced partition is a combinatorial refinement of the partition
induced by $\calb$. In other words, if for every $x,y\in \F_2^n$,
$\calb'(x)=\calb'(y)$ implies $\calb(x)=\calb(y)$. The relation $\preceq$
(without subscripts) is a synonym for $\preceq_{syn}$.
\end{definition}
Clearly, being a syntactic refinement is stronger than being a
semantic refinement. However in essence, these are almost the same
thing.

\begin{observation}\label{obs:silly}
If $\calb'$ is a semantic refinement of $\calb$, then there exists a
syntactic refinement $\calb''$ of $\calb$ that induces the same
partition of $\F_p^n$, and for which $|\calb''|\leq
|\calb'|+|\calb|$.
\end{observation}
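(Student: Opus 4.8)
The plan is to take the polynomials defining $\calb'$ and adjoin to them the polynomials defining $\calb$, then argue that this enlarged sequence is both a syntactic refinement of $\calb$ and semantically equivalent to $\calb'$. Concretely, write $\calb$ as the factor given by $Q_1,\dots,Q_D$ and $\calb'$ as the factor given by $P_1,\dots,P_C$, and let $\calb''$ be the factor defined by the concatenated sequence $P_1,\dots,P_C,Q_1,\dots,Q_D$. By construction the defining sequence of $\calb''$ extends that of $\calb$ (after a harmless reordering, or one may simply define syntactic refinement up to reordering of the appended block — in any case the containment of the polynomial sets is what matters), so $\calb'' \preceq_{syn} \calb$, and clearly $|\calb''| \le |\calb'| + |\calb|$.

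It remains to check that $\calb''$ induces the same partition as $\calb'$. One direction is immediate: if $\calb''(x) = \calb''(y)$ then in particular $P_i(x) = P_i(y)$ for all $i \in [C]$, so $\calb'(x) = \calb'(y)$. For the other direction, suppose $\calb'(x) = \calb'(y)$. Since $\calb'$ is a semantic refinement of $\calb$, this implies $\calb(x) = \calb(y)$, i.e.\ $Q_j(x) = Q_j(y)$ for all $j \in [D]$; combined with $P_i(x) = P_i(y)$ for all $i$, we get $\calb''(x) = \calb''(y)$. Hence the two partitions coincide, which completes the argument.

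The main thing to be careful about — rather than a genuine obstacle — is the bookkeeping around the definition of syntactic refinement: the definition as stated asks that the defining sequence of $\calb''$ \emph{extend} that of $\calb$, whereas the natural construction above appends the $Q_j$'s \emph{after} the $P_i$'s. This is resolved either by reading syntactic refinement up to a permutation of the polynomial sequence (which does not change the induced partition, the rank, or the complexity), or by simply listing the polynomials of $\calb''$ as $Q_1,\dots,Q_D,P_1,\dots,P_C$, in which case the sequence literally extends that of $\calb$. No degree or rank considerations enter, since the observation only concerns the combinatorial partition and the complexity count, both of which are handled by the two-line argument above.
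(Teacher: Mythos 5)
Your proof is correct and takes essentially the same approach as the paper, which simply says to add the defining polynomials of $\calb$ to those of $\calb'$. Your elaboration — including the remark that one should list $\calb$'s polynomials first so the sequence literally extends that of $\calb$ — is a faithful fleshing-out of that one-line argument.
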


\begin{proof}
Just add the defining polynomials of $\calb$ to those of $\calb'$.
\end{proof}

On the other hand, doing the above conversion can ``destroy'' the rank of a polynomial factor, and there will be indeed situations in what follows where we will have to carefully distinguish the two refinement types.

Next, we define the density index of a polynomial factor with respect to a function, and use it to define the notion of robustness, which is central to what follows.

\begin{definition}\label{def:indd}
The {\em density index} of a factor $\calb$ with respect to a function $f$ is the squared $l_2$ norm of the conditional expectation of $f$, that is $\indd(\calb)=\E\left[(\E[f|\calb])^2\right]$.

Given a function $h:\N\to\N$ and a real parameter $\gamma$, A factor $\calb$ is {\em $(h,\gamma)$-robust} (semantically) if there exists no $\calb'$ which is a semantic refinement of $\calb$ for which $|\calb'|\leq h(|\calb|)$ and $\indd(\calb')\geq\indd(\calb)+\gamma$.
\end{definition}

Robustness is somewhat preserved when moving to a small refinement.

\begin{observation}\label{obs:robspr}
If $\calb$ is $(g\circ h,\gamma)$-robust, and $\calb'$ is a (syntactic or semantic) refinement of $\calb$ for which $|\calb'|\leq h(|\calb|)$, then $\calb'$ is $(g,\gamma)$-robust.
\end{observation}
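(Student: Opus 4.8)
The plan is to argue by contradiction, leveraging the fact that a small refinement of a small refinement is still a (controllably) small refinement, so that any witness against $(g,\gamma)$-robustness of $\calb'$ would also witness against $(g\circ h,\gamma)$-robustness of $\calb$. First I would suppose that $\calb'$ is \emph{not} $(g,\gamma)$-robust. By Definition \ref{def:indd}, this means there exists a factor $\calb''$ that is a semantic refinement of $\calb'$ with $|\calb''| \le g(|\calb'|)$ and $\indd(\calb'') \ge \indd(\calb') + \gamma$.

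The next step is to transfer these properties to $\calb$. Since semantic refinement is transitive (if the partition of $\calb''$ refines that of $\calb'$, which refines that of $\calb$, then $\calb''$'s partition refines $\calb$'s), $\calb''$ is a semantic refinement of $\calb$. For the complexity bound, I would use the hypothesis $|\calb'| \le h(|\calb|)$ together with monotonicity: $g$ (like all the functions $h\colon \N\to\N$ used for robustness here) should be taken to be monotone non-decreasing, so $|\calb''| \le g(|\calb'|) \le g(h(|\calb|)) = (g\circ h)(|\calb|)$. For the index bound, the key monotonicity fact is that the density index cannot decrease under refinement: if $\calb_2 \preceq_{sem} \calb_1$ then $\indd(\calb_2) \ge \indd(\calb_1)$, because $\E[f|\calb_1]$ is the conditional expectation of $\E[f|\calb_2]$ onto the coarser $\sigma$-algebra and conditional expectation is an $L^2$-contraction (equivalently, a Pythagoras / variance-decomposition argument). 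Applying this with $\calb_1 = \calb$, $\calb_2 = \calb'$ gives $\indd(\calb') \ge \indd(\calb)$, hence
\[
\indd(\calb'') \ge \indd(\calb') + \gamma \ge \indd(\calb) + \gamma.
\]

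Combining the three facts, $\calb''$ is a semantic refinement of $\calb$ with $|\calb''| \le (g\circ h)(|\calb|)$ and $\indd(\calb'') \ge \indd(\calb) + \gamma$, which directly contradicts the assumed $(g\circ h,\gamma)$-robustness of $\calb$. Therefore $\calb'$ must be $(g,\gamma)$-robust. The same argument applies verbatim when $\calb'$ is a syntactic rather than merely semantic refinement of $\calb$, since a syntactic refinement is in particular a semantic one. I do not expect a serious obstacle here; the only point requiring any care is making explicit the monotonicity conventions (on the functions $g,h$ and on $\indd$ under refinement) that make the two inequalities go through, and noting transitivity of the refinement relations — all of which are routine and consistent with how these notions are used in the surrounding text.
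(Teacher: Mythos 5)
Your proof is correct and is essentially the contrapositive of the paper's direct argument: both hinge on the same two facts, namely that $|\calb''|\leq g(|\calb'|)\leq g(h(|\calb|))$ by monotonicity, and that $\indd(\calb')\geq\indd(\calb)$ (Cauchy--Schwarz / $L^2$-contraction), chained to relate the robustness of $\calb'$ back to that of $\calb$.
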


\begin{proof}
If $\calb''$ is any refinement of $\calb'$ for which $|\calb''|\leq g(|\calb'|)$, then $|\calb''|\leq g(h(|\calb|))$ and so $\indd(\calb'')\leq\indd(\calb)+\gamma$. On the other hand by the Cauchy-Schwarz inequality $\indd(\calb')\geq\indd(\calb)$, and so $\indd(\calb'')\leq\indd(\calb')+\gamma$, proving the robustness condition of $\calb'$.
\end{proof}

Existence of robust factors, also as syntactic refinements of a given factor, is easy to prove. Note that the function in its statement takes another function as one of its parameters.

\begin{observation}\label{obs:rob}
For an appropriate function $T_{\ref{obs:rob}}(k,h,\gamma)$, for any $\calb$, $h:\N\to\N$ and $\gamma>0$ there exists a syntactic refinement $\calb'$ which is $(h,\gamma)$-robust, and for which $|\calb'|\leq T_{\ref{obs:rob}}(|\calb|,h,\gamma)$.
\end{observation}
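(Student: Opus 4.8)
The plan is to run the standard "energy increment" argument from the regularity-lemma world, but with the refinements chosen to be syntactic and with the size bound tracked carefully so it can be packaged into a single function $T_{\ref{obs:rob}}$. Start with $\calb_0 = \calb$. Given $\calb_i$, ask whether it is $(h,\gamma)$-robust. If yes, we are done and we output $\calb' = \calb_i$. If no, there is by definition a \emph{semantic} refinement $\calb''$ of $\calb_i$ with $|\calb''| \le h(|\calb_i|)$ and $\indd(\calb'') \ge \indd(\calb_i) + \gamma$. We cannot take $\calb''$ directly since we need a syntactic refinement of the original $\calb$, so instead apply Observation \ref{obs:silly}: replace $\calb''$ by a syntactic refinement $\calb_{i+1}$ of $\calb_i$ inducing the same partition as $\calb''$, with $|\calb_{i+1}| \le |\calb''| + |\calb_i| \le h(|\calb_i|) + |\calb_i|$. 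Since $\calb_{i+1}$ induces the same partition as $\calb''$, we have $\indd(\calb_{i+1}) = \indd(\calb'') \ge \indd(\calb_i) + \gamma$. Crucially, $\calb_{i+1}$ is a syntactic refinement of $\calb_i$, which is a syntactic refinement of $\calb$, so $\calb_{i+1} \preceq_{syn} \calb$.

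The process terminates: $\indd$ is the squared $l_2$ norm of a $[0,1]$-valued function (the conditional expectation of $f \in \{0,1\}$), hence $0 \le \indd(\calb_i) \le 1$ for all $i$, and each non-terminating step increases it by at least $\gamma$. Therefore after at most $\lceil 1/\gamma \rceil$ steps the process must halt, outputting an $(h,\gamma)$-robust syntactic refinement of $\calb$. For the size bound, define $g(k) = h(k) + k$ and iterate: $|\calb_{i+1}| \le g(|\calb_i|)$, so $|\calb_i| \le g^{(i)}(|\calb|)$ where $g^{(i)}$ is the $i$-fold composition. Since $i \le \lceil 1/\gamma \rceil$, we may set
\[
T_{\ref{obs:rob}}(k,h,\gamma) = g^{(\lceil 1/\gamma \rceil)}(k), \qquad g(j) = h(j) + j,
\]
which depends only on $k$, $h$, and $\gamma$ as required, and bounds $|\calb'|$.

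I do not expect a genuine obstacle here; the only points needing care are (i) making sure the intermediate refinement produced by the robustness failure is converted into a \emph{syntactic} refinement of the \emph{current} factor (not just of $\calb$) so that syntactic refinement of $\calb$ is preserved by transitivity, which is exactly what Observation \ref{obs:silly} gives when applied at each stage, and (ii) observing that $\indd$ is bounded in $[0,1]$ so the increment argument terminates in a number of steps depending only on $\gamma$. Both are routine, so the main "work" is just bookkeeping the size function $T_{\ref{obs:rob}}$ through the $O(1/\gamma)$ compositions of $h$.
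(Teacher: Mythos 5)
Your proof is correct and takes essentially the same energy-increment approach as the paper. The only cosmetic difference is in the bookkeeping: at each step you adjoin the polynomials of the \emph{current} factor $\calb_i$ (getting $\calb_{i+1}\preceq_{syn}\calb$ by transitivity, with bound $h(|\calb_i|)+|\calb_i|$), whereas the paper adjoins the polynomials of the \emph{original} $\calb$ directly (with bound $h(|\calb_i|)+|\calb|$); both are valid and both yield a bound of the form "iterate $g$ for $\lceil 1/\gamma\rceil$ steps," which is exactly the paper's recurrence $k_{i+1}=h(k_i)+k$.
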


\begin{proof}
Without loss of generality we assume that $h$ is monotone non-decreasing (otherwise replace $h(k)$ with $\max_{j\leq k}h(j)$). Set $\calb_0=\calb$. Inductively, if $\calb_i$ is not already $(h,\gamma)$-robust then set $\calb'_i$ to be a semantic refinement of $\calb_i$ for which $|\calb'_i|\leq h(|\calb_i|)$ and $\indd(\calb')\geq\indd(\calb)+\gamma$, and by Observation \ref{obs:silly} then set $\calb_{i+1}$ to be a syntactic refinement of $\calb$ and $\calb'_i$ for which $|\calb_{i+1}|\leq h(|\calb_i|)+|\calb|$.

Noting that the index can only increase while moving to a refinement (by the Cauchy-Schwarz inequality), this process must stop for some $j\leq 1/\gamma$. $\calb_j$ is the required factor, and its size is bounded by $k_{1/\gamma}$, where we define $k_0=k$ and by induction $k_{i+1}=h(k_i)+k$.
\end{proof}

{\bf Note:} From now on we assume that all our relevant functions are monotone in their corresponding variables, also when this is not stated explicitly. For example, a function $h$ fed to Observation \ref{obs:rob} will assumed to be monotone non-decreasing, and if $k\leq k'$, $\gamma\geq\gamma'$, and $h(m)\leq h'(m)$ for every $m\in\N$ (while both $h$ and $h'$ are monotone non-decreasing), then $T_{\ref{obs:rob}}(k,h,\gamma)\leq T_{\ref{obs:rob}}(k',h',\gamma')$. All our lemmas can indeed be made to provide such functions.

\subsection{Robustness with Rank}\label{subsec:robrank}

The next item on the agenda is to show that polynomial factors can be refined to ones of high rank. The following index definition is used for analyzing rank.

\begin{definition}
The {\em degree index} of a factor $\calb$ is the (infinite but almost everywhere zero) sequence of non-negative integers $\indm(\calb)=I=(i_1,i_2,\ldots)$, where $i_k$ is the number of polynomials of degree $k$ in the sequence of polynomials defining $\calb$.

Denote the set of all possible degree sequences as above by $\cali$. Over $\cali$ we define the {\em anti-lexicographic} order, where $I<I'$ if $i_k<i'_k$ for the largest $k$ on which those coordinates differ.
\end{definition}

The set $\cali$ defined above is well-ordered in the sense that there exist no infinite strictly decreasing sequences of members of $\cali$, but this still does not provide for ``standard'' induction, as the order is not isomorphic to $\N$. To replace induction we define the notion of a decrement.

\begin{definition}
Let $\cali$ denote the well-ordered set of all possible degree indexes. A function $\kappa:\N\times\cali\to\cali$ is called a {\em decrement} if for all $A\in\cali$ and $n\in\N$ it satisfies $\kappa(n,A)<A$, for all $n$ and $A\leq B$ it satisfies $\kappa(n,A)\leq\kappa(n,B)$, and for all $n<m$ and $A$ it satisfies $\kappa(n,A)\leq\kappa(m,A)$. The inequalities are with respect to the anti-lexicographic ordering of $\cali$.
\end{definition}

The following shows how, when we are given a decrement that ``bounds'' some process, we can use it to bound an iterative process.

\begin{lemma}\label{lem:decbound}
There exist $T_{\ref{lem:decbound}}(k,d,h,\kappa)$ and $m_{\ref{lem:decbound}}(k,d,h,\kappa)$ that take numbers $k$ and $d$, a monotone $h:\N\to\N$ and a decrement $\kappa$, and satisfy the following. If $\calb_0,\calb_1,\ldots\calb_m$ is a sequence of factors of bounded degree $d$ for which $|\calb_0|\leq k$,  $|\calb_i|\leq h(|\calb_{i-1}|)$ and $\indm(\calb_i)\leq \kappa\left(|\calb_{i-1}|,\indm(\calb_{i-1})\right)$, then $|\calb_m|$ is bounded by $T_{\ref{lem:decbound}}(|\calb|,d,h,\kappa)$ and $m$ is bounded by $m_{\ref{lem:decbound}}(|\calb|,d,h,\kappa)$.
\end{lemma}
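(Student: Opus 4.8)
The plan is to prove Lemma~\ref{lem:decbound} by a double induction that respects the well-ordering of $\cali$ under the anti-lexicographic order. The key obstacle is that the order on $\cali$ is not order-isomorphic to $\N$, so a naive induction on the "step count" $m$ does not work directly; instead I will induct on the degree index $\indm(\calb_0)$ (well-founded in $\cali$), and within a fixed value of the top nonzero coordinate, perform an ordinary induction on $\N$. The decrement $\kappa$ is precisely the tool that lets this go through: each application of $\kappa$ strictly lowers the index, and the monotonicity clauses of the decrement guarantee the bounds stay controlled as the factor complexities grow along the sequence.

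First I would reduce to the case where $d$ is fixed, so that all factors $\calb_i$ lie in the finite-dimensional slice $\cali_d \subseteq \cali$ of degree indexes supported on coordinates $1,\dots,d$. On $\cali_d$ the anti-lexicographic order is a well-order of order type $\omega^d$, so it supports transfinite induction. I would define $T_{\ref{lem:decbound}}$ and $m_{\ref{lem:decbound}}$ recursively on $\cali_d$ exactly mirroring how a sequence can descend: given $\calb_0$ with $|\calb_0| \le k$ and $\indm(\calb_0) = I$, the first few factors $\calb_0, \calb_1, \dots$ may keep the same value of $i_d$ (the top coordinate), but each such step must strictly decrease the tail $(i_1,\dots,i_{d-1})$ by the decrement property and the monotonicity of $\kappa$, so by the induction hypothesis for $\cali_{d-1}$ (or more precisely, for indexes with smaller top coordinate sequence) there can be at most some bounded number of such steps before $i_d$ itself must drop. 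Since $0 \le i_d \le |\calb_i| \le$ (a bound we are building), $i_d$ can drop at most boundedly many times. Assembling these counts — bounded runs within each value of the top coordinate, and boundedly many values — gives the bound on $m$; the bound on $|\calb_m|$ then follows since $|\calb_{i}| \le h(|\calb_{i-1}|)$, so iterating $h$ at most $m$ times from $k$ produces the desired $T_{\ref{lem:decbound}}$.

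Concretely, the recursion would look like: let $N(k,d)$ bound the number of consecutive steps for which the top coordinate $i_d$ is unchanged, obtained from $m_{\ref{lem:decbound}}$ applied to the truncated problem in $\cali_{d-1}$ with the appropriately iterated size bound; then after such a run the complexity is at most $h^{(N)}(k)$ and $i_d$ has decreased; since $i_d$ starts at most $k$ and after the run the complexity is still bounded, the number of distinct values of $i_d$ is bounded by that complexity bound; multiplying gives $m_{\ref{lem:decbound}}(k,d,h,\kappa)$, and composing $h$ that many times gives $T_{\ref{lem:decbound}}(k,d,h,\kappa)$. The monotonicity assumptions on $\kappa$ and $h$, together with the standing monotonicity convention stated after Observation~\ref{obs:rob}, are what make each of these truncated sub-bounds legitimately depend only on $(k,d,h,\kappa)$ and not on the specific sequence.

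The main obstacle, as noted, is organizing the transfinite recursion cleanly so that the functions $T_{\ref{lem:decbound}}$ and $m_{\ref{lem:decbound}}$ are genuinely well-defined — i.e.\ verifying that every invocation of the recursion is on a strictly smaller argument (smaller top coordinate, or smaller complexity parameter feeding the $\cali_{d-1}$ subproblem). This is exactly the standard "Ackermann-type blowup from iterated regularity" phenomenon, and I expect the resulting functions to have an Ackermann-like growth rate, consistent with the remark in the introduction about $q_\cala$. I would not attempt to optimize the bound; establishing well-definedness and finiteness is all that is needed downstream.
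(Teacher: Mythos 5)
Your transfinite-recursion approach is a genuinely different route from the paper's, and considerably more elaborate than necessary. The paper sidesteps essentially all of the bookkeeping you anticipate by constructing a single deterministic ``majorizing'' sequence: set $h_0=k$, let $I_0$ be the largest degree index of any degree-$d$ factor of complexity $k$ (namely $i_d=k$ with all other coordinates zero), and inductively set $h_{i+1}=h(h_i)$ and $I_{i+1}=\kappa(h_i,I_{i-1})$ (with $I_{i+1}=\kappa(h_i,I_i)$). This sequence depends only on $(k,d,h,\kappa)$, and since $I_0>I_1>\cdots$ is a strictly decreasing sequence in the well-ordered set $\cali$, it has finite length; that length is $m_{\ref{lem:decbound}}(k,d,h,\kappa)$ and one then takes $T_{\ref{lem:decbound}}=h_{m_{\ref{lem:decbound}}}$. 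The monotonicity clauses in the definition of a decrement give, by a one-line induction, that $|\calb_i|\leq h_i$ and $\indm(\calb_i)\leq I_i$ for any actual sequence of factors satisfying the hypotheses, and the bound follows. The only use of well-ordering is the elementary observation that a strictly decreasing sequence terminates -- no transfinite recursion on $\cali_d$, no case analysis on runs of constant top coordinate, no worry about whether the recursive call is on a strictly smaller argument. Your plan can in principle be pushed through, but beware: the sub-bounds for the runs with fixed $i_d$ are not ``multiplied'' -- they must be composed, since the starting complexity (and the possibly much larger tail of the degree index) at the beginning of each run depends on the iterated $h$ over all previous runs; also $i_d$ is non-increasing and starts at most $k$, so the number of distinct $i_d$-values is at most $k+1$, not ``the complexity bound.'' These are fixable imprecisions, but the amount of care required makes the paper's majorizing-sequence argument the clear winner: it gets the same Ackermann-type bound with one comparison.
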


\begin{proof}
Let $I_0$ be the maximal (with respect to order) degree index of any degree $d$ factor of complexity $k$, which is the sequence $(i_1,i_2,\ldots)$ for which $i_d=k$ and $i_j=0$ for any $j\neq d$, and let $h_0=k$. Inductively define $h_i\in\N$ as $h_i=h(h_{i-1})$, and $I_i=\kappa(h_{i-1},I_{i-1})$. Because $I_0,I_1,\ldots$ is a decreasing sequence over a well-ordered set, it must be of bounded length, which we denote as $m_{\ref{lem:decbound}}(k,d,h,\kappa)$. We then set $T_{\ref{lem:decbound}}(k,d,h,\kappa)=h_{m_{\ref{lem:decbound}}(k,d,h,\kappa)}$. For a sequence of factors as above, the monotonicity conditions of $\kappa$ ensure that $|\calb_i|\leq h_i$ and $\indm(\calb_i)\leq I_i$, and so we are done.
\end{proof}

The following provides a decrement that will bound the process of obtaining a high rank refinement of a given factor, as well as a bound on the size increment. Note that also if the required rank depends on the factor size, we can still get a bounding decrement.

\begin{lemma}\label{lem:robrank}
For every $r:\N\to\N$ there exist $h_{\ref{lem:robrank}}^{(r)}:\N\to\N$ and a decrement $\kappa_{\ref{lem:robrank}}^{(r)}:\N\times\cali\to\cali$, satisfying the following for every $d$. If $\calb$ is a factor of degree at most $d$ whose rank is less than $r(|\calb|)$, then there exists a semantic refinement $\calb'$ of $\calb$ for which $|\calb'|\leq h_{\ref{lem:robrank}}^{(r)}(|\calb|)$ and $\indm(\calb')\leq\kappa_{\ref{lem:robrank}}^{(r)}(|\calb|,\indm(\calb))$.

Moreover, if $\calb$ is in itself a syntactic refinement of some $\hat{\calb}$ that is of rank at least $r(|\calb|)+1$, then additionally $\calb'$ will be a syntactic refinement of $\hat{\calb}$.
\end{lemma}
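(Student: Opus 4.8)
The plan is to extract from the failure of the rank bound a single ``degenerate'' linear combination of $\calb$'s polynomials, and then to perform one polynomial-exchange step producing $\calb'$; the decrement $\kappa_{\ref{lem:robrank}}^{(r)}$ and the size function $h_{\ref{lem:robrank}}^{(r)}$ will just record the cost of one such step, and the iterated regularization will be handled afterward by feeding these into Lemma \ref{lem:decbound}.

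\textbf{The exchange step.} Write $C=|\calb|$, with defining polynomials $P_1,\dots,P_C$. Since $\mathrm{rank}(\calb)<r(C)$, the definition of rank supplies a nonzero $\alpha=(\alpha_1,\dots,\alpha_C)\in\F_p^C$ such that $P_\alpha:=\sum_i\alpha_i P_i$ can be written as $P_\alpha=F(Q_1,\dots,Q_{r'})$ for some $F$ and polynomials $Q_1,\dots,Q_{r'}$ of degree at most $d_\alpha-1$, where $d_\alpha:=\max_{i:\alpha_i\neq0}\deg P_i$ and $r'\le r(C)$ (indeed one may take $r'=\mathrm{rank}(\calb)+1$). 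Fix an index $j$ with $\alpha_j\neq0$ and $\deg P_j=d_\alpha$, and let $\calb'$ be the factor whose sequence is obtained from that of $\calb$ by deleting $P_j$ and appending $Q_1,\dots,Q_{r'}$. Because $\alpha_j$ is invertible, $P_j=\alpha_j^{-1}\left(F(Q_1,\dots,Q_{r'})-\sum_{i\neq j}\alpha_iP_i\right)$ is a function of the polynomials defining $\calb'$; hence if $x,y$ agree on $\calb'$ they agree on $P_j$ too, so $\calb'\preceq_{sem}\calb$. Its complexity is at most $C-1+r'\le C+r(C)$, which is what we take for $h_{\ref{lem:robrank}}^{(r)}$.

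\textbf{The degree index drops.} Passing from $\calb$ to $\calb'$ removes one polynomial of degree $d_\alpha$ and adds $r'$ polynomials each of degree at most $d_\alpha-1$, so the largest coordinate of $\indm$ that changes is the $d_\alpha$-th, and it strictly decreases in the anti-lexicographic order — this is precisely why $P_j$ must be chosen among the \emph{top-degree} polynomials of the combination. Since $d_\alpha$ is a priori unknown, the decrement must take the worst case: define $\kappa_{\ref{lem:robrank}}^{(r)}(n,I)$ to be the anti-lexicographic maximum, over all $d'\ge1$ with $i_{d'}>0$, of the sequence obtained from $I=(i_1,i_2,\dots)$ by decreasing $i_{d'}$ by one and increasing $i_{d'-1}$ by $r(n)$ (degree-$0$ polynomials being simply dropped, so a move at $d'=1$ just decrements $i_1$). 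Each such sequence is $<I$, monotonicity in both arguments follows from monotonicity of $r$, and by construction $\indm(\calb')\le\kappa_{\ref{lem:robrank}}^{(r)}(|\calb|,\indm(\calb))$.

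\textbf{The ``moreover'' clause and the main obstacle.} Here we must in addition arrange, when $\calb\preceq_{syn}\hat\calb$ with $\mathrm{rank}(\hat\calb)\ge r(|\calb|)+1$, that the deleted $P_j$ is \emph{not} one of $\hat\calb$'s polynomials, so that $\calb'$ retains $\hat\calb$'s defining sequence as an initial segment and thus $\calb'\preceq_{syn}\hat\calb$. Writing $C_0=|\hat\calb|$, suppose toward a contradiction that for the chosen (minimal-cost) degenerate $\alpha$ every top-degree nonzero coefficient lies at an index $\le C_0$. Then the restriction $\beta:=\alpha|_{[C_0]}$ is a nonzero vector for $\hat\calb$ with $d_\beta=d_\alpha$, and $P_\beta=F(Q_1,\dots,Q_{r'})-\sum_{i>C_0,\,\alpha_i\neq0}\alpha_iP_i$ exhibits $P_\beta$ as a function of $r'+t$ polynomials of degree $<d_\beta$, where $t:=|\{i>C_0:\alpha_i\neq0\}|$; since a degenerate $\alpha$ cannot be supported entirely on $[C_0]$ (that would already violate $\mathrm{rank}(\hat\calb)\ge r(|\calb|)+1>\mathrm{rank}(\calb)$), we have $t\ge1$. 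This contradicts the rank of $\hat\calb$ as soon as $r'+t$ is at most that rank. Making this counting go through with only the hypothesis $\mathrm{rank}(\hat\calb)\ge r(|\calb|)+1$ — rather than a bound that also absorbs $|\calb|-|\hat\calb|$ — is exactly the delicate part, and I expect it to be the main obstacle: it should be handled by choosing $\alpha$ to minimize the ``cost'' $r'+t$ (equivalently, the number of polynomials of $\calb\setminus\hat\calb$ it involves) and playing the high rank of $\hat\calb$ against the minimality. Once such a $j>C_0$ is secured, the factor $\calb'$ built exactly as above satisfies both $\calb'\preceq_{sem}\calb$ and $\calb'\preceq_{syn}\hat\calb$, completing the proof.
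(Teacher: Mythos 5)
Your exchange step, the choice of $j$ among the \emph{top-degree} nonzero coefficients, the semantic-refinement argument, and the decrement $\kappa$ (taking the worst case over all degrees present, which reduces to the smallest $d'$ with $i_{d'}>0$ in the anti-lexicographic order) all match the paper's proof and are correct.

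The gap is in the ``moreover'' clause, and it is exactly where you flag uncertainty. After restricting to $\beta=\alpha|_{[C_0]}$ you write
$P_\beta=F(Q_1,\dots,Q_{r'})-\sum_{i>C_0,\ \alpha_i\neq0}\alpha_iP_i$ and count this as expressing $P_\beta$ via $r'+t$ polynomials of degree $<d_\beta$. That count is an overcount: the entire sum $\sum_{i>C_0}\alpha_iP_i$ is itself a \emph{single} polynomial, and — precisely because you assumed all top-degree indices lie in $[C_0]$ — this single polynomial already has degree $<d_\beta$. So $P_\beta$ is a function of $r'+1$ (not $r'+t$) polynomials of lower degree. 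Since $r'\le r(|\calb|)$, this immediately contradicts $\mathrm{rank}(\hat\calb)\ge r(|\calb|)+1$, with no minimality argument needed. The minimality-over-$t$ route you sketch does not obviously close the gap: even an $\alpha$ minimizing $t$ may have $t\ge2$, in which case $r'+t$ can exceed $r(|\calb|)+1$ and your comparison fails. The paper's proof uses exactly the combining observation above and is otherwise identical to yours, so the issue is a single missed algebraic simplification rather than a wrong approach.
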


\begin{proof}
We will deal with the first case, and then show how to modify the proof for the case where being a syntactic refinement of some $\hat{\calb}$ of the appropriate rank must be preserved.

Let $p_1,\ldots,p_C$ be the defining polynomials for $\calb$, where $C=|\calb|$. Suppose there is a linear combination over $\F$ that shows that $\calb$ has a rank smaller than $r(C)$. This means that for some $(\alpha_1,\ldots,\alpha_C)\in\F^C\setminus\{0^C\}$, some arbitrary function $B:\F^l\to\F$ and polynomials $q_1,\ldots,q_l$ we have $\sum_{j=1}^C\alpha_jp_j(x)=B(q_1(x),\ldots,q_l(x))$ for every $x\in\F^n$, where $l<r(C)$ and every $q_i$ is of degree smaller than $\max\{\deg(p_j)|\alpha_j\neq 0\}$ (a possible special case is where $l=0$ and $B$ is a constant).

We select $j_0$ so that $\alpha_{j_0}\neq 0$ and $\deg(p_{j_0})=\max\{\deg(p_j)|\alpha_j\neq 0\}$, and construct $\calb'$ by replacing $p_{j_0}$ with $q_1,\ldots,q_l$. This is clearly a semantic refinement of $\calb$ of complexity bounded by $h(C)=C+r(C)-1$. Also, if $I=(i_1,\ldots)$ was the degree index of $\calb$, then the degree index of $\calb'$ is bounded above by the following $\kappa(C,I)=(j_1,\ldots)$: Letting $k$ be the smallest number such that $i_k>0$, we set $j_k=i_k-1$, and if $k>1$ then we set $j_{k-1}=i_{k-1}+r(C)-1$; all other coordinates of $\kappa(C,I)$ are set equal to the respective coordinates of $I$.

The above argument provides us with $h_{\ref{lem:robrank}}^{(r)}$ and $\kappa_{\ref{lem:robrank}}^{(r)}$ as required.

Now we deal with an existing $\hat{\calb}$ as above. We follow the same argument, but argue that we can find $j_0$ for which $\alpha_{j_0}\neq 0$ that corresponds to a maximal degree polynomial, satisfying additionally $j_0>\hat{C}=|\hat{\calb}|$. Assuming otherwise, we would find a counter example to the rank assumption on $\hat{B}$: We would get that $\sum_{j=1}^{\hat{C}}\alpha_jp_j(x)$ can be expressed as a function of $q_1,\ldots,q_l$ and $q_{l+1}=\sum_{j=\hat{C}+1}^C\alpha_jp_j$, which would all be of lower degree than $\max\{\deg(p_j)|1\leq j\leq \hat{C}, \alpha_j\neq 0\}=\max\{\deg(p_j)|\alpha_j\neq 0\}$, and would hence violate the rank of $\hat{\calb}$.
\end{proof}

Now we can combine the above two lemmas and prove the existence of high rank refinements.

\begin{lemma}\label{lem:rank}
There exists $D_{\ref{lem:rank}}^{(d,r)}(k)$ which takes two numbers $k$ and $d$ and a monotone function $r:\N\to\N$, and satisfies the following. For every factor $\calb$ of bounded degree $d$ there is a semantic refinement $\calb'$ for which $|\calb'|\leq D_{\ref{lem:rank}}^{(d,r)}(|\calb|)$, is of bounded degree $d$ and has rank at least $r(|\calb'|)$.

Moreover, if $\calb$ is in itself a syntactic refinement of some $\hat{\calb}$ that is of rank at least $r(D_{\ref{lem:rank}}^{(d,r)}(|\calb|))+1$, then additionally $\calb'$ will be a syntactic refinement of $\hat{\calb}$.
\end{lemma}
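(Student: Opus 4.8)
The plan is to iterate Lemma~\ref{lem:robrank}, controlling the iteration with the decrement machinery of Lemma~\ref{lem:decbound}. Set $\calb_0=\calb$. Given a factor $\calb_i$ of degree at most $d$: if $\calb_i$ already has rank at least $r(|\calb_i|)$, halt; otherwise invoke Lemma~\ref{lem:robrank} (with the monotone function $r$) to produce a semantic refinement $\calb_{i+1}$ of $\calb_i$ with $|\calb_{i+1}|\le h_{\ref{lem:robrank}}^{(r)}(|\calb_i|)$ and $\indm(\calb_{i+1})\le \kappa_{\ref{lem:robrank}}^{(r)}(|\calb_i|,\indm(\calb_i))$. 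Because Lemma~\ref{lem:robrank} only replaces a single defining polynomial by polynomials of strictly smaller degree, $\calb_{i+1}$ again has degree at most $d$; and since semantic refinement is transitive, each $\calb_i$ is a semantic refinement of $\calb$.

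Next I would feed this sequence to Lemma~\ref{lem:decbound} with $k=|\calb|$, $h=h_{\ref{lem:robrank}}^{(r)}$ (monotone, by our standing convention) and $\kappa=\kappa_{\ref{lem:robrank}}^{(r)}$ (a decrement, as asserted there): the constraints on consecutive terms are exactly those just established, so every initial segment $\calb_0,\dots,\calb_i$ of the iteration satisfies $|\calb_i|\le T_{\ref{lem:decbound}}(|\calb|,d,h_{\ref{lem:robrank}}^{(r)},\kappa_{\ref{lem:robrank}}^{(r)})$, and any such segment $\calb_0,\dots,\calb_m$ has $m\le m_{\ref{lem:decbound}}(|\calb|,d,h_{\ref{lem:robrank}}^{(r)},\kappa_{\ref{lem:robrank}}^{(r)})$. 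Hence the iteration halts at some $\calb_m$, and by construction halting means $\calb_m$ has rank at least $r(|\calb_m|)$. Putting $D_{\ref{lem:rank}}^{(d,r)}(k)=T_{\ref{lem:decbound}}(k,d,h_{\ref{lem:robrank}}^{(r)},\kappa_{\ref{lem:robrank}}^{(r)})$ and $\calb'=\calb_m$ gives a semantic refinement of $\calb$ of degree at most $d$ with $|\calb'|\le D_{\ref{lem:rank}}^{(d,r)}(|\calb|)$ and rank at least $r(|\calb'|)$, which is the first assertion; monotonicity of $D_{\ref{lem:rank}}^{(d,r)}$ is inherited from the functions supplied by the two quoted lemmas.

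For the ``moreover'' clause I would run the same iteration and prove, by induction on $i$, that every $\calb_i$ is a syntactic refinement of $\hat\calb$. The base case is the hypothesis $\calb_0=\calb\preceq\hat\calb$. For the inductive step, the already-built prefix $\calb_0,\dots,\calb_i$ meets the hypotheses of Lemma~\ref{lem:decbound}, so $|\calb_i|\le D_{\ref{lem:rank}}^{(d,r)}(|\calb|)$; as $r$ is monotone, the rank of $\hat\calb$, assumed to be at least $r(D_{\ref{lem:rank}}^{(d,r)}(|\calb|))+1$, is in particular at least $r(|\calb_i|)+1$, so the ``moreover'' clause of Lemma~\ref{lem:robrank} applies at step $i$ and makes $\calb_{i+1}$ a syntactic refinement of $\hat\calb$ as well. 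In particular the output $\calb'=\calb_m$ is a syntactic refinement of $\hat\calb$, as desired.

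I expect the only delicate point — really bookkeeping rather than a genuine obstacle — to be the apparent circularity in the ``moreover'' argument: applying Lemma~\ref{lem:robrank} at step $i$ presupposes the bound $|\calb_i|\le D_{\ref{lem:rank}}^{(d,r)}(|\calb|)$, so Lemma~\ref{lem:decbound} must be invoked on the prefix constructed so far \emph{before} each step. Beyond that one only needs to keep the semantic/syntactic distinction straight (the output $\calb'$ is merely a semantic refinement of $\calb$, though a syntactic refinement of $\hat\calb$) and to check that the auxiliary functions remain monotone, both of which are routine given the standing conventions and the already-proved lemmas.
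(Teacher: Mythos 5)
Your proof is correct and follows essentially the same route as the paper's: iterate Lemma~\ref{lem:robrank} until the rank condition is met, bound the iteration via Lemma~\ref{lem:decbound}, and invoke the ``moreover'' clause of Lemma~\ref{lem:robrank} (together with monotonicity of $r$ and the size bound on the prefix) to preserve the syntactic-refinement relation to $\hat\calb$. The only difference is that you spell out the induction on $i$ for the ``moreover'' part, whereas the paper dismisses it in one sentence; this is just added detail, not a different argument.
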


\begin{proof}
We set $D_{\ref{lem:rank}}^{(d,r)}(k)=T_{\ref{lem:decbound}}(k,d,h_{\ref{lem:robrank}}^{(r)},\kappa_{\ref{lem:robrank}}^{(r)})$. We set $\calb_0=\calb$, and as long as $\calb_i$ is of rank less than $r(|\calb_i|)$ we move to a semantic refinement $\calb_{i+1}$ as guaranteed by Lemma \ref{lem:robrank}. By Lemma \ref{lem:decbound} the sequence $\calb_0,\calb_1,\ldots$ has length bounded by $m_{\ref{lem:decbound}}(k,d,h_{\ref{lem:robrank}}^{(r)},\kappa_{\ref{lem:robrank}}^{(r)})$, and the final factor $\calb_l$ is of rank at least $r(|\calb_l|)$ (otherwise we could have continued the sequence) and of complexity bounded by $T_{\ref{lem:decbound}}(k,d,h_{\ref{lem:robrank}}^{(r)},\alpha_{\ref{lem:robrank}}^{(r)})$.

For the case of a prior factor $\hat{\calb}$ we just use the corresponding case of Lemma \ref{lem:robrank}.
\end{proof}

Now we finally state the main technical lemma that we will use for our decompositions. It will find a refinement that is both robust and of high rank, while not breaking a given syntactic refinement relation to a high rank factor if one exists.

\begin{lemma}[main robustness lemma]\label{lem:rankrob}
For an appropriate function $T_{\ref{lem:rankrob}}(k,h,d,r,\gamma)$, for any $\calb$ of degree bound $d$, monotone $h:\N\to\N$ and $r:\N\to\N$, and $\gamma\in(0,1)$ there exists a semantic refinement $\calb'$ of $\calb$ which is of rank at least $r(|\calb'|)$ and $(h,\gamma)$-robust, for which $|\calb'|\leq T_{\ref{lem:rankrob}}(|\calb|,h,d,r,\gamma)$.

Moreover, if $\calb$ is in itself a syntactic refinement of some $\hat{\calb}$ that is of rank at least $r(T_{\ref{lem:rankrob}}(|\calb|,h,d,r,\gamma))+1$, then additionally $\calb'$ will be a syntactic refinement of $\hat{\calb}$ (this holds also for the case where $\calb=\hat{\calb}$).
\end{lemma}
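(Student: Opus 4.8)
The plan is to combine the two engines we have already built: the ``robustness'' machinery of Observation~\ref{obs:rob}, which produces $(h,\gamma)$-robust refinements but destroys rank, and Lemma~\ref{lem:rank}, which produces high-rank refinements but knows nothing about robustness or the density index. The difficulty is that each of these operations can undo the work of the other: passing to a high-rank semantic refinement may introduce a new semantic refinement that sharply increases $\indd$, while the syntactic-completion step used to realize robustness (Observation~\ref{obs:silly}) can collapse the rank. So I would set this up as an iteration in which we alternate the two operations, and argue that it must terminate because the density index $\indd$ is bounded in $[0,1]$ and jumps up by a fixed $\gamma$ each time we are forced to re-robustify.

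Concretely, first I would fix a composition of the two size-growth functions: let $H(k)$ be a function that dominates both $h$ and the size blowup of Lemma~\ref{lem:rank} applied with parameter $r$ (and iterated as needed), so that any single ``robustify then fix rank'' round starting from a factor of size $\le k$ lands inside $H(k)$. Start with $\calb_0 = \calb$. Given $\calb_i$, first apply Lemma~\ref{lem:rank} to get a semantic refinement $\calb_i'$ of bounded degree $d$ with rank at least $r(|\calb_i'|)$; then apply Observation~\ref{obs:rob} to $\calb_i'$ with the function $h$ (suitably padded so that after the next rank-fixing step robustness is still meaningful, using Observation~\ref{obs:robspr}) to obtain an $(h',\gamma)$-robust syntactic refinement $\calb_{i+1}$. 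If at this point $\calb_{i+1}$ already has rank $\ge r(|\calb_{i+1}|)$ and is $(h,\gamma)$-robust we stop; otherwise we iterate. The key observation is that whenever we are forced to re-robustify in a nontrivial way, $\indd$ has increased by at least $\gamma$ since the last time the factor was robust, and since $\indd$ is monotone nondecreasing under refinement (Cauchy--Schwarz, as in Observation~\ref{obs:robspr}) and bounded by $1$, this can happen at most $1/\gamma$ times. Between those events the rank-fixing step of Lemma~\ref{lem:rank} does not decrease $\indd$, so no progress is lost. The size bound $T_{\ref{lem:rankrob}}$ is then obtained by iterating $H$ a number of times controlled by $1/\gamma$ and the decrement bound hidden inside Lemma~\ref{lem:rank}.

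For the ``moreover'' clause, the point is that both of the two operations we use preserve syntactic refinement of a sufficiently-high-rank base factor $\hat\calb$: Lemma~\ref{lem:rank} does so explicitly provided $\hat\calb$ has rank at least $r(\cdot)+1$ evaluated at the relevant size bound, and Observation~\ref{obs:rob} produces a \emph{syntactic} refinement of the factor it is given, hence of $\hat\calb$ as well. So as long as the rank of $\hat\calb$ is taken large enough to survive the worst-case final size $T_{\ref{lem:rankrob}}(|\calb|,h,d,r,\gamma)$ — which is exactly the hypothesis stated — every $\calb_i$ in the iteration, and in particular the final $\calb'$, remains a syntactic refinement of $\hat\calb$. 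The case $\calb = \hat\calb$ is automatic since $\calb$ is vacuously a syntactic refinement of itself.

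The main obstacle I anticipate is bookkeeping the interaction between the two size-growth functions and the robustness parameter: robustness of $\calb_{i+1}$ is only useful if it survives the subsequent rank-fixing refinement, which requires feeding Observation~\ref{obs:rob} a padded function $g\circ h$ (in the spirit of Observation~\ref{obs:robspr}) where $g$ absorbs the size blowup of the next Lemma~\ref{lem:rank} call — but that blowup itself depends on the size we are padding for, so one must be a little careful that the recursion closes. This is a fixed-point-style argument on the growth functions rather than a deep difficulty; the monotonicity conventions adopted in the Note after Observation~\ref{obs:rob} are exactly what make it go through. The termination count $1/\gamma$ and the final size bound are then routine.
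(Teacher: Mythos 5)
Your plan is correct, but it takes a genuinely different route from the paper and works harder than necessary. The paper's proof is a one-shot argument, not an iteration: it applies Observation~\ref{obs:rob} to $\calb$ \emph{once}, with the pre-padded function $h\circ D_{\ref{lem:rank}}^{(d,r)}$, producing a syntactic refinement $\calb_1$ that is $(h\circ D_{\ref{lem:rank}}^{(d,r)},\gamma)$-robust; it then applies Lemma~\ref{lem:rank} to $\calb_1$ \emph{once}, producing $\calb'$ with $|\calb'|\le D_{\ref{lem:rank}}^{(d,r)}(|\calb_1|)$ and high rank. Observation~\ref{obs:robspr} then immediately shows $\calb'$ is $(h,\gamma)$-robust, and there is nothing left to do. The whole point of pre-padding the robustness budget by the rank-fixing blowup $D_{\ref{lem:rank}}^{(d,r)}$ is that the robustification survives the single rank-fixing step, so no re-robustification is ever needed. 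Your proposal instead runs the two operations in an explicit loop (in the opposite order: rank-fix first, then robustify), and terminates it with the observation that each nontrivial re-robustification raises $\indd$ by at least $\gamma$, giving a $\lceil 1/\gamma\rceil$ bound on the number of rounds. That termination argument is sound, and transitivity of syntactic refinement together with monotonicity handles the ``moreover'' clause as you say, so the proposal does prove the lemma. What you lose relative to the paper is elegance and growth rate: the paper's $T_{\ref{lem:rankrob}}$ is a single composition $D_{\ref{lem:rank}}^{(d,r)}\circ T_{\ref{obs:rob}}$, while your scheme iterates an $H$-like blowup $O(1/\gamma)$ times. More importantly, you re-derive at this level the very ``$\indd$ is bounded, so refinement loops terminate'' argument that the robustness framework (Observation~\ref{obs:rob}) was introduced to encapsulate; the paper's stated design goal (Section~\ref{sec:map}) of avoiding ``iterated arguments of iterated arguments'' is exactly about not doing this. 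Finally, a small notational slip: you describe the padding as ``$g\circ h$ where $g$ absorbs the size blowup,'' but Observation~\ref{obs:robspr} requires the blowup to sit in the \emph{inner} slot; to emerge $(h,\gamma)$-robust after a refinement of size at most $D(\cdot)$, one must start $(h\circ D,\gamma)$-robust, not $(D\circ h,\gamma)$-robust. This would get sorted out when writing the details, but as stated the composition is backward.
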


\begin{proof}
We set $T_{\ref{lem:rankrob}}(k,h,d,r,\gamma)=D_{\ref{lem:rank}}^{(d,r)}\left(T_{\ref{obs:rob}}(k,h\circ D_{\ref{lem:rank}}^{(d,r)},\gamma)\right)$. Given $\calb$, we first use Lemma \ref{obs:rob} to find $\calb_1$ that is a syntactic refinement of $\calb$ and is $(h\circ D_{\ref{lem:rank}}^{(d,r)},\gamma)$-robust. We then let $\calb'$ be its semantic refinement according to Lemma \ref{lem:rank} that is of rank $r(|\calb'|)$. The complexity of $\calb'$ is at most $D_{\ref{lem:rank}}^{(d,r)}(|\calb_1|)$, and hence (apart from being bounded by the above $T_{\ref{lem:rankrob}}(|\calb|,h,d,r,\gamma)$) by Observation \ref{obs:robspr} it is $(h,\gamma)$-robust as required.

For the case where there is a prior factor $\hat{\calb}$ of the stated rank, we just use the corresponding case of Lemma \ref{lem:rank}.
\end{proof}

\section{Decomposition Theorems}\label{sec:decomp}

We use here the tools of the previous section to prove two decomposition theorems. First we state and prove the strong decomposition theorem (it is called ``strong'' on account of also guaranteeing high rank); similar theorems were proved in previous works, and we only make a seemingly small (yet crucial to what follows) addition that preserves a given syntactic refinement relation. Then we state and prove the super decomposition theorem, which uses the strong decomposition theorem (or more accurately the main lemma implying it) as a lemma.

Super decomposition provides us with two successive factors, one being a syntactic refinement of the other. For the testing proofs, instead of using it directly, we will use a corollary that ``chooses'' out of the finer factor only one representative for each of the cells of coarser factor. This is done in the subcell selection corollary. The resulting representatives will satisfy properties that are stronger than what {\em any} one factor can satisfy by itself.

\subsection{Strong Decomposition}\label{subsec:strong}
First, a corollary of Theorem \ref{thm:git}.

\begin{lemma}\label{lem:inc}
For $d<p$, suppose that $\calb$ is a polynomial factor of degree $d$ and complexity
$C$, and suppose $f: \F_p^n \to \zo$ is such that
$\|f-\E[f|\calb]\|_{U^{d+1}} \geq \delta$. Then, there exists a refined
polynomial factor $\calb'$ of degree $d$ and complexity at most $C+1$
such that:
$$\|\E[f|\calb']\|_2^2 \geq \|\E[f|\calb]\|_2^2 +
(\eps_{\ref{thm:git}}(\delta,p))^2$$
where $\eps_{\ref{thm:git}}$ is the function in Theorem \ref{thm:git}.
\end{lemma}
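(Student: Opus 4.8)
The plan is to apply the Gowers Inverse Theorem (Theorem~\ref{thm:git}) to the function $g = f - \E[f|\calb]$ to extract a structured polynomial correlating with $g$, and then to add that polynomial to the factor $\calb$, showing that doing so increases the $L^2$-mass of the conditional expectation by the square of the correlation. First I would note that $g$ is bounded in $[-1,1]$ since $f$ takes values in $\{0,1\}$ and $\E[f|\calb]$ is an average of such values. By hypothesis $\|g\|_{U^{d+1}} \geq \delta$, so Theorem~\ref{thm:git} (applicable since $d < p$) yields a polynomial $P : \F_p^n \to \F_p$ of degree at most $d$ with $|\E_x[g(x)\cdot\expo{P(x)}]| \geq \eps_{\ref{thm:git}}(\delta,p) =: \eps$. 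Define $\calb'$ to be the factor obtained by appending $P$ to the defining sequence of $\calb$; then $\calb'$ is a syntactic refinement of $\calb$, has degree at most $d$ (here is where $d<p$ keeps us in the right degree range), and complexity at most $C+1$.

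Next I would relate the correlation of $g$ with $\expo{P}$ to the increase in density index. The key observation is that $\E[g \mid \calb] = \E[f \mid \calb] - \E[f \mid \calb] = 0$, because $\E[f|\calb]$ is already constant on cells of $\calb$; hence, since $\expo{P}$ is constant on cells of $\calb'$ but may vary across sub-cells within a cell of $\calb$, we have $\E_x[g(x)\expo{P(x)}] = \E_x[\E[g|\calb'](x)\,\expo{P(x)}]$ by the tower property of conditional expectation (conditioning on $\calb'$, on whose cells $\expo{P}$ is constant). Then by Cauchy--Schwarz,
$$\eps^2 \leq \left|\E_x[\E[g|\calb'](x)\,\expo{P(x)}]\right|^2 \leq \E_x\big[|\E[g|\calb'](x)|^2\big]\cdot\E_x\big[|\expo{P(x)}|^2\big] = \|\E[g|\calb']\|_2^2,$$
using $|\expo{P(x)}| = 1$. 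Finally, $\E[g|\calb'] = \E[f|\calb'] - \E[\,\E[f|\calb]\mid\calb'] = \E[f|\calb'] - \E[f|\calb]$, the last equality because $\E[f|\calb]$ is $\calb'$-measurable (as $\calb' \preceq \calb$). So $\|\E[f|\calb'] - \E[f|\calb]\|_2^2 \geq \eps^2$, and by the Pythagorean identity for conditional expectations (orthogonality of $\E[f|\calb']-\E[f|\calb]$ to $\E[f|\calb]$ in $L^2$), $\|\E[f|\calb']\|_2^2 = \|\E[f|\calb]\|_2^2 + \|\E[f|\calb']-\E[f|\calb]\|_2^2 \geq \|\E[f|\calb]\|_2^2 + \eps^2$, which is exactly the claimed bound.

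I do not expect a serious obstacle here; the statement is essentially the standard ``energy increment'' step that drives decomposition/regularity arguments, packaged with the Gowers inverse theorem. The one point requiring a little care is the bookkeeping with conditional expectations — making sure $\expo{P}$ really is measurable with respect to $\calb'$ but not necessarily $\calb$, and that $\E[g|\calb]=0$ — but this follows directly from the definitions in Subsection~\ref{subsec:poly} once one writes out the tower property. A second minor point is checking that the degree does not exceed $d$ when we append $P$; this is guaranteed because Theorem~\ref{thm:git} produces a polynomial of degree at most $d$, and is precisely where the hypothesis $d<p$ enters (it is needed for the inverse theorem itself, not for the factor manipulation).
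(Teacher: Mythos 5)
Your proof is correct and follows essentially the same route as the paper: apply the Gowers inverse theorem to $g = f - \E[f|\calb]$, append the resulting polynomial $P$ to $\calb$, and convert the correlation with $\expo{P}$ into an $L^2$-energy increment via the Pythagorean identity. The only cosmetic difference is that you bound $\|\E[g|\calb']\|_2$ directly by Cauchy--Schwarz against $\expo{P}$, whereas the paper first bounds $\|\E[g|\calb']\|_1$ and then uses $\|\cdot\|_2 \geq \|\cdot\|_1$; both are the same energy-increment step.
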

\begin{proof}
$g= f-\E[f|\calb]$ is bounded to $[-1,1]$. So, applying Theorem
\ref{thm:git} yields a degree-$d$ polynomial $P$ satisfying
$|\E[g(x) \cdot \expo{P(x)}]| \geq \eps_{\ref{thm:git}}(\delta,p)$. The
polynomial $P$ generates a factor $\hat{\calb}$ of complexity
$1$. Define $\calb'$ to be the common refinement of $\calb$ and
$\hat{\calb}$ (by adding $P$ to the polynomials defining $\calb$); its complexity is $C+1$.

Observe that:
\begin{align*}
&\|\E[g|\calb']\|_1 = \E_x\left[|\E[g|\calb'](x)|\right] =
\E_x\left[|\E[g|\calb'](x)\cdot \expo{P(x)}|\right] \\
& \geq \left| \E_x\left[\E[g|\calb'](x)\cdot \expo{P(x)}\right] \right| =
\left|\E_x\left[g(x) \cdot
\expo{P(x)}\right]\right| \geq \eps_{\ref{thm:git}}(\delta,p)
\end{align*}
where the second equality is simply due to $|\expo{P(x)}|=1$, and the third equality uses the fact that $P$ is constant on each atom of $\calb'$.
Now finally:
%\begin{align*}
$$\|\E[f|\calb']\|_2^2 - \|\E[f|\calb]\|_2^2 = \|\E[f|\calb'] -
\E[f|\calb]\|_2^2 = \|\E[g|\calb']\|_2^2 \geq \|\E[g|\calb']\|_1^2
\geq \eps_{\ref{thm:git}}^2(\delta,p)$$
%\end{align*}
where the first equality uses the fact that $\calb'$ is a refinement of $\calb$.
\end{proof}

The contra-positive of the above provides us with a function decomposition given a sufficiently robust polynomial factor.

\begin{lemma}\label{lem:robdec1}
For any $\eta$ and $d<p$ there exist $h_{\ref{lem:robdec1}}:\N\to\N$ and $\gamma_{\ref{lem:robdec1}}(\eta,p)$, so that if $\calb$ is $(h_{\ref{lem:robdec1}},\gamma_{\ref{lem:robdec1}}(\eta,p))$-robust (with respect to $f$) among factors of degree bound $d$ over $\F_p^n$, then there is a decomposition $f=f_1+f_2$ where $f_1$ is constant over every atom of $\calb$ and ranges in $[0,1]$, and $f_2$ satisfies $\|f_2\|_{U^k} \leq \eta$ and ranges in $[-1,1]$.
\end{lemma}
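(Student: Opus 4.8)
The plan is to read the decomposition directly off the factor and to obtain the Gowers bound as the contrapositive of Lemma \ref{lem:inc}. Concretely, I would set $f_1 = \E[f|\calb]$ and $f_2 = f - \E[f|\calb]$. The structural requirements on $f_1,f_2$ are then immediate: $f_1$ is constant on every atom of $\calb$ by the definition of conditional expectation over a factor, and since $f$ is $\{0,1\}$-valued, $f_1$ is an average of $\{0,1\}$-values and hence ranges in $[0,1]$; consequently $f_2 = f - f_1$ ranges in $[-1,1]$. So the only thing left to establish is the Gowers-norm bound $\|f_2\|_{U^{d+1}} \leq \eta$ (the relevant Gowers norm for a degree-$d$ factor is $U^{d+1}$ via Theorem \ref{thm:git}; for smaller orders the bound follows by monotonicity of the Gowers norms).

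The key step is the choice of parameters together with a one-line contradiction argument. I would take $h_{\ref{lem:robdec1}}(C) = C+1$ and $\gamma_{\ref{lem:robdec1}}(\eta,p) = \bigl(\eps_{\ref{thm:git}}(\eta,p)\bigr)^2$. Suppose for contradiction that $\|f - \E[f|\calb]\|_{U^{d+1}} > \eta$. Then Lemma \ref{lem:inc}, applied with $\delta = \eta$ (legitimate since $d < p$), produces a polynomial factor $\calb'$ of degree $d$, of complexity at most $|\calb| + 1 = h_{\ref{lem:robdec1}}(|\calb|)$, refining $\calb$, and with $\indd(\calb') = \|\E[f|\calb']\|_2^2 \geq \|\E[f|\calb]\|_2^2 + \bigl(\eps_{\ref{thm:git}}(\eta,p)\bigr)^2 = \indd(\calb) + \gamma_{\ref{lem:robdec1}}(\eta,p)$. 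But $\calb'$ is obtained from $\calb$ by appending a polynomial, so it is a syntactic — hence in particular a semantic — refinement of $\calb$, and therefore it witnesses a violation of the $(h_{\ref{lem:robdec1}}, \gamma_{\ref{lem:robdec1}})$-robustness of $\calb$. This contradiction yields $\|f_2\|_{U^{d+1}} \leq \eta$, as desired.

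I do not expect a genuine obstacle: the lemma is essentially a repackaging of Lemma \ref{lem:inc} in the robustness language. The two small points to watch are (i) robustness is phrased in terms of semantic refinements whereas Lemma \ref{lem:inc} yields a syntactic one, so the argument relies precisely on the implication ``syntactic $\Rightarrow$ semantic'' from Definition \ref{def:refine}; and (ii) the bookkeeping convention following Observation \ref{obs:rob} — I would note that $h_{\ref{lem:robdec1}}(C) = C+1$ is monotone non-decreasing and that $\gamma_{\ref{lem:robdec1}}(\eta,p)$ can be taken monotone (increasing in $\eta$), inheriting this from $\eps_{\ref{thm:git}}$, so that the output functions satisfy the global monotonicity assumption the paper imposes on such lemmas.
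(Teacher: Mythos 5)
Your proposal matches the paper's proof exactly: same choice $h_{\ref{lem:robdec1}}(C)=C+1$, $\gamma_{\ref{lem:robdec1}}(\eta,p)=\bigl(\eps_{\ref{thm:git}}(\eta,p)\bigr)^2$, same decomposition $f_1=\E[f|\calb]$, $f_2=f-\E[f|\calb]$, and same contrapositive application of Lemma \ref{lem:inc} via the robustness hypothesis. The observations you flag about syntactic-vs-semantic refinement and monotonicity are sound and only make explicit what the paper leaves implicit.
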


\begin{proof}
We set simply $h_{\ref{lem:robdec1}}(k)=k+1$ and $\gamma_{\ref{lem:robdec1}}(\eta,p)=\eps_{\ref{thm:git}}(\eta,p)^2$. Given $\calb$ as above we set $f_1=\E[f|\calb]$ and $f_2=f-\E[f|\calb]$. These functions clearly have the required ranges. The robustness condition of $\calb$ implies the contra-positive of the conclusion of Lemma \ref{lem:inc}, and so we must have $\|f_2\|_{U^k} \leq \eta$ as required.
\end{proof}

However, we would like to make the Gowers norm bound also a function of $|\calb|$. For this we will decompose $f$ into three functions, where the third ``error term'' function has a bound on its $l_2$ norm. In fact an $l_2$ norm bound is what we need for an error term, but to reach even a constant $l_2$ norm bound we cannot avoid having also the function that has ``only'' a Gowers norm bound.

\begin{lemma}\label{lem:robdec2}
For any $d<p$,  $\delta$ and $\eta:\N\to \R^+$ there exist $h_{\ref{lem:robdec2}}^{(\eta,p)}:\N\to\N$ and $\gamma_{\ref{lem:robdec2}}(\delta)$, so that if $\calb$ is $(h_{\ref{lem:robdec2}}^{(\eta,p)},\gamma_{\ref{lem:robdec2}}(\delta))$-robust (with respect to $f$) among factors of degree bound $d$, then there is a decomposition $f=f_1+f_2+f_3$ where $f_1$ is constant over every atom of $\calb$ and ranges in $[0,1]$, $f_2$ satisfies $\|f_2\|_{U^k} \leq \eta(|\calb|)$ and ranges in $[-1,1]$, and $f_3$ ranges in $[-1,1]$ and satisfies $\|f_3\|_2\leq\delta$, where $f_1+f_3$ also ranges in $[0,1]$.
\end{lemma}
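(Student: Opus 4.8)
The idea is to run a two-level version of the robust decomposition from Lemma \ref{lem:robdec1}: first decompose $f$ coarsely to get a constant-over-cells part $f_1$, then show that if the resulting Gowers part ``$f_2$'' is not small in $L^2$ we can refine, and since the factor is robust at a sufficiently strong refinement scale we cannot refine much, so most of the $L^2$ mass of the remainder has already been captured by $f_1$. Concretely, set $f_1 = \E[f|\calb]$. We want to split $f - f_1$ into a genuinely Gowers-small piece $f_2$ plus an $L^2$-small piece $f_3$. The natural thing is to apply Lemma \ref{lem:robdec1} \emph{not} to $\calb$ itself but to a robust refinement whose robustness parameters are chosen so that the Gowers norm of the leftover is forced below $\eta(|\calb'|)$; but we actually want the Gowers bound to depend on $|\calb|$, the \emph{coarse} size, which is what makes this subtler than Lemma \ref{lem:robdec1}.

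\textbf{Key steps.} First, feed $\calb$ into Observation \ref{obs:rob} with an appropriately engineered pair $(h,\gamma)$ to obtain a syntactic refinement $\calb^\ast$ that is $(h_{\ref{lem:robdec1}},\gamma_{\ref{lem:robdec1}}(\eta(|\calb|),p))$-robust among degree-$d$ factors; here the robustness threshold is keyed to $\eta$ evaluated at $|\calb|$, which is legitimate because $|\calb|$ is fixed once $\calb$ is given. Second, apply Lemma \ref{lem:robdec1} to $\calb^\ast$: this yields $f = g_1 + g_2$ with $g_1 = \E[f|\calb^\ast]$ constant on cells of $\calb^\ast$ and $\|g_2\|_{U^k} \leq \eta(|\calb|)$, $g_2 \in [-1,1]$. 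Third, set $f_2 = g_2$ (so the Gowers bound $\eta(|\calb|)$ is exactly as desired) and $f_3 = g_1 - f_1 = \E[f|\calb^\ast] - \E[f|\calb]$. The ranges are immediate: $f_1 \in [0,1]$, $f_2 \in [-1,1]$, $f_3 = g_1 - f_1 \in [-1,1]$, and $f_1 + f_3 = g_1 = \E[f|\calb^\ast] \in [0,1]$. Fourth, and this is the real content, bound $\|f_3\|_2$. We have
\[
\|f_3\|_2^2 = \|\E[f|\calb^\ast] - \E[f|\calb]\|_2^2 = \indd(\calb^\ast) - \indd(\calb),
\]
using that $\calb^\ast$ refines $\calb$ (the Pythagorean identity for nested conditional expectations, exactly as in the last display of Lemma \ref{lem:inc}'s proof). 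So it suffices to make $\indd(\calb^\ast) - \indd(\calb) \leq \delta^2$. That is \emph{not} automatic from Observation \ref{obs:rob}'s construction; to enforce it we instead obtain $\calb^\ast$ by first applying Observation \ref{obs:rob} with $\gamma = \min\{\gamma_{\ref{lem:robdec1}}(\eta(|\calb|),p),\, \delta^2\}$ — wait, that's still not it, because robustness with small $\gamma$ bounds how much a \emph{further} refinement can increase the index, not how much $\calb^\ast$ itself exceeds $\calb$. The correct fix: interpose the construction so that $\calb^\ast$ is reached from $\calb$ through refinements each of which strictly increases $\indd$, so $\indd(\calb^\ast) - \indd(\calb) \le 1$ trivially but we need $\le \delta^2$; so instead we run Observation \ref{obs:rob}'s iteration but stop at the \emph{first} robust factor, and absorb the excess differently.

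\textbf{Cleaner route and the main obstacle.} The clean way, which I expect is the intended one, is: apply Observation \ref{obs:rob} to $\calb$ with parameter $\gamma_{\ref{lem:robdec2}}(\delta) \eqdef \delta^2$ (and with $h$ to be fixed below) to get a $(\cdot,\delta^2)$-robust syntactic refinement $\calb'$; now \emph{by robustness of $\calb'$ itself} no small refinement can boost $\indd$ by $\delta^2$, so in particular $\|f - \E[f|\calb']\|_{U^k}$ must be small — but only if $h$ was chosen as $h_{\ref{lem:robdec1}} = k \mapsto k+1$ won't suffice since the Gowers bound from Lemma \ref{lem:robdec1} needs $\gamma_{\ref{lem:robdec1}}(\eta(|\calb'|),p)$, not $\delta^2$. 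The resolution is to take $\calb'$ robust with the \emph{smaller} of the two thresholds and with $h$ composed appropriately: set $\gamma^\ast = \min\{\delta^2, \gamma_{\ref{lem:robdec1}}(\eta \circ T_{\ref{obs:rob}}(|\calb|,\cdot,\cdot),p)\}$ — i.e. bound $|\calb'|$ a priori by $T_{\ref{obs:rob}}$ and key $\eta$ to that bound, giving a genuine function of $|\calb|$ after composing through the monotonicity conventions. Then $f_1 = \E[f|\calb]$, $f_3 = \E[f|\calb'] - \E[f|\calb]$ with $\|f_3\|_2^2 = \indd(\calb') - \indd(\calb)$, and $f_2 = f - \E[f|\calb']$ with $\|f_2\|_{U^k} \le \eta(|\calb|)$ from Lemma \ref{lem:robdec1} applied to $\calb'$ (which is $(h_{\ref{lem:robdec1}},\gamma_{\ref{lem:robdec1}})$-robust since $\gamma^\ast \le \gamma_{\ref{lem:robdec1}}$). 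Finally $\|f_3\|_2 \le \delta$: since $\calb'$ is the first robust factor produced by the iteration, each intermediate step raised $\indd$ by at least $\gamma^\ast$, but we need the \emph{total} raise bounded by $\delta^2$; this forces us to instead take $\calb'$ itself to be a \emph{single} small refinement step when $\indd$ would jump, and iterate only finitely — the honest statement being that we choose $\calb'$ robust with threshold $\delta^2$ \emph{and} observe that then $f - \E[f|\calb']$ already has small Gowers norm because any contrary conclusion of Lemma \ref{lem:inc} would contradict $(h_{\ref{lem:robdec1}}, \delta^2)$-robustness \emph{provided} $\delta^2 \le \gamma_{\ref{lem:robdec1}}(\eta(|\calb'|),p)$. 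So the single genuine constraint to verify, and the main obstacle, is the interleaving of two quantities — the $L^2$ budget $\delta^2$ and the Gowers budget $\gamma_{\ref{lem:robdec1}}(\eta(\cdot),p)$ — into one robustness threshold while keeping all size bounds expressible as monotone functions of $|\calb|$ alone; the rest (ranges, the Pythagorean identity giving $\|f_3\|_2^2 = \indd(\calb') - \indd(\calb)$, and the Gowers bound via contrapositive of Lemma \ref{lem:inc}) is routine.
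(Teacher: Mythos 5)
Your first three ``Key steps'' are exactly the paper's argument, and the Pythagorean identity $\|f_3\|_2^2 = \indd(\calb') - \indd(\calb)$ is correct. The genuine gap is in your fourth step: you flounder over bounding $\indd(\calb') - \indd(\calb) \leq \delta^2$, trying variously to get it from the robustness of $\calb'$, from the internals of the iteration in Observation~\ref{obs:rob}, or from merging the $L^2$ and Gowers budgets into a single robustness threshold; your final position requires $\delta^2 \le \gamma_{\ref{lem:robdec1}}(\eta(|\calb'|),p)$, which is not something you may assume (it fails whenever $\delta$ is moderately large and $\eta$ small). None of this is needed, and the ``interleaving'' you identify as the main obstacle does not exist.

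The missing observation is that the lemma's \emph{hypothesis} --- that $\calb$ itself is $(h_{\ref{lem:robdec2}}^{(\eta,p)}, \gamma_{\ref{lem:robdec2}}(\delta))$-robust --- is what controls $\|f_3\|_2$, and you are free to set those parameters to make this work. Take $\gamma_{\ref{lem:robdec2}}(\delta) = \delta^2$ and
\[
h_{\ref{lem:robdec2}}^{(\eta,p)}(m) = T_{\ref{obs:rob}}\bigl(m, h_{\ref{lem:robdec1}}, \gamma_{\ref{lem:robdec1}}(\eta(m),p)\bigr).
\]
Observation~\ref{obs:rob}, applied with the pair $\bigl(h_{\ref{lem:robdec1}}, \gamma_{\ref{lem:robdec1}}(\eta(|\calb|),p)\bigr)$, yields a syntactic refinement $\calb'$ of $\calb$ with $|\calb'| \leq T_{\ref{obs:rob}}\bigl(|\calb|, h_{\ref{lem:robdec1}}, \gamma_{\ref{lem:robdec1}}(\eta(|\calb|),p)\bigr) = h_{\ref{lem:robdec2}}^{(\eta,p)}(|\calb|)$. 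Since $\calb'$ is a refinement of $\calb$ whose size stays within $\calb$'s robustness bound, the hypothesis immediately forces $\indd(\calb') < \indd(\calb) + \delta^2$, hence $\|f_3\|_2 \le \delta$. The Gowers bound $\|f_2\|_{U^k} \le \eta(|\calb|)$ comes separately from Lemma~\ref{lem:robdec1} applied to $\calb'$ (which is $(h_{\ref{lem:robdec1}},\gamma_{\ref{lem:robdec1}}(\eta(|\calb|),p))$-robust by construction). The two budgets are attached to two different factors with two separate robustness thresholds; no relation between $\delta$ and $\eta$ is required, and there is no need to pick $\calb'$ as the ``first robust factor in the iteration'' or to trace through the intermediate increments.
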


\begin{proof}
We set $h_{\ref{lem:robdec2}}^{(\eta,p)}(m)=T_{\ref{obs:rob}}\left(m,h_{\ref{lem:robdec1}},\gamma_{\ref{lem:robdec1}}(\eta(m),p)\right)$ for every $m\in\N$ and $\gamma_{\ref{lem:robdec2}}(\delta)=\delta^2$. Given $\calb$ satisfying the robustness condition above, we let $\calb'$ be its syntactic refinement which is $\left(h_{\ref{lem:robdec1}},\gamma_{\ref{lem:robdec1}}(\eta(|\calb|),p)\right)$-robust and for which $|\calb'|\leq T_{\ref{obs:rob}}\left(|\calb|,h_{\ref{lem:robdec1}},\gamma_{\ref{lem:robdec1}}(\eta(|\calb|),p)\right)$. We let $f_1=\E[f|\calb]$, and $f_2=f-\E[f|\calb']$. As per Lemma \ref{lem:robdec1} $f_2$ satisfies the required Gowers norm condition. This leaves us with $f_3=\E[f|\calb']-\E[f|\calb]$. The required $l_2$ condition on this function follows directly from $\calb'$ not violating the robustness condition of $\calb$.
\end{proof}

We now have all the tools to quickly wrap up the proof of the existence of a strong decomposition.

\begin{theorem}[Strong Decomposition Theorem]\label{thm:strongdecomp}
Suppose $\delta > 0$ and $C_0,d \geq 1$ are integers so that
$d<p$. Let $\eta: \N \to \R^+$ be an arbitrary
non-increasing function and $r: \N \to \N$ be an arbitrary
non-decreasing function. Then there exists $C =
C_{\ref{thm:strongdecomp}}(\delta,\eta,p,r, C_0)$ such that the following holds.

Given $f: \F_p^n \to \zo$ and a polynomial factor
$\calb_0$ of degree at most $d$ and complexity at most $C_0$, there
exist three functions $f_1, f_2, f_3: \F_p^n \to
\R$ and a polynomial factor  $\calb \preceq_{sem} \calb_0$ of
degree at most $d$ and complexity at most $C$ such that the following hold:
\begin{itemize}
\item
$f=f_1+f_2+f_3$
\item
$f_1 = \E[f|\calb]$
\item
$\|f_2\|_{U^{d+1}} \leq 1/\eta(|\calb|)$
\item
$\|f_3\|_2 \leq \delta$
\item
$f_1$ and $f_1 + f_3$ have range $[0,1]$; $f_2$ and $f_3$ have range $[-1,1]$.
\item
$\calb$ is of rank at least $r(|\calb|)$
\end{itemize}
Moreover, if $\calb_0$ is a syntactic refinement
of some $\hat{\calb}$ of rank at least $r(C)+1$, then $\calb$ will also
be a syntactic refinement of $\hat{\calb}$ (in particular this also holds if $\calb_0=\hat{\calb}$).
\end{theorem}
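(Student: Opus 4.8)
The plan is to produce the factor $\calb$ in a single application of the main robustness lemma (Lemma~\ref{lem:rankrob}) and then to feed $\calb$ into the three-way robust decomposition (Lemma~\ref{lem:robdec2}) to split $f$. The observation that makes this clean is that Lemma~\ref{lem:robdec2} requires \emph{only} a robustness hypothesis on $\calb$ and does not itself need any rank bound, whereas Lemma~\ref{lem:rankrob} delivers a factor that is simultaneously (i) a semantic refinement of $\calb_0$, (ii) robust with respect to \emph{any} growth function we prescribe, (iii) of rank at least $r(|\calb|)$, and (iv) --- exactly what the ``moreover'' clause asks for --- still a syntactic refinement of $\hat{\calb}$ whenever $\calb_0$ was one and $\hat{\calb}$ has large enough rank. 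So the two clauses of Theorem~\ref{thm:strongdecomp} will come out in parallel from the two clauses of Lemma~\ref{lem:rankrob}.

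Concretely, I would first fix the growth function and robustness parameter demanded by Lemma~\ref{lem:robdec2}: set $\tilde\eta(m) := 1/\eta(m)$ (replaced by its running maximum if monotonicity is needed), and let $h := h_{\ref{lem:robdec2}}^{(\tilde\eta,p)}$ and $\gamma := \gamma_{\ref{lem:robdec2}}(\delta) = \delta^2$ be the functions that lemma supplies for this $\tilde\eta$, this degree bound $d < p$, and this $\delta$. Next I would apply Lemma~\ref{lem:rankrob} to $\calb_0$ with parameters $(h,d,r,\gamma)$: this yields $\calb \preceq_{sem} \calb_0$ of degree at most $d$, of rank at least $r(|\calb|)$, that is $(h,\gamma)$-robust, and of complexity at most $T_{\ref{lem:rankrob}}(C_0,h,d,r,\gamma)$, which I would take as the value of $C_{\ref{thm:strongdecomp}}(\delta,\eta,p,r,C_0)$; this is a function of $(\delta,\eta,p,r,C_0)$ only, independent of $n$. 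Finally, since $\calb$ satisfies precisely the robustness hypothesis of Lemma~\ref{lem:robdec2}, applying that lemma to $f$ and $\calb$ produces $f = f_1 + f_2 + f_3$ with $f_1 = \E[f|\calb]$ constant on cells of $\calb$ and of range $[0,1]$, $\|f_2\|_{U^{d+1}} \leq \tilde\eta(|\calb|) = 1/\eta(|\calb|)$ and range $[-1,1]$, $\|f_3\|_2 \leq \delta$ and range $[-1,1]$, and $f_1 + f_3 = \E[f|\calb']$ of range $[0,1]$ (where $\calb'$ is the auxiliary syntactic refinement internal to Lemma~\ref{lem:robdec2}); the hypothesis $d<p$ enters only here, through the Gowers inverse theorem behind that lemma. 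Every bulleted item is then on the table. For the ``moreover'' clause I would just invoke the corresponding clause of Lemma~\ref{lem:rankrob}: if $\calb_0 \preceq_{syn} \hat{\calb}$ with $\hat{\calb}$ of rank at least $r(C)+1 = r(T_{\ref{lem:rankrob}}(C_0,h,d,r,\gamma))+1$, then $\calb \preceq_{syn} \hat{\calb}$ as well, the degenerate case $\calb_0 = \hat{\calb}$ included.

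I do not expect a real obstacle --- the heavy lifting (the decrement-based construction of high-rank refinements in Lemmas~\ref{lem:robrank} and \ref{lem:rankrob}, and the two-level robust decomposition in Lemmas~\ref{lem:robdec1} and \ref{lem:robdec2}) has already been carried out. The one point that needs care is the bookkeeping around monotonicity conventions and the reindexing $\eta \mapsto 1/\eta$: I must reconcile the ``smallness'' role played by the Gowers-norm parameter inside Lemma~\ref{lem:robdec2} with the form $1/\eta(|\calb|)$ demanded in the present statement, and I must check that the composite $h_{\ref{lem:robdec2}}^{(\tilde\eta,p)}$ may legitimately be taken non-decreasing before it is passed to Lemma~\ref{lem:rankrob} (and if not, monotonise it, which only enlarges $C$). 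Everything else is a direct composition of the two lemmas.
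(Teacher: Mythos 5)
Your argument is essentially the paper's own proof: apply Lemma~\ref{lem:rankrob} once to manufacture a robust, high-rank, semantically-refining $\calb$ (with the ``moreover'' syntactic-refinement clause carried along by the corresponding clause of that lemma), then feed $\calb$ into Lemma~\ref{lem:robdec2}; your $\tilde\eta := 1/\eta$ substitution is in fact the right way to reconcile the theorem's $1/\eta(|\calb|)$ bound with Lemma~\ref{lem:robdec2}'s direct use of $\eta$, a mismatch the paper's own proof glosses over by passing $\eta$ unchanged. The one detail to tidy is that your proposed $C = T_{\ref{lem:rankrob}}(C_0,h,d,r,\gamma)$ also depends on $d$, whereas the theorem's $C$ is a function of $(\delta,\eta,p,r,C_0)$ alone; the paper sidesteps this by substituting the upper bound $p$ for $d$ inside $T_{\ref{lem:rankrob}}$, which by the stated monotonicity conventions only enlarges $C$ and so still serves as a valid complexity bound.
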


\begin{proof}
Set $C_{\ref{thm:strongdecomp}}(\delta,\eta,p,r,C_0)=T_{\ref{lem:rankrob}}(C_0,h_{\ref{lem:robdec2}}^{(\eta,p)},p,r,\gamma_{\ref{lem:robdec2}}(\delta))\geq T_{\ref{lem:rankrob}}(C_0,h_{\ref{lem:robdec2}}^{(\eta,p)},d,r,\gamma_{\ref{lem:robdec2}}(\delta))$.
Given $\calb_0$ and $f$, we set $\calb$ to be the $(h_{\ref{lem:robdec2}}^{(\eta,p)},\gamma_{\ref{lem:robdec2}}(\delta))$-robust refinement of $\calb_0$ guaranteed by Lemma \ref{lem:rankrob}. Lemma \ref{lem:robdec2} guarantees the required decomposition $f=f_1+f_2+f_3$, and the case of a prior $\hat{\calb}$ is also handled seamlessly by Lemma \ref{lem:rankrob}.
\end{proof}

\subsection{Super Decomposition and Subcell Selection}\label{subsec:super}

What we would really like is that in some sense the $\delta$ of Theorem \ref{thm:strongdecomp} would also be able to depend on $|\calb|$, but this is clearly impossible. So instead, taking some inspiration from \cite{AFKS}, we will strive to have a sequence of two factors $\calb$ and $\calb'$, the latter a syntactic refinement of the former, so that the $\delta$ of $\calb'$ would be a function of $|\calb|$. However, for this to mean anything we also need $\calb'$ to ``faithfully'' represent $\calb$, in the sense that we define now.

\begin{definition}[Polynomial factor represents another factor]
Given a function $f: \F_p^n \to \zo$, a polynomial factor $\calb'$
that syntactically refines another factor $\calb$ and a real $\zeta \in (0,1)$, we
say {\em $\calb'$ $\zeta$-represents $\calb$ with respect to $f$} if
for at most a $\zeta$ fraction of cells $c$ of $\calb$, more than
$\zeta$ fraction of the cells $c'$ lying inside $c$ satisfy
$|\E[f|c]-\E[f|c']|>\zeta$.
\end{definition}

To be able to infer that a refinement is representing, we will use the following well-known defect version of the Cauchy-Schwarz inequality:
\begin{observation}\label{obs:defect}
If $\sum_{i\in I}^m\alpha_i=1$ where $\alpha_i$ are all non-negative, $f:I\to\R$ ranges over $[0,1]$, and for some $J\subseteq I$ we have $\left(\sum_{j\in J}\alpha_if(i)\right)/\left(\sum_{j\in J}\alpha_i\right)=\sum_{i\in I}\alpha_if(i)+\eta$ where $\eta\in[-1,1]$, then $\sum_{i\in I}\alpha_i(f(i)^2)\geq(\sum_{i\in I}\alpha_if(i))^2+\left(\sum_{j\in J}\alpha_i\right)\eta^2$.
\end{observation}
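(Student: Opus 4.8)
The plan is to read Observation~\ref{obs:defect} as a weighted-variance statement and derive it from a single application of the (weighted) Cauchy--Schwarz inequality. I would write $\mu=\sum_{i\in I}\alpha_i f(i)$ for the $\alpha$-weighted mean of $f$ and $\beta=\sum_{j\in J}\alpha_j$ for the total weight of $J$; one may assume $\beta>0$, since otherwise the hypothesis involving division by $\sum_{j\in J}\alpha_j$ is vacuous and the desired inequality degenerates to the plain estimate $\sum_i\alpha_i f(i)^2\ge\mu^2$. Because $\sum_{i\in I}\alpha_i f(i)^2-\mu^2=\sum_{i\in I}\alpha_i(f(i)-\mu)^2$, the conclusion to be proved is exactly the variance lower bound $\sum_{i\in I}\alpha_i(f(i)-\mu)^2\ge\beta\eta^2$.

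To establish that bound I would set $g(i)=f(i)-\mu$, so that $g$ has $\alpha$-mean zero, and observe that the hypothesis $\bigl(\sum_{i\in J}\alpha_i f(i)\bigr)/\beta=\mu+\eta$ rearranges to $\sum_{i\in J}\alpha_i g(i)=\beta\eta$. Applying Cauchy--Schwarz to the vectors $(\sqrt{\alpha_i})_{i\in J}$ and $(\sqrt{\alpha_i}\,g(i))_{i\in J}$ then gives
\[
(\beta\eta)^2=\Bigl(\sum_{i\in J}\alpha_i g(i)\Bigr)^2\le\Bigl(\sum_{i\in J}\alpha_i\Bigr)\Bigl(\sum_{i\in J}\alpha_i g(i)^2\Bigr)=\beta\sum_{i\in J}\alpha_i g(i)^2\le\beta\sum_{i\in I}\alpha_i g(i)^2,
\]
the last step discarding only non-negative terms. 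Dividing through by $\beta>0$ yields $\sum_{i\in I}\alpha_i g(i)^2\ge\beta\eta^2$, and rewriting via $g=f-\mu$ gives $\sum_{i\in I}\alpha_i f(i)^2\ge\mu^2+\beta\eta^2$, as required.

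I expect no serious obstacle here: this is a soft, purely real-analytic fact. The only points to get right are the direction of the Cauchy--Schwarz step (it is used to lower-bound a sum of squares by the square of a partial sum, which is precisely why the factor $\beta=\sum_{j\in J}\alpha_j$, rather than $1$, appears on the right-hand side of the conclusion) and the harmless case $\beta=0$. The side hypotheses that $f$ ranges in $[0,1]$ and $\eta\in[-1,1]$ are not actually needed for the inequality; they merely record the regime in which the observation will be invoked later (where $f$ is a conditional expectation of a $\{0,1\}$-valued function and $\eta$ is a difference of two such conditional expectations).
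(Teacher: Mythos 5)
Your proof is correct. It takes a genuinely different (and arguably cleaner) route than the paper's. The paper replaces $f$ by a two-level piecewise-constant function $f'$ that equals the $J$-conditional average on $J$ and the $(I\setminus J)$-conditional average on $I\setminus J$; it uses Cauchy--Schwarz only to say that this replacement cannot increase the weighted second moment, and then it lower-bounds the second moment of $f'$ by a direct algebraic computation, namely $\xi(a+\eta)^2+(1-\xi)\bigl(a-\xi\eta/(1-\xi)\bigr)^2\ge a^2+\xi\eta^2$ (with $\xi=\beta$ in your notation). You instead center $f$ by its global weighted mean $\mu$ to obtain a zero-mean function $g$, apply Cauchy--Schwarz once and only on the index set $J$ to the vectors $(\sqrt{\alpha_i})_{i\in J}$ and $(\sqrt{\alpha_i}g(i))_{i\in J}$, and then simply discard the nonnegative contribution from $I\setminus J$. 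Your version avoids any computation involving the complement $I\setminus J$ (and so never divides by $1-\beta$, which the paper implicitly does and which would need the side remark that $\beta=1$ forces $\eta=0$), while the paper's version is the more ``conditional-expectation-flavored'' argument that generalizes naturally when one later wants to refine into many cells at once. Both are sound; the extra hypotheses that $f\in[0,1]$ and $\eta\in[-1,1]$ are indeed never used in either argument, as you note.
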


\begin{proof}
For ease of notation denote the average $a=\sum_{i\in I}\alpha_if(i)$ of $f$ and set $\xi=\sum_{j\in J}\alpha_i$. By the standard Cauchy-Schwarz inequality $\sum_{i\in I}\alpha_i(f(i)^2)\geq\sum_{i\in I}\alpha_i(f'(i)^2)$, where $f'(i)=\left(\sum_{j\in J}\alpha_if(i)\right)/\left(\sum_{j\in J}\alpha_i\right)=a+\eta$ if $i\in J$ and $f'(i)=\left(\sum_{j\in I\setminus J}\alpha_if(i)\right)/\left(\sum_{j\in I\setminus J}\alpha_i\right)=a-\xi\eta/(1-\xi)$ if $i\not\in J$. The sum over $f'$ now equals $\xi(a+\eta)^2+(1-\xi)(a-\xi\eta/(1-\xi))^2\geq a^2+\xi\eta^2$.
\end{proof}

We can now show that, under some rank assumptions, a non-representing refinement is evidence to a factor being non-robust.

\begin{lemma}\label{lem:repadd}
There are functions $r_{\ref{lem:repadd}}(p,m)$ and $\gamma_{\ref{lem:repadd}}(\zeta)$ for which the following holds.
For $f: \F_p^n \to \zo$, if $\calb'$ is a factor of rank $r_{\ref{lem:repadd}}(p,|\calb'|)$, and is a syntactic refinement of a factor $\calb$ of rank $r_{\ref{lem:repadd}}(p,|\calb|)$, both of degree $d<p$, and $\calb'$ does not
$\zeta$-represent $\calb$ with respect to $f$, then $\indd(\calb')\geq\indd(\calb)+\gamma_{\ref{lem:repadd}}(\zeta)$.
%$\E[(\E[f|\calb'])^2]\geq\E[(\E[f|\calb])^2]+\gamma_{\ref{lem:repadd}}(\zeta)$.
\end{lemma}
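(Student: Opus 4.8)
The plan is to reduce the whole statement to the identity
\[
\indd(\calb')-\indd(\calb)=\|\E[f|\calb']-\E[f|\calb]\|_2^2 ,
\]
which holds because $\calb'$ syntactically (hence semantically) refines $\calb$: the function $\E[f|\calb]$ is the orthogonal projection of $\E[f|\calb']$ onto the space of $\calb$-measurable functions, and $\E[f|\calb']-\E[f|\calb]$ has zero conditional expectation over $\calb$, so Pythagoras applies. This is exactly the computation carried out at the end of the proof of Lemma~\ref{lem:inc}. Hence it suffices to exhibit a positive lower bound, depending only on $\zeta$, for
$\|\E[f|\calb']-\E[f|\calb]\|_2^2=\sum_{c'}\Pr_x[x\in c']\,(\E[f|c']-\E[f|c_\calb(c')])^2$,
where the sum is over cells $c'$ of $\calb'$ and $c_\calb(c')$ is the unique cell of $\calb$ containing $c'$.

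The role of the rank hypothesis is to make all the relevant cells have essentially the same measure, so that ``fraction of cells'' (a counting notion) can be traded for a fraction of the measure. Concretely, I would set $r_{\ref{lem:repadd}}(p,m)$ to be any integer at least $r_{\ref{thm:rankreg}}(d,\tfrac12 p^{-m})$ for every $d<p$ (monotone non-decreasing in $m$; such a choice exists since the degree $d$ ranges over a finite set). Then Lemma~\ref{lem:cellsize} applied to $\calb'$ gives $\Pr_x[x\in c']\ge \tfrac12 p^{-|\calb'|}$ for every cell $c'$ of $\calb'$, and applied to $\calb$ gives $\Pr_x[x\in c]\ge \tfrac12 p^{-|\calb|}$ for every cell $c$ of $\calb$; in particular all $p^{|\calb|}$ potential cells of $\calb$ are nonempty, and inside each such $c$ all $p^{|\calb'|-|\calb|}$ potential subcells of $\calb'$ are nonempty.

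Now I would unwind the failure of representation. Since $\calb'$ does not $\zeta$-represent $\calb$, there is a family $\mathcal{C}$ of more than $\zeta p^{|\calb|}$ cells $c$ of $\calb$ such that each $c\in\mathcal{C}$ contains a family $S_c$ of more than $\zeta p^{|\calb'|-|\calb|}$ subcells $c'$ with $|\E[f|c']-\E[f|c]|>\zeta$. Keeping only the terms with $c\in\mathcal{C}$ and $c'\in S_c$ yields
\[
\|\E[f|\calb']-\E[f|\calb]\|_2^2 \;\geq\; \sum_{c\in\mathcal{C}}\sum_{c'\in S_c}\Pr_x[x\in c']\cdot\zeta^2 \;\geq\; \bigl(\zeta p^{|\calb|}\cdot \zeta p^{|\calb'|-|\calb|}\bigr)\cdot\tfrac12 p^{-|\calb'|}\cdot\zeta^2 \;=\; \tfrac12\,\zeta^4 ,
\]
so one may take $\gamma_{\ref{lem:repadd}}(\zeta)=\zeta^4/2$. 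If one prefers to keep the argument in the ``averaging'' language of the rest of the section, the inner estimate over a fixed $c\in\mathcal{C}$ can instead be obtained from the defect Cauchy--Schwarz inequality (Observation~\ref{obs:defect}): split $S_c$ into the subcells on which $\E[f|c']$ exceeds $\E[f|c]$ and those on which it is smaller, and apply the observation to the larger of the two parts (which then has conditional measure $\gtrsim\zeta$ and average differing from $\E[f|c]$ by more than $\zeta$); the direct estimate above is cleaner and suffices.

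I do not anticipate a real obstacle: the only substantive input is the cell-size uniformity supplied by rank via Lemma~\ref{lem:cellsize}, and everything else is bookkeeping. The one point that needs attention is that the rank demanded of $\calb$ and $\calb'$ must depend only on $p$ and on the respective complexities and \emph{not} on $\zeta$; this is why the cell-size error is pinned at $\tfrac12 p^{-|\calb'|}$ (a quantity governed by complexity alone) rather than being tuned to $\zeta$, with all $\zeta$-dependence then carried by the counting of bad cells and bad subcells.
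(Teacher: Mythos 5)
Your proof is correct, and it takes a cleaner route than the paper's. Where the paper proceeds cell by cell, splitting the bad subcells of each bad cell by the sign of the deviation and applying the defect Cauchy--Schwarz inequality (Observation~\ref{obs:defect}) to get an index increment per cell, you bypass that machinery entirely by invoking the Pythagorean identity $\indd(\calb')-\indd(\calb)=\|\E[f|\calb']-\E[f|\calb]\|_2^2$ (the same orthogonality fact already used in the proof of Lemma~\ref{lem:inc}), and then lower-bounding the squared norm in one line by restricting to the bad subcells: each of the $>\zeta^2 p^{|\calb'|}$ bad subcells contributes at least $\zeta^2\cdot\tfrac12 p^{-|\calb'|}$, giving $\gamma_{\ref{lem:repadd}}(\zeta)=\zeta^4/2$. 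This order of magnitude is in fact the right one: a configuration in which exactly a $\zeta$-fraction of cells each have exactly a $\zeta$-fraction of subcells deviating by exactly $\zeta$ has $\|\E[f|\calb']-\E[f|\calb]\|_2^2\approx\zeta^4$, so the $\zeta^3/24$ stated in the paper is not attainable in general (and if one tracks the paper's own accounting through Observation~\ref{obs:defect} and the cell-size bounds, one also lands on a $\zeta^4$-order quantity). Your closing remark that the rank threshold must depend only on $p$ and the complexity, and not on $\zeta$, is exactly the right point to flag, since that decoupling is what allows Lemma~\ref{lem:prepsuper} to feed this lemma a fixed rank function independent of the representation parameter.
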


\begin{proof}
We first set $r_{\ref{lem:repadd}}(p,m)=r_{\ref{thm:rankreg}}(p,1/2p^m)$. If $\calb'$ does not $\zeta$-represent $\calb$, then it must be the case that there are at least $\zeta p^{|\calb|}/2$ cells of $\calb$, so that for every cell $c$ of them, there are at least $\zeta p^{|\calb'|-|\calb|}/2$ cells $c'$ of $\calb'$ lying inside of it, so that $|\E[f|c]-\E[f|c']|>\zeta$.

Let us concentrate for now on one such cell $c$ of $\calb$. Either there are at least $\zeta p^{|\calb'|-|\calb|}/4$ cells $c'$ inside $c$ so that $\E[f|c']-\E[f|c]>\zeta$, or there are more than $\zeta p^{|\calb'|-|\calb|}/4$ such cells so that $\E[f|c']-\E[f|c]<-\zeta$. We will assume the first case, as the treatment of the second case is virtually identical and provides the same lower bound for the cell.

Now we refer to Observation \ref{obs:defect}, where $I$ is identified with $\F_p^{|\calb'|-|\calb|}$, the set of cells of $\calb'$ lying in $c$, and $J$ is identified with the set of those cells $c'$ satisfying $\E[f|c']-\E[f|c]>\zeta$. The value of each $\alpha_i$ can easily be shown to be at least $p^{|\calb'|-|\calb|}/3$, by comparing the minimum possible size of the cell $c'$ with the maximum possible size of the cell $c$. Inserting the other corresponding values in Observation \ref{obs:defect}, we obtain $\E[\E[(f(x))^2]|c]>(\E[(f(x))^2|c])^2+\zeta^3/12$.

Summing up the above contribution for all cells $c$ of $\calb$, and noting that the relative size of every cell of $\calb$ is at least $p^{-|\calb|}/2$ by Lemma \ref{lem:cellsize}, we obtain that $\indd(\calb')=\E[(f(x))^2|\calb']\geq\E[(f(x))^2|\calb]+\zeta^3/24=\indd(\calb)+\gamma_{\ref{lem:repadd}}(\zeta)$, where we set $\gamma_{\ref{lem:repadd}}(\zeta)=\zeta^3/24$.
\end{proof}

The following technical lemma shows that if the partition is robust enough, then it has a specified robust and representing {\em syntactic} refinement, where we also take a rank requirement into account.

\begin{lemma}\label{lem:prepsuper}
For every $h:\N\to\N$, $\gamma:\N\to(0,1)$, $r:\N\to\N$, $p\in\N$ and $\zeta\in(0,1)$ there are $H_{\ref{lem:prepsuper}}^{(h,\gamma,p,r)}:\N\to\N$, $R_{\ref{lem:prepsuper}}^{(h,\gamma,p,r)}:\N\to\N$ and $\Gamma_{\ref{lem:prepsuper}}(\zeta)\in(0,1)$ satisfying the following among factors of degree bound $d<p$ over $\F_p^n$. If $\calb$ is an $(H_{\ref{lem:prepsuper}}^{(h,\gamma,p,r)},\Gamma_{\ref{lem:prepsuper}}(\zeta))$-robust partition of rank at least $R_{\ref{lem:prepsuper}}^{(h,\gamma,p,r)}(|\calb|)$, then it has a $\zeta$-representing syntactic refinement $\calb'$ which is $(h,\gamma(|\calb|))$-robust and is of rank at least $r(|\calb'|)$, which satisfies also $|\calb'|\leq S_{\ref{lem:prepsuper}}(|\calb|,h,p,r,\gamma)$ for the appropriate function $S_{\ref{lem:prepsuper}}(m,h,p,r,\gamma)$.
\end{lemma}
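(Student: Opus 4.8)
The plan is to find $\calb'$ by first passing to a high-rank, robust syntactic refinement of $\calb$ via the main robustness lemma (Lemma~\ref{lem:rankrob}), and then to argue that any such refinement is automatically $\zeta$-representing, \emph{provided} we chose the original $\calb$ to be robust enough against the appropriate refinement function. The key structural idea is the contrapositive of Lemma~\ref{lem:repadd}: if a high-rank syntactic refinement $\calb'$ of a high-rank $\calb$ fails to $\zeta$-represent $\calb$, then $\indd(\calb')\geq\indd(\calb)+\gamma_{\ref{lem:repadd}}(\zeta)$, i.e.\ $\calb'$ witnesses non-robustness of $\calb$ with gap $\gamma_{\ref{lem:repadd}}(\zeta)$. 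So we will set $\Gamma_{\ref{lem:prepsuper}}(\zeta)=\gamma_{\ref{lem:repadd}}(\zeta)$, and make $\calb$ robust against a refinement-size function large enough to accommodate the $\calb'$ that Lemma~\ref{lem:rankrob} produces.

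Concretely, first I would fix the rank threshold high enough to invoke Lemma~\ref{lem:repadd}: we need both $\calb$ and $\calb'$ to have rank at least $r_{\ref{lem:repadd}}(p,\cdot)$ (evaluated at their own complexities), and we also want $\calb'$ of rank at least $r(|\calb'|)$ as demanded in the conclusion. So take the rank requirement to be $\tilde r(m)=\max\{r(m),r_{\ref{lem:repadd}}(p,m)\}$ (monotonized). Next, apply Lemma~\ref{lem:rankrob} to $\calb$ with the refinement-size function $h$, degree bound $d$, rank function $\tilde r$, and robustness parameter $\gamma(|\calb|)$: this yields a semantic refinement, which by Observation~\ref{obs:silly} we may take to be a \emph{syntactic} refinement $\calb'$ of $\calb$ (at the cost of an additive $|\calb|$ in complexity, which only affects the final size bound $S_{\ref{lem:prepsuper}}$), that is $(h,\gamma(|\calb|))$-robust and of rank at least $\tilde r(|\calb'|)$. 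Its complexity is bounded by $T_{\ref{lem:rankrob}}(|\calb|,h,d,\tilde r,\gamma(|\calb|))$ plus $|\calb|$; call this bound $S_{\ref{lem:prepsuper}}(|\calb|,h,p,r,\gamma)$. The outstanding issue is guaranteeing $\calb'$ is a syntactic refinement of $\calb$ with $\calb$ of high rank: Lemma~\ref{lem:rankrob} gives exactly this when we feed it $\calb=\hat\calb$, as noted in that lemma's ``moreover'' clause (applicable with $\calb=\hat\calb$), provided $\calb$ itself has rank at least $\tilde r(S_{\ref{lem:prepsuper}}(|\calb|,h,p,r,\gamma))+1$ — so I would define $R_{\ref{lem:prepsuper}}^{(h,\gamma,p,r)}(m)=\tilde r\!\left(S_{\ref{lem:prepsuper}}(m,h,p,r,\gamma)\right)+1$.

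It remains to show $\calb'$ $\zeta$-represents $\calb$. Suppose not. Then $\calb$ and $\calb'$ are both syntactic factors of degree $d<p$, with $\calb'$ a syntactic refinement of $\calb$; since $R_{\ref{lem:prepsuper}}^{(h,\gamma,p,r)}(|\calb|)\geq r_{\ref{lem:repadd}}(p,|\calb|)$ and the rank of $\calb'$ is at least $\tilde r(|\calb'|)\geq r_{\ref{lem:repadd}}(p,|\calb'|)$, the hypotheses of Lemma~\ref{lem:repadd} are met, so $\indd(\calb')\geq\indd(\calb)+\gamma_{\ref{lem:repadd}}(\zeta)$. But $|\calb'|\leq S_{\ref{lem:prepsuper}}(|\calb|,h,p,r,\gamma)\leq H_{\ref{lem:prepsuper}}^{(h,\gamma,p,r)}(|\calb|)$ once we set $H_{\ref{lem:prepsuper}}^{(h,\gamma,p,r)}(m)=S_{\ref{lem:prepsuper}}(m,h,p,r,\gamma)$, and $\calb'$ is a (syntactic, hence semantic) refinement of $\calb$ of bounded complexity, so this contradicts $\calb$ being $(H_{\ref{lem:prepsuper}}^{(h,\gamma,p,r)},\Gamma_{\ref{lem:prepsuper}}(\zeta))$-robust with $\Gamma_{\ref{lem:prepsuper}}(\zeta)=\gamma_{\ref{lem:repadd}}(\zeta)$. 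Hence $\calb'$ is $\zeta$-representing, completing the proof.

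The main obstacle I anticipate is bookkeeping the circular-looking dependency between the size bound and the rank requirement: $R_{\ref{lem:prepsuper}}$ must be evaluated at $S_{\ref{lem:prepsuper}}(|\calb|,\dots)$, yet $S_{\ref{lem:prepsuper}}$ is defined via $T_{\ref{lem:rankrob}}$ which is fed the rank \emph{function} $\tilde r$ (not a value), so there is no genuine circularity — one just has to define $S_{\ref{lem:prepsuper}}$ first (it depends only on $h,d,\tilde r,\gamma$ and the input complexity), then $H_{\ref{lem:prepsuper}}$, then $R_{\ref{lem:prepsuper}}$. The monotonicity conventions stated after Observation~\ref{obs:rob} ensure all these composed functions are monotone, which is what later applications will need.
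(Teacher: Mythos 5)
Your proof is correct and follows essentially the same route as the paper: define $\Gamma_{\ref{lem:prepsuper}}(\zeta)=\gamma_{\ref{lem:repadd}}(\zeta)$, invoke Lemma~\ref{lem:rankrob}'s ``moreover'' clause with $\hat\calb=\calb$ to get a syntactic, robust, high-rank refinement $\calb'$, then use the robustness of $\calb$ together with the contrapositive of Lemma~\ref{lem:repadd} to conclude $\calb'$ is $\zeta$-representing. Two small remarks. First, your use of the rank function $\tilde r(m)=\max\{r(m),r_{\ref{lem:repadd}}(p,m)\}$ when calling Lemma~\ref{lem:rankrob} is actually a bit more careful than the paper, which passes $r$ directly and thus only secures $\calb'$ of rank $r(|\calb'|)$, not necessarily $r_{\ref{lem:repadd}}(p,|\calb'|)$ as Lemma~\ref{lem:repadd} requires — the paper's version tacitly relies on the rank functions it later feeds into this lemma dominating $r_{\ref{lem:repadd}}(p,\cdot)$. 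Second, your aside invoking Observation~\ref{obs:silly} to manufacture a syntactic refinement out of a semantic one is a dead end here, since that operation can destroy rank (the paper explicitly warns of this right after Observation~\ref{obs:silly}); you should drop that detour entirely and go straight to the ``moreover'' clause route, which you correctly identify in the following sentence and which is what makes the proof work.
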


\begin{proof}
Set the following in order:
\begin{eqnarray*}
S_{\ref{lem:prepsuper}}(m,h,p,r,\gamma) &=& T_{\ref{lem:rankrob}}(m,h,p,r,\gamma(m))\\
H_{\ref{lem:prepsuper}}^{(h,\gamma,p,r)}(m) &=& S_{\ref{lem:prepsuper}}(m,h,p,r,\gamma)\\
R_{\ref{lem:prepsuper}}^{(h,\gamma,p,r)}(m) &=& \max\{r(S_{\ref{lem:prepsuper}}(m,h,p,r,\gamma))+1,r_{\ref{lem:repadd}}(p,m)\}\\
\Gamma_{\ref{lem:prepsuper}}(\zeta) &=& \gamma_{\ref{lem:repadd}}(\zeta)
\end{eqnarray*}
Assuming that $\calb$ satisfies the requisites, we use Lemma \ref{lem:rankrob} to find a refinement $\calb'$ that is $(h,\gamma(|\calb|))$-robust, of rank at least $r(|\calb'|)$, and satisfying $|\calb'|\leq T_{\ref{lem:rankrob}}(|\calb|,h,d,r,\gamma(|\calb|)) \leq T_{\ref{lem:rankrob}}(|\calb|,h,p,r,\gamma(|\calb|))$ -- the required complexity bound (note that Lemma \ref{lem:rankrob} is fed the number $\gamma(|\calb|)$, not the function $\gamma$).

The condition that $\calb$ is $(H_{\ref{lem:prepsuper}}^{(h,\gamma,p,r)},\Gamma_{\ref{lem:prepsuper}}(\zeta))$-semantically-robust means that $\indd(\calb')\leq\indd(\calb)+\gamma_{\ref{lem:repadd}}(\zeta)$, and so $\calb'$ is $\zeta$-representing for $\calb$ by Lemma \ref{lem:repadd} (as the partitions also satisfy the corresponding rank requirement).

The condition that $\calb$ is of rank at least $R_{\ref{lem:prepsuper}}^{(h,\gamma,p,r)}(|\calb|)\geq r\left(T_{\ref{lem:rankrob}}(|\calb|,h,p,r,\gamma(|\calb|))\right)+1$ means that (setting $\hat{\calb}=\calb$) Lemma \ref{lem:rankrob} provides the additional requirement that $\calb'$ is a syntactic refinement of $\calb$.
\end{proof}

We can now put forth our final decomposition theorem.

\begin{theorem}[Super Decomposition Theorem]\label{thm:superdecomp}
Suppose $\zeta > 0$ and $ d, C_0 \geq 1$ are integers so
that $d<p$. Let $\eta: \N \to \R^+$ and
$\delta:\N\to\R^+$ be arbitrary non-increasing functions, and
$r:\N\to\N$ be an arbitrary non-decreasing function. Then there exists $C =
C_{\ref{thm:superdecomp}}(\delta,\eta,p,r,\zeta, C_0)$ such that
the following holds.

Given $f: \F_p^n \to \zo$ and a polynomial factor
$\calb_0$ of degree at most $d$ and complexity at most $C_0$, there
exist functions $f_1, f_2, f_3: \F_p^n \to \R$, a semantic refinement
$\calb$ of $\calb_0$ of degree at most $d$ and a syntactic refinement
$\calb'$ of $\calb$ of degree at most $d$ and of complexity at most
$C$, such that the following hold:
\begin{itemize}
\item
$f=f_1+f_2+f_3$
\item
$f_1 = \E[f|\calb']$
\item
$\|f_2\|_{U^{d+1}} \leq \eta(|\calb'|)$
\item
$\|f_3\|_2 \leq \delta(|\calb|)$
\item
$f_1$ and $f_1 + f_3$ have range $[0,1]$; $f_2$ and $f_3$ have
range $[-1,1]$.
\item
$\calb$ is of rank at least $r(|\calb|)$.
\item
$\calb'$ is of rank at least $r(|\calb'|)$.
\item
$\calb'$ $\zeta$-represents $\calb$ with respect to $f$.
\end{itemize}
\end{theorem}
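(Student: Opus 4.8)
The plan is to combine the Super Decomposition's two ingredients: the ``representing syntactic refinement'' machinery from Lemma~\ref{lem:prepsuper}, and the Strong Decomposition Theorem~\ref{thm:strongdecomp} applied relative to a base factor. First I would fix the parameters in the right order. Apply Lemma~\ref{lem:prepsuper} with suitable inner functions to $\calb_0$ (after first upgrading $\calb_0$ to a robust, high-rank factor using Observation~\ref{obs:rob} combined with Lemma~\ref{lem:rank}, or more directly Lemma~\ref{lem:rankrob}) to obtain a factor $\calb$ which is robust enough and of high enough rank that it admits a $\zeta$-representing \emph{syntactic} refinement $\calb'$ that is itself $(h,\gamma)$-robust and of rank $\geq r(|\calb'|)$, for parameters $h$ and $\gamma$ to be chosen. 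The key subtlety is that the robustness parameter $\gamma$ fed into the construction of $\calb'$ must be allowed to depend on $|\calb|$ (this is exactly why Lemma~\ref{lem:prepsuper} takes $\gamma:\N\to(0,1)$ rather than a constant): I would set $\gamma(m) = \gamma_{\ref{lem:robdec2}}(\delta(m))$, so that the eventual $l_2$ bound on the error term is $\delta(|\calb|)$, not $\delta(|\calb'|)$.

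Next I would run the decomposition step of the Strong Decomposition Theorem on $\calb'$. Since $\calb'$ is $(h_{\ref{lem:robdec2}}^{(\eta,p)},\gamma_{\ref{lem:robdec2}}(\delta(|\calb|)))$-robust among factors of degree bound $d$ --- which is exactly what I arranged in the previous paragraph --- Lemma~\ref{lem:robdec2} yields a decomposition $f=f_1+f_2+f_3$ with $f_1=\E[f|\calb']$ constant on cells of $\calb'$ and ranging in $[0,1]$; $f_2$ ranging in $[-1,1]$ with $\|f_2\|_{U^{d+1}}\leq\eta(|\calb'|)$; and $f_3$ ranging in $[-1,1]$ with $\|f_3\|_2\leq\delta(|\calb|)$, and $f_1+f_3$ ranging in $[0,1]$. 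The rank conditions on $\calb$ and on $\calb'$, and the fact that $\calb'$ $\zeta$-represents $\calb$ with respect to $f$, are all delivered directly by Lemma~\ref{lem:prepsuper}. Finally, $\calb$ is a semantic refinement of $\calb_0$ (compose the semantic refinement produced by the initial robust/rank upgrade with the identity), and $\calb'$ is a syntactic refinement of $\calb$; set $C=C_{\ref{thm:superdecomp}}(\delta,\eta,p,r,\zeta,C_0)$ to be the complexity bound $S_{\ref{lem:prepsuper}}(\cdot)$ coming out of Lemma~\ref{lem:prepsuper}.

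I would write this out as: \emph{Set the parameters as follows}, listing $h=h_{\ref{lem:robdec2}}^{(\eta,p)}$, $\gamma(m)=\gamma_{\ref{lem:robdec2}}(\delta(m))$, then $H=H_{\ref{lem:prepsuper}}^{(h,\gamma,p,r)}$, $R=R_{\ref{lem:prepsuper}}^{(h,\gamma,p,r)}$, $\Gamma=\Gamma_{\ref{lem:prepsuper}}(\zeta)$, and $C=C_{\ref{thm:superdecomp}}(\delta,\eta,p,r,\zeta,C_0)=S_{\ref{lem:prepsuper}}(T_{\ref{obs:rob}}(\cdots),h,p,r,\gamma)$ or whatever composition is forced. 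Then: apply Observation~\ref{obs:rob} (or Lemma~\ref{lem:rankrob}) to $\calb_0$ to get a robust, high-rank $\calb$ refining $\calb_0$ that meets the hypotheses of Lemma~\ref{lem:prepsuper}; invoke Lemma~\ref{lem:prepsuper} to get $\calb'$; invoke Lemma~\ref{lem:robdec2} (equivalently the decomposition step inside Theorem~\ref{thm:strongdecomp}) on $\calb'$ to get $f_1,f_2,f_3$; then check each bulleted conclusion against what has been produced.

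The main obstacle is bookkeeping rather than mathematics: I must make sure the robustness parameter threatening to get ``used up'' --- namely the one controlling $\|f_3\|_2$ --- is fed through as a function of $|\calb|$ and not of $|\calb'|$, while simultaneously the Gowers-norm threshold for $f_2$ is allowed to be a function of $|\calb'|$, and that all rank requirements compose correctly (the rank of $\calb$ must dominate $r$ evaluated at the final complexity bound plus one, so that the syntactic-refinement-preservation clauses of Lemmas~\ref{lem:rank} and~\ref{lem:rankrob} fire). Because all of this has been pre-arranged in the statements of Lemmas~\ref{lem:repadd}, \ref{lem:prepsuper} and Theorem~\ref{thm:strongdecomp} --- which were clearly designed with exactly this application in mind --- the proof itself should be short: essentially a definition of the constants followed by three invocations and a verification of the eight bullet points.
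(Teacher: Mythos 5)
Your proposal is correct and matches the paper's own proof essentially step for step: the paper applies Lemma~\ref{lem:rankrob} with parameters $H_{\ref{lem:prepsuper}}^{(h,\gamma,p,r)}$, $R_{\ref{lem:prepsuper}}^{(h,\gamma,p,r)}$, $\Gamma_{\ref{lem:prepsuper}}(\zeta)$ to produce $\calb$ from $\calb_0$, then invokes Lemma~\ref{lem:prepsuper} to obtain the $\zeta$-representing syntactic refinement $\calb'$ which is $(h,\gamma(|\calb|))$-robust, and finally applies Lemma~\ref{lem:robdec2} to $\calb'$ for the decomposition. Your key observation — that $\gamma$ must be fed to Lemma~\ref{lem:prepsuper} as the function $m\mapsto\gamma_{\ref{lem:robdec2}}(\delta(m))$ precisely so that the $l_2$ bound on $f_3$ depends on $|\calb|$ rather than $|\calb'|$ — is exactly the point the paper's construction hinges on.
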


\begin{proof}
Let the function $\gamma$ be defined by $\gamma(m)=\gamma_{\ref{lem:robdec2}}(\delta(m))$ and let $h$ be defined by $h(m)=h_{\ref{lem:robdec2}}^{(\eta,p)}(m)$. Then set:

$$C_{\ref{thm:superdecomp}}(\delta,\eta,p,r,\zeta,C_0)= S_{\ref{lem:prepsuper}}\left(T_{\ref{lem:rankrob}}\left(C_0,H_{\ref{lem:prepsuper}}^{(h,\gamma,p,r)},p,R_{\ref{lem:prepsuper}}^{(h,\gamma,p,r)},\Gamma_{\ref{lem:prepsuper}}(\zeta)\right),h,p,r,\gamma\right).$$

Given $\calb_0$, we set $\calb$ to be the semantic refinement that is guaranteed by Lemma \ref{lem:rankrob} that is $\left(H_{\ref{lem:prepsuper}}^{(h,\gamma,p,r)},\Gamma_{\ref{lem:prepsuper}}(\zeta)\right)$-robust and is of rank at least $R_{\ref{lem:prepsuper}}^{(h,\gamma,p,r)}(|\calb|)$.
% -- note that we use monotonicity assumptions everywhere to move from $p$ to $d$ in the parameters.
$|\calb|$ will be bounded by $T_{\ref{lem:rankrob}}\left(C_0,H_{\ref{lem:prepsuper}}^{(h,\gamma,p,r)},p,R_{\ref{lem:prepsuper}}^{(h,\gamma,p,r)},\Gamma_{\ref{lem:prepsuper}}(\zeta)\right)$.
Note also that $R_{\ref{lem:prepsuper}}^{(h,\gamma,p,r)}(|\calb|)\geq r(|\calb)$.

Now we can use Lemma \ref{lem:prepsuper} to provide us a $\zeta$-representing syntactic refinement $\calb'$ of $\calb$, that is of rank at least $r(|\calb'|)$, and is $(h,\gamma(|\calb|))$-robust and thus $\left(h_{\ref{lem:robdec2}}^{(\eta,p)},\gamma_{\ref{lem:robdec2}}(\delta(|\calb|))\right)$-robust. The factor $\calb'$ satisfies the required complexity upper bound by substituting the bound on $|\calb|$ into the guaranteed complexity bound of Lemma \ref{lem:prepsuper}. Finally Lemma \ref{lem:robdec2} provides the required decomposition $f=f_1+f_2+f_3$ over $\calb'$.
\end{proof}

One could envision future applications in which we would need the whole of $\calb'$. Here we will need a careful choice of one cell of $\calb'$ for every cell of $\calb$. This selection will satisfy the following:
\begin{itemize}
\item The choice of cells will be made in a ``uniform'' manner. This part is helped by $\calb'$ being a syntactic refinement. We will in fact set the ``subcell ID'' (the values of the polynomials appearing in $\calb'$ and not in $\calb$) to be the same for all cells of $\calb$.
\item All the subcells will feature a ``good'' decomposition, in terms of the norm of $f_3$.
\item Most subcells will ``well-represent'' their corresponding cells from $\calb$, in terms of the corresponding conditional expectation of $f$.
\end{itemize}

Now we state this formally.

\begin{corollary}[Subcell Selection]\label{cor:subatom}
Suppose $\zeta > 0$ and $d \geq 1$ is an integer less than $p$. Let $\eta, \delta: \N \to \R^+$ be arbitrary
non-increasing functions, and let $r: \N \to \N$ be an arbitrary
non-decreasing function. Then, there exist $C =
C_{\ref{cor:subatom}}(\delta, \eta, p,  r, \zeta)$ such that the
following holds.

Given $f: \F_p^n \to \zo$, there exist functions $f_1, f_2, f_3 :
\F_p^n \to \R$, a polynomial factor $\calb$ with cells denoted by
elements of $\F_p^{|\calb|}$, a syntactic refinement $\calb'$ of
$\calb$ with complexity at most $C$ and cells denoted by elements of
$\F_p^{|\calb|} \times \F_p^{|\calb'| - |\calb|}$, and an element $s
\in \F_p^{|\calb'| - |\calb|}$ such that the following is true:
\begin{itemize}
\item
$f = f_1 + f_2 + f_3$
\item
$f_1 = \E[f| \calb']$
\item
$\|f_2\|_{U^{d+1}} < \eta(|\calb'|)$
\item
$f_1$ and $f_1 + f_3$ have range $[0,1]$; $f_2$ and $f_3$ have
range $[-1,1]$.
\item
$\calb$ is of rank at least $r(|\calb|)$
\item
$\calb'$ is of rank at least $r(|\calb'|)$
\item
For every $c \in \F_p^{|\calb|}$, the subcell $c' = (c, s) \in
\F_p^{|\calb'|}$ has the property that
$\E_{\calb(x) = c'}[(f_3(x))^2] < (\delta(|\calb|))^2$.
\item
$\Pr_{c \in \F_p^{|\calb|}}[|\E[f|c] - \E[f|(c,s)]|>\zeta]<\zeta$, where we denote $\E[f|c]=\E[f(x)|\calb(x)=c]$ and $\E[f|(c,s)]=\E[f(x)|\calb'(x)=(c,s)]$.
\end{itemize}
\end{corollary}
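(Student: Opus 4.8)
The plan is to derive everything from the Super Decomposition Theorem \ref{thm:superdecomp} by an averaging argument over the ``subcell IDs''. First I would apply Theorem \ref{thm:superdecomp} with the same $\eta$, $r$, $\zeta$, but with a carefully \emph{rescaled} error parameter $\delta$: we need that for a cell $c$ of $\calb$ with a globally small $L^2$-error and a representation property, some subcell $(c,s)$ inside $c$ inherits a comparably small local error and representation gap. So I would run Theorem \ref{thm:superdecomp} with $\delta'(m) = \delta(m)/K$ for a suitable absolute constant $K$ (and trivial base factor $\calb_0$), obtaining $f = f_1 + f_2 + f_3$, factors $\calb \preceq_{syn} \calb'$ with the stated rank bounds, $\|f_3\|_2 \le \delta'(|\calb|)$, and $\calb'$ $\zeta'$-represents $\calb$ for a $\zeta'$ to be chosen. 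Setting $C_{\ref{cor:subatom}}$ equal to the resulting $C_{\ref{thm:superdecomp}}$ bound takes care of the complexity constraint.

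Next comes the selection of $s$. Write each cell of $\calb'$ as $(c,t)$ with $c \in \F_p^{|\calb|}$ and $t \in \F_p^{|\calb'|-|\calb|}$. Using Lemma \ref{lem:cellsize} (the rank of $\calb'$ is large enough), every subcell $(c,t)$ has density $p^{-|\calb'|}(1\pm o(1))$, hence
\[
\E_{c,t}\big[\E_{\calb'(x)=(c,t)}[(f_3(x))^2]\big] = (1\pm o(1))\,\|f_3\|_2^2 \le 2\,\delta'(|\calb|)^2 .
\]
So for a uniformly random $t$, the expectation over $c$ of the local squared error of $f_3$ on $(c,t)$ is at most $2\delta'(|\calb|)^2$ on average over $t$; by Markov there is a set $S_1$ of $t$'s of measure $\ge 2/3$ on which this $c$-average is at most $6\delta'(|\calb|)^2$. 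That gives ``most'' subcells small, not ``all'' — so to push the last bullet from ``most $c$'' to ``every $c$'', I would instead invoke the super decomposition's guarantee more carefully: the $L^2$ bound on $f_3$ is a function of $|\calb|$, and since the number of cells of $\calb$ is $p^{|\calb|}$, choosing $\delta$ in the application so that $\delta'(|\calb|)^2 \le \delta(|\calb|)^2 \cdot p^{-|\calb|}/3$ forces $\sum_c \E_{\calb(x)=(c,t)}[(f_3)^2] \cdot p^{-|\calb|}$ to be so small that, after picking a good $t$, \emph{every} single $c$ has $\E_{\calb'(x)=(c,t)}[(f_3)^2] < \delta(|\calb|)^2$; this is exactly the place where we must feed Theorem \ref{thm:superdecomp} a $\delta$ that already anticipates the extra $p^{|\calb|}$ factor, which is legitimate since that $\delta$ is allowed to be any non-increasing function of $|\calb|$. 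Concretely I would set $\tilde\delta(m) = \delta(m)\cdot p^{-m}$ and apply the theorem with $\tilde\delta$ in place of $\delta$.

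For the last bullet, the $\zeta$-representation property of $\calb'$ for $\calb$ says: for at most a $\zeta'$-fraction of cells $c$ of $\calb$, more than a $\zeta'$-fraction of the subcells $(c,t)$ inside $c$ satisfy $|\E[f|c] - \E[f|(c,t)]| > \zeta'$. Averaging over $c$ of the indicator ``$(c,t)$ is bad'' and swapping the order of averaging, the measure (over $t$) of subcell-IDs $t$ for which $\Pr_c[\,|\E[f|c]-\E[f|(c,t)]| > \zeta'\,] > \sqrt{\zeta'}$ is at most $O(\sqrt{\zeta'})$; choosing $\zeta'$ a small enough polynomial function of $\zeta$ (say $\zeta' = (\zeta/4)^2$) makes this at most $1/3$, and on the complementary set $S_2$ the desired bound $\Pr_c[|\E[f|c]-\E[f|(c,s)]|>\zeta] < \zeta$ holds. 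Since the good-error set $S_1$ (from the previous paragraph) and $S_2$ each have measure $> 2/3$, they intersect, and any $s$ in $S_1 \cap S_2$ works; the remaining bullets ($f = f_1+f_2+f_3$, $f_1 = \E[f|\calb']$, the Gowers norm bound on $f_2$, the ranges, and the two rank bounds) are carried over verbatim from Theorem \ref{thm:superdecomp}.

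The main obstacle is the bookkeeping around the last bullet and the ``\emph{every} subcell is good'' requirement simultaneously: we are choosing a single $s$ that must satisfy an ``all $c$'' condition (small $f_3$) and a ``most $c$'' condition (good representation), and the only leverage for the ``all $c$'' part is that $\delta$ in Theorem \ref{thm:superdecomp} may depend on $|\calb|$ — so the whole argument hinges on pre-inflating $\delta$ by the $p^{|\calb|}$ factor before invoking the super decomposition, and then checking that Lemma \ref{lem:cellsize}'s error term (controlled by the rank function $r$, which we are also free to inflate) is negligible against this inflated target. Everything else is a routine averaging / Markov / union-bound computation.
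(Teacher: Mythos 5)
Your proposal is correct and takes essentially the same route as the paper's own proof: apply the super decomposition (Theorem~\ref{thm:superdecomp}) with the trivial base factor, but feed it a pre-inflated error function $\tilde\delta(m)\approx\delta(m)\,p^{-m}$, a rank function beefed up so that Lemma~\ref{lem:cellsize} gives constant-factor control of cell sizes, and a rescaled representation parameter; then choose a single subcell ID $s$ from the intersection of two ``good'' sets of measure at least $3/4$ each (one for the per-cell $f_3$-error, one for the representation condition). This is exactly how the paper proceeds, with $\Delta(m)=0.1\,\delta(m)/p^m$, $r''(m)=\max\bigl(r(m),\,r_{\ref{thm:rankreg}}(p,\,p^{-m}/10)\bigr)$, and representation parameter $\zeta/4$; the paper also gets the ``every $c$'' conclusion by bounding, for each fixed $c$, the fraction of bad subcell IDs $t$ at $0.1\,p^{-|\calb|}$ and then union-bounding over all $p^{|\calb|}$ cells, which is cleaner than the global count of bad pairs you sketch.

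One small quantitative slip to flag: your $\tilde\delta(m)=\delta(m)\,p^{-m}$ omits a small absolute constant. Working through your counting of bad pairs $(c,t)$ using $\|f_3\|_2^2\le\tilde\delta(|\calb|)^2$ and Lemma~\ref{lem:cellsize} with, say, $10\%$ cell-size error, one gets that the fraction of subcell IDs $t$ for which some cell $(c,t)$ violates $\E[(f_3)^2\mid(c,t)]<\delta(|\calb|)^2$ is at most $p^{-|\calb|}/0.9$. For $p=2$ and $|\calb|=1$ this exceeds $1/2$, so you cannot guarantee the intersection of this good set with a measure-$\ge 2/3$ set from the representation argument. Inserting the paper's factor $0.1$ (or any sufficiently small absolute constant) into $\tilde\delta$, and correspondingly tightening the cell-size error in the inflated rank function, removes this edge case; with that patch, your argument goes through and matches the paper's.
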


\begin{proof}
Let $r'(m)=r_{\ref{thm:rankreg}}(p,p^{-m}/10)$, so that by Theorem~\ref{thm:rankreg}, a polynomial factor $\calb$ of degree $d$ and rank at least $r'(|\calb|)$ satisfies for any $c \in \F_p^{|\calb|}$
$$
0.9 \; p^{-|\calb|} \le \Pr_{x \in \F_p^n}[\calb(x)=c] \le 1.1 \; p^{-|\calb|}.
$$

Set $C_{\ref{cor:subatom}}(\delta, \eta, p,  r, \zeta) =
C_{\ref{thm:superdecomp}}(\Delta,\eta,p,r'',\zeta/4, 1)$, where
$\Delta(m) =  0.1 \cdot \delta(m)/p^{m}$ and $r''(m)=\max(r(m),r'(m))$. Apply Theorem \ref{thm:superdecomp} with
$\calb_0$ being the trivial partitioning consisting of one cell. This
yields a factor $\calb$ with rank at least $r''(|\calb|)$, and a syntactic refinement $\calb'$ of $\calb$ with rank
at least $r''(|\calb'|)$. Let $s$ be a uniformly chosen random element from
$\F_p^{|\calb'| - |\calb|}$.

Observe that for every cell $c \in \F_p^{|\calb|}$ of $\calb$, at most
a $0.1 p^{-|\calb|}$ fraction of the subcells $c' \in \{c\} \times
\F_p^{|\calb'|-|\calb|}$ of $\calb'$  have $\E_x[(f_3(x))^2|c'] > \delta(|\calb|)^2$. To show this, assume on the contrary
that even for one cell $c \in \F_p^{|\calb|}$ this event does not occur, and denote by $S$
the set of cells $c'\in\F_p^{|\calb'|}$ of $\calb'$ that lie in $c$ for which $\E_x[(f_3(x))^2|c'] > \delta(|\calb|)^2$. By our assumption
$|S|\geq(0.1p^{-|\calb|})p^{|\calb'|-|\calb|}$, and then
$\|f_3\|_2^2 = \E_{x  \in   \F_p^n}[(f_3(x))^2] > \delta(|\calb|)^2 \Pr_{x \in \F_p^n}[\calb(x)=c \wedge \calb'(x)\in S] \ge
0.09 \;\delta(|\calb|)^2/p^{2|\calb|} >
\Delta(|\calb|)^2$, a contradiction to the guarantee of Theorem
\ref{thm:superdecomp}.

Hence, for any fixed $c$, the probability that $s$ is such that $\E_x[(f_3(x))^2|(c,s)] > \delta(|\calb|)^2$
is at most $0.1p^{-|\calb|}$. By the union bound, with probability at
least $3/4$, for every $c \in \F_2^{|\calb|}$ the subcell
$c' = (c,s)$ has the property that $\E_x[(f_3(x))^2|c'] \leq \delta(|\calb|)^2$.

Also, because $\calb'$ $\zeta/4$-represents $\calb$, the expected
number of cells $c$ for which $|\E[f|c] - \E[f|(c,s)]| > \zeta$ is
less than $\zeta/4  \cdot p^{|\calb|}$. So, by the Markov inequality, with probability at least
$3/4$ $$\Pr_{c \in \F_p^{|\calb|}}[|\E[f|c] - \E[f|(c,s)]|>\zeta]<\zeta$$.

We conclude that an $s$ exists with both the desired properties.
\end{proof}

\subsection{Extending to Multiple Functions}\label{subsec:multi}

The theorems so far referred to only a single function $f:\F_p^n\to\{0,1\}$. However, we
actually require decomposition theorems which work for several
functions $f^{(1)},\ldots,f^{(R)}:\F_p^n\to\{0,1\}$ simultaneously with a single polynomial factor; alternatively, this could be thought of as decomposing a single ``vector'' function $f:\F_p^n\to\{0,1\}^R$.

It is quite straightforward to adapt all the previous proofs to this framework. The main adaptation to be done is the following version of the definition of a density index.

\begin{definition}
The {\em density index} of a factor $\calb$ with respect to a vector function $f=(f^{(1)},\ldots,f^{(R)}):\F_p^n\to\{0,1\}^R$ is the sum of the squared $l_2$ norms of the conditional expectation of the $f^{(i)}$ functions, that is $\indd(\calb)=\sum_{i=1}^R\E\left[(\E[f^{(i)}|\calb])^2\right]$.

Given a function $h:\N\to\N$ and a real parameter $\gamma$, A factor $\calb$ is {\em $(h,\gamma)$-robust} (semantically) if there exists no $\calb'$ which is a semantic refinement of $\calb$ for which $|\calb'|\leq h(|\calb|)$ and $\indd(\calb')\geq\indd(\calb)+\gamma$.
\end{definition}

From here we can follow nearly the exact same arguments. The main difference is that now all resulting bounds will depend on $R$, starting with the multiple functions analog analog of $T_{\ref{lem:rankrob}}$, as the index is now bounded by $R$ rather than $1$. Eventually we can reach the following version of the subcell selection theorem.

\begin{theorem}[Subcell Selection -- Multiple Functions]\label{thm:subatom2}
Suppose $\zeta > 0$ and $d \geq 1$ is an integer less than $p$. Let $\eta, \delta: \N \to \R^+$ be arbitrary
non-increasing functions, and let $r: \N \to \N$ be an arbitrary
non-decreasing function. Then, there exist $C =
C_{\ref{thm:subatom2}}(\delta, \eta, p,  r, \zeta, R)$ such that the
following holds.

Given $f^{(1)},\dots,f^{(R)}: \F_p^n \to \zo$, there exist functions $f^{(i)}_1, f^{(i)}_2, f^{(i)}_3 :
\F_p^n \to \R$ for all $i \in [R]$, a polynomial factor $\calb$ with cells denoted by
elements of $\F_p^{|\calb|}$, a syntactic refinement $\calb'$ of
$\calb$ with complexity at most $C$ and cells denoted by elements of
$\F_p^{|\calb|} \times \F_p^{|\calb'| - |\calb|}$, and an element $s
\in \F_p^{|\calb'| - |\calb|}$ such that the following is true:
\begin{itemize}
\item
$f^{(i)} = f_1^{(i)} + f_2^{(i)} + f_3^{(i)}$ for every $i \in [R]$.
\item
$f_1^{(i)} = \E[f^{(i)}| \calb']$ for every $i \in [R]$.
\item
$\|f_2^{(i)}\|_{U^{d+1}} < \eta(|\calb'|)$ for every $i \in [R]$.
\item
For every $i \in [R]$, $f_1^{(i)}$ and $f_1^{(i)} + f_3^{(i)}$ have range
$[0,1]$, and  $f_2^{(i)}$ and $f_3^{(i)}$ have range $[-1,1]$.
\item
$\calb$ is of rank at least $r(|\calb|)$
\item
$\calb'$ is of rank at least $r(|\calb'|)$
\item
for every $c \in \F_p^{|\calb|}$, the subcell $c' = (c, s) \in
\F_2^{|\calb'|}$ has the property that
$\E_x[(f_3^{(i)}(x))^2~|~\calb'(x) = (c,s)] < (\delta(|\calb|))^2$
for every $i \in [R]$.
\item
$\Pr_{c \in \F_p^{|\calb|}}[\exists_{i\in [R]}|\E[f^{(i)}|c] -
\E[f^{(i)}|(c,s)]|>\zeta]<\zeta$, where we denote $\E[f|c]=\E[f(x)|\calb(x)=c]$ and $\E[f|(c,s)]=\E[f(x)|\calb'(x)=(c,s)]$.
\end{itemize}
\end{theorem}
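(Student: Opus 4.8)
The plan is to observe that the entire development of Subsections~\ref{subsec:refrob}--\ref{subsec:super}, from robustness up through the Super Decomposition Theorem~\ref{thm:superdecomp} and Corollary~\ref{cor:subatom}, carries over essentially word for word once the scalar density index is replaced by the vector-valued density index of Subsection~\ref{subsec:multi}; one then reruns the proof of Corollary~\ref{cor:subatom} with union bounds over the $R$ functions and with the auxiliary error parameters rescaled by powers of $R$. So the first step is to record the multiple-function forms of Lemmas~\ref{lem:inc}, \ref{lem:robdec1}, \ref{lem:robdec2}, \ref{lem:repadd}, \ref{lem:prepsuper}, \ref{lem:rankrob}, and Theorem~\ref{thm:superdecomp}, in which a single polynomial factor simultaneously decomposes $f^{(1)},\dots,f^{(R)}$.

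The one genuinely new observation needed for this step is that a bound on the increase of the \emph{aggregate} index controls each coordinate separately. Indeed, for every $i\in[R]$ and every semantic refinement $\calb'$ of $\calb$, Cauchy--Schwarz gives $\E[(\E[f^{(i)}|\calb'])^2]\ge\E[(\E[f^{(i)}|\calb])^2]$, so $\indd(\calb')-\indd(\calb)$ is a sum of $R$ non-negative terms; hence if the aggregate index rises by at most $\gamma$ then so does each coordinate. Feeding this through the proofs: in the analogue of Lemma~\ref{lem:robdec2} one has $\|f_3^{(i)}\|_2^2$ equal to the increase of the $i$-th coordinate of the index, which the robustness condition (taken with the parameter $\delta(|\calb|)^2$) caps simultaneously for all $i$ by $\delta(|\calb|)^2$; the contrapositive of the analogue of Lemma~\ref{lem:inc} yields $\|f_2^{(i)}\|_{U^{d+1}}\le\eta(|\calb'|)$ for all $i$; and in the analogue of Lemma~\ref{lem:repadd}, failure of $\calb'$ to $\zeta$-represent $\calb$ with respect to even one $f^{(i)}$ forces the $i$-th coordinate, and hence the aggregate index, to jump by $\gamma_{\ref{lem:repadd}}(\zeta)$, contradicting robustness. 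Since the aggregate index is now bounded by $R$ rather than by $1$, all the resulting complexity and rank functions pick up a dependence on $R$, but the arguments of Lemmas~\ref{lem:rankrob}, \ref{lem:prepsuper} and Theorem~\ref{thm:superdecomp} are otherwise unchanged. The upshot is a multiple-function Super Decomposition Theorem producing $\calb\preceq_{sem}\calb_0$ and $\calb'\preceq_{syn}\calb$ of degree at most $d$, with $|\calb'|$ bounded depending on $R$, both of the prescribed ranks, and decompositions $f^{(i)}=f_1^{(i)}+f_2^{(i)}+f_3^{(i)}$ with $f_1^{(i)}=\E[f^{(i)}|\calb']$, $\|f_2^{(i)}\|_{U^{d+1}}\le\eta(|\calb'|)$, $\|f_3^{(i)}\|_2\le\delta(|\calb|)$, the usual range conditions, and $\calb'$ $\zeta$-representing $\calb$ with respect to each $f^{(i)}$.

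The final step is to imitate the proof of Corollary~\ref{cor:subatom}. I would keep $r'(m)=r_{\ref{thm:rankreg}}(p,p^{-m}/10)$ and $r''(m)=\max(r(m),r'(m))$, so that by Theorem~\ref{thm:rankreg} and Lemma~\ref{lem:cellsize} every cell of a factor of rank at least $r''(|\calb|)$ has relative size $p^{-|\calb|}(1\pm 0.1)$, apply the multiple-function Super Decomposition Theorem with $\calb_0$ trivial, rank function $r''$, error function $\Delta(m)=0.1\,\delta(m)/(\sqrt{R}\,p^{m})$, and representation parameter $\zeta/(8R)$, and pick $s$ uniformly at random. For each fixed $i$ and cell $c$, the bound $\|f_3^{(i)}\|_2\le\Delta(|\calb|)$ together with near-uniformity of cell sizes shows, exactly as before, that at most a $0.1\,p^{-|\calb|}/R$ fraction of the subcells inside $c$ have conditional second moment of $f_3^{(i)}$ exceeding $\delta(|\calb|)^2$; a union bound over $i\in[R]$ and then over the at most $p^{|\calb|}$ cells shows that with probability at least $3/4$ the chosen $s$ makes $\E_x[(f_3^{(i)}(x))^2\mid\calb'(x)=(c,s)]<\delta(|\calb|)^2$ for all $c$ and all $i$. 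Likewise, since $\calb'$ $(\zeta/(8R))$-represents each $f^{(i)}$, for each $i$ the expected (over $s$) number of cells $c$ with $|\E[f^{(i)}|c]-\E[f^{(i)}|(c,s)]|>\zeta$ is at most $(\zeta/(4R))p^{|\calb|}$; summing over $i$ and applying Markov, with probability at least $3/4$ the fraction of cells $c$ with $\exists i\ |\E[f^{(i)}|c]-\E[f^{(i)}|(c,s)]|>\zeta$ is below $\zeta$. Intersecting the two $3/4$-probability events yields a good $s$, and $C_{\ref{thm:subatom2}}(\delta,\eta,p,r,\zeta,R)$ is taken to be the complexity bound from the multiple-function Super Decomposition Theorem above.

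I do not expect a real obstacle here: the one substantive point is the coordinate-wise Cauchy--Schwarz remark, which is exactly what lets a single robust high-rank factor serve all $R$ functions at once, and everything else is bookkeeping of the $R$-dependence and of the union bounds, together with the harmless rescalings of $\Delta$ and of the representation parameter. The closest thing to a subtlety is keeping these rescaled quantities in the correct form (e.g. $\Delta$ still a function of the factor size, with the $R$ absorbed into the constant) when feeding them to the Super Decomposition Theorem, precisely as in the scalar case.
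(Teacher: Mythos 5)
Your proposal is correct and follows exactly the approach the paper sketches in Subsection~\ref{subsec:multi}: redefine the density index as the sum over $i\in[R]$, note that since each coordinate increase is non-negative the aggregate robustness bound controls each $\|f_3^{(i)}\|_2$ and each representation condition simultaneously, and then rerun the Corollary~\ref{cor:subatom} argument with $R$-rescaled parameters and union bounds over $i$. The coordinate-wise Cauchy--Schwarz observation you isolate is precisely the point the paper leaves implicit, and your bookkeeping of the $R$-dependence ($\Delta$ picking up a $1/\sqrt{R}$ and the representation parameter a $1/(8R)$) is the right way to carry it out.
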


\section{Counting and Testability}\label{sec:count}

\subsection{Counting Patterns inside Cells}\label{subsec:counting}
Let $\calb$ be a polynomial factor generated by the polynomials $P_1,
\dots, P_C: \F_p^n \to \F_p$, and let $b_1,\dots,b_m \in \F_p^C$
denote the images of $m$ cells of $\calb$. We will want to estimate
probabilities of the following form:
\begin{equation}\label{eqn:expec}
\Pr_{x_1,\dots,x_\ell}[\calb(a_1(x_1,\dots,x_\ell)) = b_1 \wedge
\calb(a_2(x_1,\dots,x_\ell)) = b_2 \wedge \cdots \wedge
\calb(a_m(x_1,\dots,x_\ell)) = b_m]
\end{equation}
where $(a_1,\dots,a_m)$ is an affine constraint of size $m$ on $\ell$
variables.  In Lemma \ref{lem:cellsize}, we analyzed the expectation
when $\ell = m = 1$ and $a_1(x_1) = x_1$. In order to deal with
the more general form,  let us re-express (\ref{eqn:expec}) in the
following way:
\begin{align}
&\Pr_{x_1,\dots,x_\ell}[\calb(a_1(x_1,\dots,x_\ell)) = b_1 \wedge
\cdots \wedge \calb(a_m(x_1,\dots,x_\ell)) = b_m]\nonumber \\
&= \E_{x_1,\dots,x_\ell \in \F_p^n}\left[\prod_{i \in [C]}
  \prod_{j \in [m]} \frac{1}{p}\sum_{\lambda_{i,j} \in \F_p}
  \expo{\lambda_{i, j}\cdot (P_i(a_j(x_1,\dots,x_\ell))
    - b_{i,j})}\right]\nonumber \\
&= p^{-m C} \sum_{\lambda_{i,j} \in \F_p : \atop i \in [C], j
  \in [m]}\expo{-\sum_{i \in [C]} \sum_{j \in [m]}
  \lambda_{i,j} b_{i,j}} \E_{x_1,\dots,x_\ell}\left[\expo{
    \sum_{i \in [C]} \sum_{j \in [m]}
    \lambda_{i,j}P_i(a_j(x_1,\dots,x_\ell))}\right] \label{eqn:orth}
\end{align}

Hatami and Lovett in \cite{HL11,HL11b} studied expectations such as those in
(\ref{eqn:orth}) and proved the following dichotomy.

\begin{lemma}[Lemma 5.1 in \cite{HL11b}]\label{lem:hl1}
Suppose we are given $\eps \in (0,1)$, positive integer $d<p$ and an affine
constraint $(A,\sigma)$ where $A=(a_1, \dots, a_m)$ is of size $m$ and over
$\ell$ variables. Let $P_1, \dots, P_C: \F_p^n \to \F_p$
be a collection of polynomials of degree at most $d$  such that the
rank of the polynomial factor generated by $P_1,\dots,P_C$ is at least
$r_{\ref{thm:rankreg}}(d,\eps)$. Then, for every set of coefficients
$\Lambda = \{\lambda_{i,j}\in \F_p: i \in [C], j \in [m]\}$, if
$P_\Lambda: (\F_p^n)^{\ell} \to \F_p$ is the
polynomial defined by:
$$ P_\Lambda(X_1, \dots, X_{\ell}) = \sum_{i=1}^C \sum_{j=1}^m
\lambda_{i,j} P_i\left(a_j(X_1,\dots,X_\ell)\right)$$
then either $P_\Lambda$ is the zero polynomial, or else $\left|\E_{x_1, \dots,
  x_\ell \in \F_p^n} \expo{P_\Lambda(x_1,\dots, x_{\ell})}\right| < \eps$.
\end{lemma}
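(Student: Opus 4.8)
The plan is to reduce the dichotomy to the single-polynomial equidistribution theorem, Theorem~\ref{thm:rankreg}, by proving the clean statement that a nonzero $P_\Lambda$, viewed as a polynomial on $\F_p^{n\ell}$, has rank at least that of the factor $\calb$ (up to, if needed, a harmless additive constant depending only on $m,\ell$), hence rank at least $r_{\ref{thm:rankreg}}(d,\eps)\geq r_{\ref{thm:rankreg}}(\deg P_\Lambda,\eps)$, and then invoking Theorem~\ref{thm:rankreg} to conclude $|\E_x\expo{P_\Lambda(x)}|<\eps$. Throughout, the hypothesis $d<p$ is used to guarantee that degrees and top-degree homogeneous parts behave as expected under composition with linear maps and under the ring operations (no collapse via Frobenius), and to make available Lagrange interpolation in one variable and the non-vanishing of the binomial coefficients $\binom{d}{i}$ for $i\leq d$ in the counting steps. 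We may assume $\deg P_\Lambda\geq 1$, since a high-rank factor has no nonzero constant in the span of $P_1,\dots,P_C$.

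First I would put the forms into normal shape. Every $a_j$ has the form $X_1+\ell_j(X_2,\dots,X_\ell)$, so the span of $a_1,\dots,a_m$ among linear functionals on $\F_p^\ell$ has a basis $X_1,b_2,\dots,b_k$ with $b_2,\dots,b_k$ spanning $\mathrm{span}\{\ell_j\}$. Passing to the new, independent variables $Z_u=b_u(X)$ --- a surjective linear change of coordinates, which preserves ``$\equiv 0$'', the value of $\E\expo{\cdot}$, and the rank --- we may assume $P_\Lambda=\sum_j R_j(Z_1+\nu_j\cdot Z')$ with $Z'=(Z_2,\dots,Z_k)$, $\nu_j\in\F_p^{k-1}$, and $R_j=\sum_i\lambda_{i,j}P_i$ either zero or, if nonzero, a genuine nonzero combination of the $P_i$'s and hence (by the rank hypothesis on $\calb$) of rank at least $r_{\ref{thm:rankreg}}(d,\eps)$. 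Discarding the vanishing $R_j$'s and grouping the survivors by the value of $\nu_j$ rewrites $P_\Lambda$ as a ``multi-shift sum'' $\sum_\nu P_{\mu_\nu}(Z_1+\nu\cdot Z')$ over finitely many distinct $\nu\in\F_p^{k-1}$, each $\mu_\nu\in\F_p^C$ nonzero.

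I would then prove, by induction on $k$ (the number of independent shift-directions), that any such multi-shift sum over a high-rank factor is identically zero or of rank at least $r:=\mathrm{rank}(\calb)$. The base case $k=1$ is immediate: there is a single term $P_\mu(Z_1)$, which is $0$ if $\mu=0^C$ and otherwise of rank $\geq r$ (dummy variables do not change rank). For the inductive step I would peel off the last direction $Z_k$: assuming toward a contradiction a representation $P_\Lambda=\Gamma(Q_1,\dots,Q_t)$ with $t<r$ and $\deg Q_a<\deg P_\Lambda$, I would restrict $Z'$ to a generic line and extract the coefficient of an appropriate power of its scalar parameter (or equivalently differentiate in $Z_k$-directions); because $d<p$ this is a genuine Vandermonde/interpolation extraction, and it converts the given low-rank representation of $P_\Lambda$ into a low-rank representation of a nonzero multi-shift sum with one fewer independent direction, contradicting the inductive hypothesis. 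The case where the extraction instead forces a uniform degree drop is handled by recursing on the resulting lower-degree multi-shift sum, which is still of the required form. Combining, a nonzero $P_\Lambda$ has rank $\geq r\geq r_{\ref{thm:rankreg}}(d,\eps)$, and Theorem~\ref{thm:rankreg} finishes.

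The main obstacle is the inductive step. In the ``generic'' situation where the shifted copies $P_{\mu_\nu}(\,\cdot\,+\nu\cdot Z')$ behave independently the extraction is routine, but genuine algebraic cancellation among them is possible --- distinct $\mu_\nu$'s whose top-degree parts coincide, or characteristic-$p$ coincidences among the shift coefficients --- so one must argue that every such cancellation is already ``linear'', i.e.\ caused by a linear relation among the $\mu_\nu$'s or by a degree drop common to all surviving terms, and that the restrictions and coefficient extractions used to lower the number of directions neither destroy nonvanishing nor inflate the rank. This is exactly where the restriction to $d<p$ and to high-rank factors does the work: Theorem~\ref{thm:rankreg}, reapplied at each stage, certifies that no low-rank accident occurs among the intermediate polynomials, keeping the induction honest.
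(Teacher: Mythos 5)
This lemma is quoted by the paper directly from \cite{HL11b} (their Lemma~5.1) and is not re-proved in the paper, so there is no internal argument to compare against; I evaluate your proposal on its own. Your reduction strategy is the right one in spirit: a nonzero $P_\Lambda$ has high rank as a polynomial on $\F_p^{n\ell}$, hence small bias by Theorem~\ref{thm:rankreg}. The normal-form step (change of view to shift variables $Z_1,Z'$, grouping by the shift $\nu$, discarding zero $R_j$'s) is fine, and the base case $k=1$ is correct because dummy variables don't change rank and a nonzero $\F_p$-combination of the $P_i$ has rank at least $\mathrm{rank}(\calb)$ by definition.

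The inductive step, however, has a genuine gap. When you ``peel off'' the last direction $Z_k$ by restricting $Z'$ to a line and extracting a coefficient of the scalar parameter (equivalently, taking discrete derivatives in the $Z_k$-direction), the resulting expression is no longer a multi-shift sum built from the polynomials $P_1,\dots,P_C$: it is built from their \emph{derivatives} $D_h P_i$, which are polynomials of strictly smaller degree and are in general not in $\mathrm{span}\{P_1,\dots,P_C\}$. The hypothesis ``$\calb$ has rank $\geq r$'' gives you no control whatsoever over the rank of these derivatives, so the inductive hypothesis cannot be invoked on the reduced expression. The alternative extraction (setting $Z_k=0$, i.e.\ taking the constant Taylor coefficient) keeps you inside the $P_i$-span, but then the reduced combination may vanish for reasons other than a ``linear relation among the $\mu_\nu$'s'' --- e.g.\ because the column sums $\sum_\nu \mu_\nu$ along the fibers of $\nu_k$ all cancel --- and you have not provided an argument for that case. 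Finally, the closing remark that Theorem~\ref{thm:rankreg}, ``reapplied at each stage, certifies that no low-rank accident occurs'' is not a valid move: that theorem runs only in the direction high rank $\Rightarrow$ small bias, and you do not have a bias hypothesis at the intermediate stages to run its contrapositive. The claim you are trying to prove (a nonzero $P_\Lambda$ inherits the rank of the factor) is true, and indeed this is the content behind HL11b's Lemma~5.1, but establishing it requires a more careful argument than the one sketched --- the crux is precisely to show that after a suitable change of variables and restriction one lands on a \emph{genuine nonzero $\F_p$-combination of the original $P_i$'s}, not of their derivatives, and your proposal does not yet get you there.
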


Thus, to bound (\ref{eqn:orth}), we need to count the number of sets
$\Lambda$ such that $P_\Lambda \equiv 0$, in the language of Lemma
\ref{lem:hl1}. To this end, let us make the following definition,
following the works of Gowers and Wolf~\cite{gowers-wolf-1,gowers-wolf-2}.

\begin{definition}[Dimension of linear forms]
For a positive integer $d$ and linear form $L(X_1,\dots,X_\ell) =
\alpha_1X_1 + \alpha_2 X_2 + \cdots + \alpha_\ell X_\ell$ where
$\alpha_1,\dots,\alpha_\ell \in \F_p$, let the {\em $d$th tensor
  power of $L$} denote: $$L^{\otimes d}~\eqdef~\left(\prod_{j=1}^d
  \alpha_{i_j}~:~i_1,\dots,i_d \in [\ell]\right) \in \F_p^{\ell^d}$$

Given positive integers $d_1, \dots, d_C$ and an affine constraint
$A = (a_1, \dots, a_m)$ of size $m$ on $\ell$ variables, define the {\em
  $(d_1,\dots,d_C)$-dimension of $A$} to be:
$$\sum_{i=1}^C \dim\left(\left\{a_1^{\otimes d_i}, \dots, a_m^{\otimes d_i}\right\}\right)$$
\end{definition}

To show the relevance of the above definition, we first need an algebraic ``all or nothing'' lemma from \cite{HL11b} that concerns linear and polynomials without explicitly referring to the dimension of the forms.

\begin{lemma}[Lemma 5.2 in \cite{HL11b}]\label{lem:hl2}
Suppose $\lambda_{i,j}\in \F_p$ for $i \in [C], j \in [m]$, and $d_1,
\dots, d_C \in [d]$, where $d < p$. Also, let $(A,\sigma)$ where $A = (a_1, \dots,
a_m)$ be an affine constraint, where every linear form $a_j$ is over
variables $X_1,\dots,X_\ell$. Then, one of the following
holds:
\begin{itemize}
\item
For every collection of linearly independent polynomials $P_1, \dots,
P_C$ of degree $d_1, \dots, d_C$ respectively:
$$\sum_{i=1}^C \sum_{j=1}^m
\lambda_{i,j} P_i\left(a_j(X_1,\dots,X_\ell)\right)
\equiv 0$$
\item
For every collection of linearly independent polynomials $P_1, \dots,
P_C$ of degree $d_1, \dots, d_C$ respectively:
$$\sum_{i=1}^C \sum_{j=1}^m
\lambda_{i,j} P_i\left(a_j(X_1,\dots,X_\ell)\right) \not \equiv
0$$
\end{itemize}
\end{lemma}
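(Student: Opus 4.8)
## Proof Plan for Lemma 5.2 Analog (Lemma \ref{lem:hl2})

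The plan is to reduce the claim to a statement about the \emph{generic} behavior of the polynomial $\sum_{i,j} \lambda_{i,j} P_i(a_j(X))$ as the $P_i$ range over all linearly independent tuples of degrees $d_1,\dots,d_C$, and then argue that this generic behavior is in fact the \emph{only} behavior. First I would fix the coefficients $\lambda_{i,j}$ and the affine constraint $A$, and observe that for each fixed $i$, the expression $\sum_{j=1}^m \lambda_{i,j} P_i(a_j(X_1,\dots,X_\ell))$ is obtained from the single polynomial $P_i$ by a \emph{linear operation} that depends only on the linear forms $a_1,\dots,a_m$ and on $\lambda_{i,1},\dots,\lambda_{i,m}$ — namely, precompose with each $a_j$ and take the $\lambda_{i,j}$-weighted sum. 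The key structural fact, which I would extract using the tensor-power language just introduced, is that when $P_i$ is a polynomial of degree exactly $d_i$ written in terms of its top-degree part, the degree-$d_i$ homogeneous component of $\sum_j \lambda_{i,j} P_i(a_j(X))$ is governed by how the tensors $a_1^{\otimes d_i},\dots,a_m^{\otimes d_i}$ interact with the coefficient tensor of the top part of $P_i$. This is precisely why the $(d_1,\dots,d_C)$-dimension of $A$ was defined.

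The heart of the argument is then a \emph{linearity / Vandermonde-type} step. I would consider the map that sends a tuple $(P_1,\dots,P_C)$ of polynomials of the prescribed degrees to the polynomial $\Phi(P_1,\dots,P_C) = \sum_{i=1}^C \sum_{j=1}^m \lambda_{i,j} P_i(a_j(X))$. This map is linear in each $P_i$ separately (indeed it is linear in the whole tuple). The claim to be proven is: \emph{either $\Phi$ vanishes on every tuple of linearly independent $P_i$'s, or it vanishes on no such tuple.} The dichotomy should follow because the set of linearly independent tuples is Zariski-dense (over the relevant parameter space) and connected enough: if $\Phi$ vanished on some linearly independent tuple but not on another, one could interpolate — move one $P_i$ at a time along a line in the space of degree-$d_i$ polynomials, maintaining linear independence at all but finitely many points (since $d_i < p$, we have enough field elements, or we can pass to an extension field), and get a contradiction with $\Phi$ being a polynomial (in fact a linear) function of the coefficients that is zero at some points but not others while staying identically zero or identically nonzero on a dense set. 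The point where $d < p$ enters is to guarantee that distinct monomials of degree $\le d$ remain linearly independent as functions and that the "top-degree" decomposition behaves well — over small fields a degree-$d$ polynomial can collapse (e.g.\ $X^p = X$), which would break the clean separation by degree.

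Concretely, I would structure the proof as follows. Step one: reduce to the homogeneous case by splitting each $P_i$ into homogeneous components and noting (using $d_i < p$) that $P_i(a_j(X))$ preserves the degree-$d_i$ top part, so $\Phi \equiv 0$ iff the top-degree parts satisfy the analogous identity — hence WLOG each $P_i$ is homogeneous of degree exactly $d_i$. Step two: parametrize homogeneous degree-$d_i$ polynomials by their coefficient vectors and write $\Phi$ explicitly as a bilinear-type expression whose vanishing is a system of linear equations on $(\text{coeffs of }P_1,\dots,\text{coeffs of }P_C)$ with coefficients determined by the $\lambda_{i,j}$ and the tensors $a_j^{\otimes d_i}$. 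Step three: observe that "$\Phi\equiv 0$ for \emph{all} $(P_i)$" means the linear system is trivial, while "$\Phi\not\equiv 0$ for \emph{all} linearly independent $(P_i)$" must be argued by showing the solution space of the linear system, intersected with the linearly-independent locus, is either everything or empty — and the crucial claim is that it cannot be a proper nonempty subset. Here I would use that linear independence is an open (dense) condition and that if the linear system has a nontrivial solution at all, then it has one with linearly independent components (perturb within the solution space, which is a linear subspace of positive dimension, to achieve independence — possible when the ambient dimension of degree-$d_i$ polynomials exceeds $C$, which holds once $n$ is large, and for small $n$ the statement can be checked directly or reduced to large $n$ by padding variables). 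The main obstacle I anticipate is handling the boundary case where the dimension of the polynomial space is small relative to $C$ (so that linear independence of $C$ polynomials is a genuine constraint) and making the "perturb to achieve independence within a linear subspace" argument fully rigorous; I expect the cleanest route is to pass to $\F_{p^n}$ with $n$ growing, or to cite the corresponding argument in \cite{HL11b} for this bookkeeping, since the substance of the dichotomy is the linear-algebraic fact that a linear functional on tuples is either identically zero or nonzero on a dense set.
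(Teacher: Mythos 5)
The paper does not prove this statement: it is quoted verbatim as Lemma~5.2 of \cite{HL11b} and used as a black box, so there is no internal proof to compare against. Evaluating your attempt on its own, the tensor-power instinct is sound --- the eventual characterization is that $\Phi(P_1,\dots,P_C)=\sum_{i,j}\lambda_{i,j}P_i(a_j(X))$ vanishes identically iff $\sum_j \lambda_{i,j}a_j^{\otimes d_i}=0$ for every $i$, a criterion visibly independent of the $P_i$, which is precisely the content of the dichotomy and what Lemma~\ref{lem:count} later exploits. But two of your steps are flawed. \textbf{Step one} asserts that $\Phi\equiv 0$ iff the top-degree homogeneous parts satisfy the analogous identity, so one may pass WLOG to homogeneous $P_i$. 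This is not so: writing $P_i=\sum_e P_i^{(e)}$, the degree-$e$ homogeneous component of $\Phi$ is $\sum_i\sum_j\lambda_{i,j}P_i^{(e)}(a_j(X))$ (using $d_i<p$ so precomposition with linear forms preserves degree), hence $\Phi\equiv 0$ constrains \emph{every} degree level simultaneously, not merely the top; and the homogeneous parts $\{P_i^{(e)}:d_i=e\}$ need not remain linearly independent even when $P_1,\dots,P_C$ are. In fact the dichotomy as literally stated fails under joint linear independence alone: take $p=5$, $C=2$, $m=\ell=2$, $a_1=X_1$, $a_2=X_1+X_2$, $\lambda_{1,1}=\lambda_{2,2}=1$, $\lambda_{1,2}=\lambda_{2,1}=-1$. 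The linearly independent pair $P_1=x_1^2+1$, $P_2=x_1^2$ of degrees $(2,2)$ gives $\Phi\equiv 0$, whereas the linearly independent pair $P_1=x_1^2$, $P_2=x_2^2$ of the same degrees gives $\Phi\not\equiv 0$. The hypothesis actually needed (and met wherever the paper invokes the lemma, since the $P_i$ generate a high-rank factor) is linear independence of the degree-$e$ homogeneous parts at each level $e$, which is strictly stronger than what your reduction preserves.

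\textbf{Step three} also leans on the wrong mechanism. ``A linear functional on tuples is either identically zero or nonzero on a dense set'' is not a theorem: the kernel of a nonzero linear functional is a proper subspace of codimension one, and such a subspace contains plenty of linearly independent tuples once the ambient dimension is large, so no density or interpolation argument by itself yields an all-or-nothing conclusion on the linearly-independent locus. The dichotomy is instead a direct computation. Fix $e$ and set $\beta_i^{(r)}=\sum_j\lambda_{i,j}(a_j^{\otimes e})_r$. The coefficient of each degree-$e$ monomial of $\Phi$ in the $X$-variables is a linear combination $\sum_{i:d_i=e}\beta_i^{(r)}\cdot(\text{a coefficient of }P_i^{(e)})$; if the $P_i^{(e)}$ are linearly independent, $\Phi\equiv 0$ forces $\beta_i^{(r)}=0$ for all $i,r$, i.e.\ $\sum_j\lambda_{i,j}a_j^{\otimes e}=0$, and conversely if all the $\beta$'s vanish then $\Phi\equiv 0$ for any choice of $P_i$. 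Your write-up should aim directly at this equivalence; truncating to the top degree and then appealing to Zariski density are both, as they stand, non-sequiturs.
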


Now we can make the connection between the definition of the dimension of the linear forms, and their effect on a sequence of polynomials with given degrees.

\begin{lemma}\label{lem:count}
Let the notation here be same as in Lemma \ref{lem:hl1}. If
$d_1,\dots,d_C$ are the respective degrees of the polynomials
$P_1,\dots, P_C$ and if $s$ is the $(d_1,\dots,d_C)$-dimension of
$(a_1,\dots,a_m)$, then the number of sets $\Lambda$ for which
$P_\Lambda \equiv 0$ equals $p^{mC-s}$.
\end{lemma}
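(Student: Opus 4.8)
The plan is to realize $\mathcal{K} := \{\Lambda = (\lambda_{i,j}) : P_\Lambda \equiv 0\}$ as an $\F_p$-linear subspace of $\F_p^{mC}$ and to compute its dimension; since $\Lambda\mapsto P_\Lambda$ is linear, $\mathcal{K}$ is a subspace and the count we want equals $p^{\dim\mathcal{K}}$, so it suffices to show $\dim\mathcal{K} = mC-s$. The crucial first move is to replace the given (high rank) polynomials by a convenient choice: by Lemma~\ref{lem:hl2} (applied to a single polynomial for each index $i$, and to all of $P_\Lambda$ in general), whether $\sum_{i,j}\lambda_{i,j}P_i(a_j(X))\equiv 0$ depends only on $\Lambda$ and on the degrees $d_1,\dots,d_C$, not on which linearly independent polynomials of those degrees are used. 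Hence $\mathcal{K}$ is literally the same subspace for every such collection, and I may take the $P_i$ to be monomials $P_i = y^{\beta^{(i)}}$ with $|\beta^{(i)}| = d_i$ and the $\beta^{(i)}$ chosen distinct (or, if one prefers, generic homogeneous forms of degree $d_i$).

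Next I would expand $P_\Lambda$ multilinearly. For each $i$, since $d_i\le d<p$, there is a symmetric $d_i$-linear form $B_i$ with $B_i(y,\dots,y)=P_i(y)$. Writing $a_j(x_1,\dots,x_\ell)=\sum_{k=1}^\ell\alpha_{j,k}x_k$ with $\alpha_{j,k}\in\F_p$, multilinearity gives
\[
P_i\bigl(a_j(X)\bigr)=\sum_{k_1,\dots,k_{d_i}\in[\ell]}\Bigl(\prod_{t=1}^{d_i}\alpha_{j,k_t}\Bigr)B_i(X_{k_1},\dots,X_{k_{d_i}}),
\]
and $B_i(X_{k_1},\dots,X_{k_{d_i}})$ depends only on the multiset $M=\{k_1,\dots,k_{d_i}\}$ — indeed $M$ is exactly the multidegree of this polynomial in the blocks $X_1,\dots,X_\ell$. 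Collecting $P_\Lambda$ by pairs $(i,M)$ and using that $\prod_t\alpha_{j,k_t}$ is independent of the order, one obtains $P_\Lambda=\sum_i\sum_M c_M\bigl(\sum_j\lambda_{i,j}\prod_t\alpha_{j,k_t}\bigr)B_i(X_M)$, where $c_M$, the number of orderings of $M$, is a multinomial coefficient and hence a unit mod $p$ because $d_i<p$. For the chosen $P_i$ the family $\{B_i(X_M)\}_{i,M}$ consists of nonzero, $\F_p$-linearly independent polynomials (members with distinct $M$ sit in distinct multihomogeneous components, and the monomial pattern can be arranged so that members with the same $M$ but distinct $i$ are independent as well), so $P_\Lambda\equiv 0$ forces every coefficient $\sum_j\lambda_{i,j}\prod_t\alpha_{j,k_t}$ to vanish. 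Since $\prod_t\alpha_{j,k_t}=(a_j^{\otimes d_i})_{(k_1,\dots,k_{d_i})}$ and every tuple in $[\ell]^{d_i}$ orders some multiset, this is precisely the statement that $\sum_{j=1}^m\lambda_{i,j}\,a_j^{\otimes d_i}=0$ in $\F_p^{\ell^{d_i}}$ for every $i\in[C]$; the reverse implication is immediate.

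The remaining step is linear algebra. The condition just obtained decouples over $i$, so $\mathcal{K}=\bigoplus_{i=1}^C\ker\phi_i$, where $\phi_i:\F_p^m\to\F_p^{\ell^{d_i}}$ sends $(\mu_1,\dots,\mu_m)$ to $\sum_j\mu_j a_j^{\otimes d_i}$ and has image $\mathrm{span}\{a_1^{\otimes d_i},\dots,a_m^{\otimes d_i}\}$. Hence $\dim\ker\phi_i=m-\dim\{a_1^{\otimes d_i},\dots,a_m^{\otimes d_i}\}$, and summing over $i$ gives $\dim\mathcal{K}=mC-\sum_{i=1}^C\dim\{a_1^{\otimes d_i},\dots,a_m^{\otimes d_i}\}=mC-s$, so $|\{\Lambda:P_\Lambda\equiv 0\}|=p^{mC-s}$. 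The step I expect to be the main obstacle is the combination of the reduction to convenient polynomials and the linear independence of $\{B_i(X_M)\}$: this is exactly where Lemma~\ref{lem:hl2} is indispensable (so the count does not depend on the choice of $P_i$) and where the hypothesis $d<p$ is used twice over — to guarantee that the symmetric forms $B_i$ exist and that the multinomial factors $c_M$ are invertible. Everything past that point is a comparison of coefficients.
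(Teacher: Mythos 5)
Your proposal is correct and follows essentially the same route as the paper: invoke Lemma~\ref{lem:hl2} to replace $P_1,\dots,P_C$ by convenient degree-$d_i$ monomials, expand multilinearly to see that $P_\Lambda\equiv 0$ if and only if $\sum_j\lambda_{i,j}a_j^{\otimes d_i}=0$ for every $i$, and finish with a rank--nullity count. The paper specifically takes $P_i'(x)=x_i^{d_i}$ (so linear independence across $i$ is automatic by disjoint supports), while you allow general distinct monomials and spell out the two places where $d<p$ is used (existence of the polarization and invertibility of the multinomial coefficients), details the paper leaves implicit.
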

\begin{proof}
Notice that we want to show that the number of sets $\Lambda$ for
which $P_\Lambda \equiv 0$ is dependent just on the degrees of the
polynomials $P_1,\dots,P_C$ and not on any other specifics. For this
we use Lemma \ref{lem:hl2}, so that instead of having the polynomials
$P_1,\dots,P_C$, we can analyze a collection of much simpler linearly
independent polynomials of respective degrees $d_1,\dots,d_C$.

In particular, let us define $P_i'(x) = x_i^{d_i}$ for every $i \in [C]$
(we assume that $n> C$). Then, the polynomial $P_\Lambda'(X_1, \dots,
X_{\ell}) = \sum_{i=1}^C \sum_{j=1}^m \lambda_{i,j}
P_i'\left(a_j(X_1,\dots,X_\ell)\right)$ is identically zero exactly
when $\sum_{j=1}^m \lambda_{i,j}a_j^{\otimes d_i} = 0$ for every $i \in [C]$.

Standard linear algebra and the definition of $(d_1,\dots,d_C)$-dimension
then shows that the set of $\Lambda$'s for which $P_\Lambda' \equiv 0$
forms a linear subspace of codimension $s$.
\end{proof}

At this point, we can move to the main theorem of this
section. Let us first make the following definition, that in some ways
captures the essence of ``polynomial feasibility'' for a sequence of values.
\begin{definition}\label{def:cons}
Given an affine constraint $A = (a_1,\dots,a_m)$ and positive integers
$d_1, \dots, d_C$, we say that elements $b_1,\dots,b_m$, where
$b_j = (b_{1,j},\dots,b_{C,j}) \in \F_p^C$ for every $j\in [m]$, are {\em consistent with
  respect to $A$ and $d_1,\dots,d_C$} if the following is true:
\begin{itemize}
\item For every set $\Lambda = \{\lambda_{i,j} \in \F_p : i \in [C], j \in [m]\}$
for which  $\sum_{j \in [m]}\lambda_{i,j}
(a_j(X_1,\dots,X_\ell))^{\otimes d_i}$ equals $0$ for all $i \in [C]$,
it is the case that  $\sum_{j \in [m]} \lambda_{i,j} b_{i,j} = 0$ as
well for all $i \in [C]$.
\end{itemize}
\end{definition}

The following is easy to observe using basic linear algebra:

\begin{observation}
Being consistent is equivalent to satisfying the following condition:
For every set $\Lambda = \{\lambda_{i,j} \in \F_p : i \in [C], j \in [m]\}$
for which  $\sum_{j \in [m]}\lambda_{i,j}
(a_j(X_1,\dots,X_\ell))^{\otimes d_i}$ equals $0$ for all $i \in [C]$,
we have  $\sum_{i\in [C]}\sum_{j \in [m]} \lambda_{i,j} b_{i,j} = 0$.
\end{observation}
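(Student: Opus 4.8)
The statement to prove is the equivalence of two formulations of ``consistency'': the original Definition \ref{def:cons}, which requires $\sum_{j\in[m]}\lambda_{i,j}b_{i,j}=0$ for \emph{each} $i\in[C]$ separately whenever $\sum_{j\in[m]}\lambda_{i,j}(a_j)^{\otimes d_i}=0$ for all $i$, versus the weaker-looking condition that only the total sum $\sum_{i\in[C]}\sum_{j\in[m]}\lambda_{i,j}b_{i,j}=0$ vanishes. One direction is immediate: if each inner sum vanishes then so does the sum of all of them. The content is the converse, and the plan is to exploit the fact that the hypothesis $\sum_{j}\lambda_{i,j}(a_j)^{\otimes d_i}=0$ holding ``for all $i\in[C]$'' is a condition that is satisfied coordinate-by-coordinate in $i$, so that the space of admissible $\Lambda$ is a direct product over $i$ of the individual solution spaces.

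\textbf{The approach.} Fix an index $i_0\in[C]$ and a collection $\{\mu_j\}_{j\in[m]}$ with $\sum_{j\in[m]}\mu_j(a_j(X_1,\dots,X_\ell))^{\otimes d_{i_0}}=0$. I want to conclude $\sum_{j\in[m]}\mu_j b_{i_0,j}=0$. Define a full set $\Lambda=\{\lambda_{i,j}\}$ by setting $\lambda_{i_0,j}=\mu_j$ for all $j$, and $\lambda_{i,j}=0$ for every $i\neq i_0$ and every $j$. Then for each $i$, the sum $\sum_{j\in[m]}\lambda_{i,j}(a_j)^{\otimes d_i}$ equals $\sum_j\mu_j(a_j)^{\otimes d_{i_0}}=0$ when $i=i_0$ (by our choice of $\mu$), and equals $0$ trivially when $i\neq i_0$. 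Hence this $\Lambda$ satisfies the hypothesis of the second condition, so $\sum_{i\in[C]}\sum_{j\in[m]}\lambda_{i,j}b_{i,j}=0$. But by construction this double sum collapses to $\sum_{j\in[m]}\mu_j b_{i_0,j}$, giving exactly what we wanted. Since $i_0$ was arbitrary, the original Definition \ref{def:cons} is recovered.

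\textbf{Main obstacle.} There is essentially no obstacle here; the only thing one must be careful about is that the constraint defining the relevant $\Lambda$'s is genuinely ``per-$i$'' (that is, $\sum_j\lambda_{i,j}(a_j)^{\otimes d_i}=0$ is a separate equation for each $i$ involving only the $\lambda_{i,\cdot}$ row), which is exactly why the coordinate projection $\Lambda\mapsto(\mu_j)_j=(\lambda_{i_0,j})_j$ maps the full solution set onto the individual solution set. One should also note that padding with zeros in the other rows does not disturb any of the $i\neq i_0$ equations, which is clear since the zero tuple is annihilated. This ``projection/embedding between product solution spaces'' is the only idea needed, and the rest is bookkeeping of indices.

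\textbf{Write-up sketch.} I would phrase the proof in two lines: the forward implication follows by summing the $C$ equalities; for the reverse, given a ``single-row'' solution $(\mu_j)_j$ for some $i_0$, extend by zeros to a full $\Lambda$, observe it meets the hypothesis of the global condition, and read off $\sum_j\mu_j b_{i_0,j}=0$. No appeal to Lemma \ref{lem:hl1}, Lemma \ref{lem:hl2}, or Lemma \ref{lem:count} is needed — this observation is purely a statement about the linear-algebraic structure of the index set of constraints, independent of the polynomials $P_i$ themselves.
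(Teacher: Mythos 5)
Your proof is correct and is the straightforward argument the paper alludes to when it says the observation ``is easy to observe using basic linear algebra'' (the paper does not write out a proof). The key point — that the per-$i$ annihilation conditions decouple row by row, so one may zero-pad a single-row solution into a full $\Lambda$ and read off the $i_0$-th inner sum from the grand total — is exactly the intended argument.
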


The following theorem shows that the expectation in (\ref{eqn:expec}) is nonzero,
and is in fact close to a calculated number, if and only if $b_1,\dots,b_m$ are consistent.
\begin{theorem}\label{thm:density}
Let $\eps \in (0,1)$, let $(A,\sigma)$ where $A = (a_1,\dots,a_m)$ be an affine constraint over $\ell$
variables, and let $\calb$ be a polynomial factor of degree $d$, complexity $C$ and
rank at least $r_{\ref{thm:rankreg}}(d,\eps)$ generated by the
polynomials $P_1, \dots, P_C: \F_p^n \to \F_p$. For every $i \in [C]$, let
$d_i$ be the degree of $P_i$. Let $s$ denote the
$(d_1,\dots,d_C)$-dimension of $A$ over $\F_p$. Finally, for every $j
\in [m]$, fix the image of a cell in $\calb$, indexed by $b_j  =
(b_{1,j}, \dots, b_{C,j}) \in \F_p^C$.

If $b_1,\dots,b_m$ are consistent with respect to $A$ and
$d_1,\dots, d_C$, then:
\begin{equation*}
\Pr_{x_1, \dots, x_\ell \in
  \F_p^n}\left[\calb\left(a_1(x_1,\dots,x_\ell)\right)=b_1 \wedge\cdots\wedge
  \calb\left(a_m(x_1,\dots,x_\ell)\right)=b_m \right]
= p^{-s} \pm \eps
\end{equation*}
If $b_1,\dots,b_m$ are not consistent with respect to $A$ and
$d_1,\dots,d_C$, then the above probability is $0$.
\end{theorem}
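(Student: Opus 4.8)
The plan is to start from the Fourier-analytic identity (\ref{eqn:orth}), which expresses the probability in question as
$$ p^{-mC}\sum_{\Lambda}\expo{-\textstyle\sum_{i\in[C]}\sum_{j\in[m]}\lambda_{i,j}b_{i,j}}\cdot\E_{x_1,\dots,x_\ell}\!\left[\expo{P_\Lambda(x_1,\dots,x_\ell)}\right], $$
where $\Lambda=\{\lambda_{i,j}\}$ ranges over $\F_p^{mC}$ and $P_\Lambda(X_1,\dots,X_\ell)=\sum_{i,j}\lambda_{i,j}P_i(a_j(X_1,\dots,X_\ell))$. I would first split this sum according to whether $P_\Lambda$ is the zero polynomial. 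Since $\calb$ has rank at least $r_{\ref{thm:rankreg}}(d,\eps)$, Lemma \ref{lem:hl1} applies: whenever $P_\Lambda\not\equiv 0$ we have $|\E[\expo{P_\Lambda}]|<\eps$, while for the remaining $\Lambda$ the inner expectation is exactly $1$. Since the sum has at most $p^{mC}$ terms, the total contribution of the $\Lambda$ with $P_\Lambda\not\equiv0$ has absolute value at most $p^{-mC}\cdot p^{mC}\cdot\eps=\eps$ after the normalization. Hence the probability equals $p^{-mC}\sum_{\Lambda:\,P_\Lambda\equiv0}\expo{-\sum_{i,j}\lambda_{i,j}b_{i,j}}\;\pm\;\eps$.

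Next I would identify the set $V=\{\Lambda:P_\Lambda\equiv0\}$. The rank assumption forces $P_1,\dots,P_C$ to be linearly independent (a nonzero linear combination of polynomials of rank at least $1$ cannot be constant), and they have degrees $d_1,\dots,d_C$, so Lemma \ref{lem:hl2} shows that whether $P_\Lambda\equiv0$ depends only on the degrees and not on the specific choice of the $P_i$. Repeating the computation from the proof of Lemma \ref{lem:count} with the model polynomials $x\mapsto x_i^{d_i}$, one sees that $P_\Lambda\equiv0$ exactly when $\sum_{j\in[m]}\lambda_{i,j}\,a_j^{\otimes d_i}=0$ for every $i\in[C]$; thus $V$ is precisely the linear subspace figuring in Definition \ref{def:cons}, and by Lemma \ref{lem:count} it has $|V|=p^{mC-s}$ elements. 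If $b_1,\dots,b_m$ are consistent with respect to $A$ and $d_1,\dots,d_C$, then by the Observation following Definition \ref{def:cons} the linear functional $\Lambda\mapsto\sum_{i,j}\lambda_{i,j}b_{i,j}$ vanishes on all of $V$, so every summand $\expo{-\sum_{i,j}\lambda_{i,j}b_{i,j}}$ equals $1$; the main term is therefore $p^{-mC}\cdot p^{mC-s}=p^{-s}$, yielding $\Pr=p^{-s}\pm\eps$.

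For the non-consistent case I would argue directly, obtaining the exact value $0$ rather than just $0\pm\eps$. By the Observation after Definition \ref{def:cons}, non-consistency supplies a $\Lambda^{\ast}=\{\lambda^{\ast}_{i,j}\}$ with $\sum_{j}\lambda^{\ast}_{i,j}a_j^{\otimes d_i}=0$ for all $i$ but $\sum_{i,j}\lambda^{\ast}_{i,j}b_{i,j}\neq0$; by the characterization of $V$ above, the first condition means $P_{\Lambda^{\ast}}\equiv0$ identically as a polynomial in $X_1,\dots,X_\ell$. Now suppose there existed $x_1,\dots,x_\ell\in\F_p^n$ with $\calb(a_j(x_1,\dots,x_\ell))=b_j$ for every $j\in[m]$; then $P_i(a_j(x_1,\dots,x_\ell))=b_{i,j}$ for all $i,j$, and evaluating $P_{\Lambda^{\ast}}$ at this point would give $0=P_{\Lambda^{\ast}}(x_1,\dots,x_\ell)=\sum_{i,j}\lambda^{\ast}_{i,j}b_{i,j}\neq0$, a contradiction. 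Hence no such tuple exists and the probability is $0$.

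The ingredients here — the Fourier identity (\ref{eqn:orth}), linear independence from positive rank, the codimension-$s$ count of Lemma \ref{lem:count}, and the dichotomy of Lemma \ref{lem:hl1} — are all already in hand, so there is no genuinely hard step; the only point demanding a little care is the error bookkeeping, namely that the $P_\Lambda\not\equiv0$ terms contribute only $\pm\eps$ (and not $\pm p^{mC}\eps$), which works precisely because the $p^{mC}$ loss is absorbed by the $p^{-mC}$ normalization in front of the sum.
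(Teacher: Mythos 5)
Your proof is correct and follows essentially the same route as the paper's: start from the Fourier expansion~(\ref{eqn:orth}), split the sum according to whether $P_\Lambda\equiv 0$, apply Lemma~\ref{lem:hl1} and Lemma~\ref{lem:count} to control the two pieces, and use consistency to make the main term equal $p^{-s}$. Your treatment of the inconsistent case (exhibiting a $\Lambda^\ast$ that kills $P_{\Lambda^\ast}$ but not $\sum\lambda^\ast_{i,j}b_{i,j}$, then evaluating at a hypothetical satisfying point) is the contrapositive of the paper's argument, hence equivalent; the remark that high rank forces linear independence of $P_1,\dots,P_C$ (needed to invoke Lemma~\ref{lem:hl2}) is a useful clarification that the paper leaves implicit, and your error accounting $p^{-mC}\cdot p^{mC}\eps=\eps$ is exactly the paper's.
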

\begin{proof}
Assume first that the supposition is true.
Let us rewrite the probability in question as in (\ref{eqn:orth}):
\begin{align*}
p^{-m C} \sum_{\lambda_{i,j} \in \F_p : \atop i \in [C], j
  \in [m]}\expo{-\sum_{i \in [C]} \sum_{j \in [m]}
  \lambda_{i,j} b_{i,j}} \E_{x_1,\dots,x_\ell}\left[\expo{
    \sum_{i \in [C]} \sum_{j \in [m]} \lambda_{i,j}P_i(a_j(x_1,\dots,x_\ell))}\right]
\end{align*}
According to Lemma \ref{lem:hl1}, the expectation in the above
expression is at most
$\eps$ in absolute value if
$\sum_{i \in [C]} \sum_{j \in [m]}
\lambda_{i,j}P_i(a_j(X_1,\dots,X_\ell))$ is
not the zero polynomial. On the other hand, by the argument of Lemma
\ref{lem:count}, if $\sum_{i \in [C], j \in [m]}
\lambda_{i,j}P_i(a_j(X_1,\dots,X_\ell))
\equiv 0$, then $\sum_{i \in [C], j \in [m]} \lambda_{i,j}
a_j^{\otimes d_i}$ equals $0$. Hence, in this case,
by consistency, $\sum_{i \in [C]}\sum_{j \in [m]}
\lambda_{i,j} b_{i,j} = 0$, and so, such a choice of
$\{\lambda_{i,j}\}$ contributes $1$ to the outermost summation.
The number of such choices of $\{\lambda_{i,j}\}$ is $p^{mC-s}$ by
Lemma \ref{lem:count}. Thus:
\begin{align*}
\Pr_{x_1, \dots, x_\ell \in
  \F_p^n}\left[\forall i \in [C], j  \in [m]~P_i\left(a_j(x_1,\dots,x_\ell)\right) = b_{i,j}
  \right]
= p^{-mC}(p^{mC-s} \pm p^{mC}\eps) = p^{-s} \pm \eps
\end{align*}

The last part of the Theorem follows easily. Suppose the probability
in question is nonzero, and so there exist $x_1,
\dots, x_\ell$ so that  $\calb\left(a_j(x_1,\dots,x_\ell)\right) =
b_{i,j}$ for all $i \in [C]$ and $j \in
  [m]$. Then, for all possible values of $\lambda_{i,j}$ we have $\sum_{i \in [C], j  \in
    [m]}\lambda_{i,j}b_{i,j} = \sum_{i \in [C],j \in
    [m]}\lambda_{i,j}P_i\left(a_j(x_1,\dots,x_\ell)\right)$.   But, by
  the argument of   Lemma \ref{lem:count}, $\sum_{j \in
    [m]}\lambda_{i,j}P_i\left(a_j(X_1,\dots,X_\ell)\right) \equiv 0$
  if  $\sum_{ j \in [m]}\lambda_{i,j}
  (a_j^{\otimes d_i}) = 0$ for any $i \in [C]$, and so the supposition
  is true.
\end{proof}

By now we know the importance of the definition of consistency. One more building block that we need shows why, when selecting cells from a refining partition as in Theorem \ref{thm:superdecomp}, consistency will pass over from $\calb$ to $\calb'$.

\begin{lemma}\label{lem:present}
Suppose that $\calb'$ is a syntactic refinement of $\calb$, that $A=(a_1,\ldots,a_m)$
is a sequence of linear forms where $(A,\sigma)$ is some affine constraint, and that $c_1,\ldots,c_m\in\F_p^{|\calb|}$ are
consistent with $A$ and the degrees $d_1,\ldots,d_{|\calb|}$ of the polynomials defining $\calb$.
Given any fixed $s\in\F_p^{|\calb'|-|\calb|}$, the cells $c_1',\dots,c_m'\in\F_p^{|\calb'|}$
defined by the concatenations $c'_j=(c_j,s)$ for all $j\in [m]$ are consistent with respect to $A$
and the degrees $d_1,\dots,d_{|\calb'|}$ of the polynomials defining $|\calb'|$.
\end{lemma}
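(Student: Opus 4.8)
The plan is to verify Definition \ref{def:cons} directly for the concatenated cells $c'_1,\dots,c'_m$. Fix an arbitrary collection $\Lambda' = \{\lambda_{i,j}\in\F_p : i\in[|\calb'|],\, j\in[m]\}$ satisfying $\sum_{j\in[m]}\lambda_{i,j}\,a_j^{\otimes d_i}=0$ for every $i\in[|\calb'|]$; the goal is to show $\sum_{j\in[m]}\lambda_{i,j}\,(c'_j)_i=0$ for every such $i$. I would split the coordinate set into the ``old'' coordinates $i\le|\calb|$ and the ``new'' coordinates $|\calb|<i\le|\calb'|$, and handle them separately, using the fact that a syntactic refinement shares the first $|\calb|$ defining polynomials (and hence their degrees) with $\calb$.

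For the old coordinates, restrict $\Lambda'$ to $\Lambda=\{\lambda_{i,j} : i\in[|\calb|],\, j\in[m]\}$. Since $d_1,\dots,d_{|\calb|}$ are the same for $\calb$ and $\calb'$, this $\Lambda$ still satisfies $\sum_{j}\lambda_{i,j}\,a_j^{\otimes d_i}=0$ for all $i\in[|\calb|]$, so applying the hypothesis that $c_1,\dots,c_m$ are consistent with $A$ and $d_1,\dots,d_{|\calb|}$ yields $\sum_{j}\lambda_{i,j}\,(c_j)_i=0$ for $i\le|\calb|$; since $(c'_j)_i=(c_j)_i$ there by construction, the required identity holds on the old coordinates.

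For a new coordinate $i$, the key observation is that $(c'_j)_i=s_{i-|\calb|}$ does not depend on $j$, so $\sum_{j}\lambda_{i,j}\,(c'_j)_i = s_{i-|\calb|}\sum_{j}\lambda_{i,j}$, and it suffices to show $\sum_{j}\lambda_{i,j}=0$. Here I would invoke the normalization built into the definition of an affine constraint: each $a_j(X_1,\dots,X_\ell)=X_1+\sum_{k\ge2}c_{j,k}X_k$ has $X_1$-coefficient equal to $1$, so the coordinate of $a_j^{\otimes d_i}$ indexed by the tuple $(1,1,\dots,1)$ equals $1^{d_i}=1$. Reading off that coordinate of the vector identity $\sum_{j}\lambda_{i,j}\,a_j^{\otimes d_i}=0$ gives exactly $\sum_{j}\lambda_{i,j}=0$, which finishes the argument.

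I do not expect a genuine obstacle here; the proof is essentially bookkeeping against Definition \ref{def:cons}. The only point needing a moment's thought is the treatment of the new coordinates — one must notice that consistency on those coordinates is automatic for \emph{any} cell whose ``new part'' is constant across the $m$ forms, and that this constancy is precisely what the uniform choice of a single $s\in\F_p^{|\calb'|-|\calb|}$ (as supplied by Theorem \ref{thm:superdecomp} and Corollary \ref{cor:subatom}) guarantees. If desired, the split above can be streamlined further by first recording that the consistency condition decouples over $i\in[C]$ (zero out all rows of $\Lambda$ but one), so that old and new coordinates may genuinely be checked one row at a time.
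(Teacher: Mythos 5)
Your proof is correct and matches the paper's argument essentially line for line: both split the coordinates into the shared $i\le|\calb|$ block (handled by the hypothesis that $c_1,\dots,c_m$ are consistent, using that a syntactic refinement keeps those polynomials and degrees) and the new $i>|\calb|$ block (handled by noting $(c'_j)_i$ is independent of $j$, and that $\sum_j\lambda_{i,j}=0$ can be read off from the $(1,\dots,1)$-coordinate of $a_j^{\otimes d_i}$ since each $a_j$ in an affine constraint has $X_1$-coefficient $1$). Your closing remark that the consistency condition decouples across rows of $\Lambda$ is a correct and slightly cleaner way to phrase why the two blocks can be checked independently.
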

\begin{proof}
Coordinate-wise, for $j \in [m]$, $c_j' =
(c_{1,j}',c_{2,j}',\dots,c_{|\calb'|, j}') \in \F^{|\calb'|}$ satisfies
 $c_{i,j}' = c_{i,j}$ for all $1\leq i \leq |\calb|$ and $c_{i,j}' =
 s_{i-|\calb|+1}$ for all $|\calb| < i \leq |\calb'|$.

To show the consistency condition for $i> |\calb|$, recall
that each $a_j$ is of the form $X_1 + \sum_{r=2}^\ell c_{r} X_r$ for
$c_r \in \F_p$ (as it came from an affine constraint). So, whenever $\sum_{j \in
  [m]}\lambda_{i,j} (a_j)^{\otimes d_i} = 0$ for any $d_i >0$,
we have that $\sum_{j \in [m]}\lambda_{i,j} = 0$, simply by looking at the sum
along the  coordinate of $a_j^{\otimes d_i}$ corresponding to
$X_1^{\otimes   d_i}$ (i.e. the one labeled by the sequence $(1,\ldots,1)$;
the vector $a_j^{\otimes d_i}$ will always be $1$ at that coordinate).
Since for any $i > |\calb|$, $c'_{i,j}=s_{i-|\calb|+1}$ is
independent of $j$, it follows that for any $i > |\calb|$, if  $\sum_{j \in
  [m]}\lambda_{i,j} (a_j)^{\otimes d_i} = 0$, then  $\sum_{j \in
    [m]}\lambda_{i,j}c'_{i,j} = s_{i-|\calb|+1} \sum_{j \in
    [m]}\lambda_{i,j} = 0$ for all $i>|\calb|$. Since we already know
(by the consistency of the $c_i$ relative to $\calb$) that $\sum_{j \in [m]}\lambda_{i,j}c'_{i,j}=0$
for all $i\leq|\calb$|, we can conclude the proof.
\end{proof}

\subsection{Big Picture Arguments}\label{subsec:big}

We will prove the existence of many copies of a given linear constraint by analyzing the existence of a particular configuration of cells of a factor $\calb$, where in every cell we look at the entire set of values that $f$ can take at once. The following is a formal definition of the function giving the ``big picture''.

\begin{definition}
Given a function $f:\F_p^n\to [R]$ and a polynomial factor $\calb$, the {\em big picture function} of $f$ is the function $f_{\calb}:\F_p^{|\calb|}\to 2^{[R]}$, where $2^{[R]}$ denotes the power set of $R$, defined by $f_{\calb}(y)=\{f(x):\calb(x)=y\}$. In other words, $f_{\calb}(y)$ is the set of all values that $f$ takes within the corresponding cell of $\calb$.
\end{definition}

On the other hand, given any function $g:\F_n^C\to 2^{[R]}$, and a set of degrees $d_1,\ldots,d_C$ (of which we think as corresponding to the degrees of some future polynomial factor of size $C$), we will define what it means for such a function to ``induce'' a copy of a given constraint.

\begin{definition}[Partially induce]\label{def:partial}
Suppose we are given positive integers $d_1, \dots, d_C$, a
function
%\footnote{$2^{[R]}$ denotes the set of all subsets of $[R]$.}
$g: \F_p^C \to 2^{[R]}$, and an induced affine constraint
$(A,\sigma)$ of size $m$ over $\ell$ variables. We say that {\em $g$ partially
  $(d_1,\dots,d_C)$-induces $(A,\sigma)$} if there exist $\{b_j =
(b_{1,j}, \dots, b_{C,j})\in \F_p^C :
j \in [m]\}$ making the following true.
\begin{itemize}
\item
$b_1,\dots,b_m$ are consistent with respect to $A$ and $d_1,\dots,d_C$.
\item
 $\sigma_j \in g(b_j)$ for every
$j \in [m]$.
\end{itemize}
\end{definition}

The big picture function defined above extracts a finitary description of
a function $f:\F_p^n\to [R]$ in relation to some $\calb$, which we will
later obtain through a decomposition theorem. Regardless of how we obtained
$\calb$, moving from an induced constraint of $f$ to a partially induced constraint
of the big picture function $f_{\calb}$ is always guaranteed.

\begin{observation}\label{obs:pinduce}
If $f:\F_p^n\to [R]$ induces a constraint $(A,\sigma)$, then for a factor $\calb$ with degree sequence $(d_1,\ldots,d_{|\calb|})$ (where all degrees are smaller than $p$), the function $f_{\calb}:\F_p^C\to 2^{[R]}$ partially $(d_1,\ldots,d_{|\calb|})$-induces $(A,\sigma)$.
\end{observation}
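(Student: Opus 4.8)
The plan is to transport a witness of the induced constraint in $f$ through the factor map. Since $f$ induces $(A,\sigma)$, fix $x_1,\dots,x_\ell \in \F_p^n$ with $f(a_j(x_1,\dots,x_\ell)) = \sigma_j$ for every $j \in [m]$, and set $b_j \eqdef \calb(a_j(x_1,\dots,x_\ell)) \in \F_p^{|\calb|}$. Writing $P_1,\dots,P_C$ (with $C=|\calb|$) for the defining polynomials of $\calb$ and $d_i=\deg P_i<p$, we then have $b_{i,j} = P_i(a_j(x_1,\dots,x_\ell))$. The second clause of Definition~\ref{def:partial} is immediate: the point $a_j(x_1,\dots,x_\ell)$ lies in the cell indexed by $b_j$ and $f$ takes the value $\sigma_j$ there, so $\sigma_j \in \{f(x) : \calb(x) = b_j\} = f_\calb(b_j)$.

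It remains to verify that $b_1,\dots,b_m$ are consistent with respect to $A$ and $d_1,\dots,d_C$ in the sense of Definition~\ref{def:cons}. So fix $\Lambda = \{\lambda_{i,j}\}$ with $\sum_{j\in[m]}\lambda_{i,j}(a_j(X_1,\dots,X_\ell))^{\otimes d_i}=0$ for every $i\in[C]$, fix a single index $i$, and consider the polynomial $Q_i(X_1,\dots,X_\ell) \eqdef \sum_{j\in[m]}\lambda_{i,j}P_i(a_j(X_1,\dots,X_\ell))$ in the $n\ell$ coordinate variables. If I can show $Q_i\equiv 0$, then evaluating at $(x_1,\dots,x_\ell)$ gives $\sum_{j\in[m]}\lambda_{i,j}b_{i,j}=Q_i(x_1,\dots,x_\ell)=0$, which is precisely the consistency condition for row $i$.

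To prove $Q_i\equiv 0$, I would apply Lemma~\ref{lem:hl2} in the case $C=1$ to the single degree-$d_i$ polynomial $P_i$ with the coefficients $\lambda_{i,1},\dots,\lambda_{i,m}$: the lemma says that whether $\sum_j\lambda_{i,j}P(a_j(X_1,\dots,X_\ell))$ vanishes identically does not depend on which degree-$d_i$ polynomial $P$ is chosen, so it suffices to check it for the convenient choice $P(x)=x_1^{d_i}$. For that choice, exactly as in the proof of Lemma~\ref{lem:count}, $\sum_j\lambda_{i,j}P(a_j(X_1,\dots,X_\ell))$ is a degree-$d_i$ form in $\ell$ variables whose monomial coefficients are, up to multinomial factors that are invertible modulo $p$ since $d_i<p$, the entries of $\sum_j\lambda_{i,j}(a_j)^{\otimes d_i}$; hence it vanishes identically precisely because $\sum_j\lambda_{i,j}(a_j)^{\otimes d_i}=0$, which is our hypothesis on $\Lambda$. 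Therefore $Q_i\equiv0$, and the verification is complete.

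The one point demanding care — and the only mild obstacle — is that the observation places no rank hypothesis on $\calb$, so the polynomials $P_1,\dots,P_C$ need not be linearly independent and Lemma~\ref{lem:hl2} cannot be invoked for the whole family at once. Working one row at a time (the $C=1$ instance of the lemma) sidesteps this, since a single polynomial of a genuine degree is vacuously linearly independent and $Q_i$ involves only $P_i$; this also removes the largeness-of-$n$ assumption that Lemma~\ref{lem:count} needed in order to fit $C$ distinct monomials.
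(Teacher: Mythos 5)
Your proof is correct, and the underlying mechanism is the same as the paper's: fix a witness $x_1,\dots,x_\ell$ to the induced constraint, push it forward through $\calb$ to get $b_j=\calb(a_j(x_1,\dots,x_\ell))$, note that $\sigma_j\in f_\calb(b_j)$ by definition of the big-picture function, and verify consistency by showing that each $Q_i=\sum_j\lambda_{i,j}P_i(a_j(X_1,\dots,X_\ell))$ vanishes identically whenever $\sum_j\lambda_{i,j}(a_j)^{\otimes d_i}=0$. Where you genuinely improve on the paper's write-up is in how you justify that last step. The paper simply says this holds ``because of the last condition in Theorem~\ref{thm:density},'' but Theorem~\ref{thm:density} is stated under a rank hypothesis on $\calb$ that Observation~\ref{obs:pinduce} does not impose; the citation is therefore not a literal application of the theorem but an implicit appeal to the fact that the relevant part of its proof does not use rank. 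You make this explicit: by working one index $i$ at a time and invoking Lemma~\ref{lem:hl2} in the $C=1$ instance, the ``linearly independent collection'' hypothesis collapses to $P_i\neq 0$, which is automatic; checking the all-or-nothing dichotomy at $P(x)=x_1^{d_i}$ and using that, since $d_i<p$, the multinomial coefficients are units, reduces it exactly to the hypothesis $\sum_j\lambda_{i,j}(a_j)^{\otimes d_i}=0$. As you note, this row-by-row application also removes the $n>C$ caveat that the proof of Lemma~\ref{lem:count} needs. So the argument is the same in spirit, but your version is the correct rank-free unfolding of it and is the one that actually matches the hypotheses of the observation.
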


\begin{proof}
Let $m$ be the size of $A$ and $\ell$ be its number of variables.
Suppose that $F$ induces $(A,\sigma)$
at $x_1,\dots,x_{\ell}$, and let $c_1,\dots, c_{m} \in \F_p^{|\calb|}$ be the images
of the $m$ cells in $\calb$ defined by $c_1 =
\calb(a_{1}(x_1,\dots,x_{\ell})), c_2 =
\calb(a_{2}(x_1,\dots,x_{\ell})),$ $\dots,$ $c_{m} =
\calb(a_{m}(x_1,\dots,x_{\ell}))$ where $A = (a_{1},\dots,
a_{m})$. Then, because of the last condition in Theorem
\ref{thm:density}, it must be the case that $c_1,\dots,c_m$ are
consistent with respect to $A$ and $d_1,\dots,d_{|\calb|}$. This fulfills the first condition of Definition \ref{def:partial},
and the second condition is true by the definition of every $f_{\calb}(c_i)$ including all
values that $f$ takes in that cell.
\end{proof}

To handle a possibly infinite collection $\cala$ of affine constraints, we will employ a
compactness argument, analogous to one used in \cite{AS08a} to bound
the size of the constraint partially induced by the big picture function. Let us  make the
following definition:

\begin{definition}[The compactness function $\Psi_\cala$]\label{def:psi}
Suppose we are given a positive integer $C$ and a possibly infinite
collection of induced affine constraints $\mathcal{A} = \{(A^1,
\sigma^1), (A^2, \sigma^2), \dots\}$, where each
affine constraint $(A^i,\sigma^i)$ is of size $m_i$ and of complexity at
most $d<p$. For fixed $d_1, \dots, d_C < p$, denote by $\calg(d_1,\dots,d_C)$ to
be the set of functions $g: \F_p^C \to 2^{[R]}$ that
partially $(d_1,\dots,d_C)$-induce some $(A^i,\sigma^i)\in
\cala$. Now, we define the following function:
\begin{align*}
\Psi_\cala(C) = \max_{d_1, \dots, d_C < p} \max_{g \in
  \calg(d_1, \dots, d_C)} \min_{(A^i,\sigma^i) \text{ {\em partially}} \atop
  \text{{\em induced by }} g} m_i
\end{align*}
Whenever $\calg(d_1, \dots, d_C)$ is empty we set the corresponding maximum to $0$.
\end{definition}

Note that the above is indeed finite, as both the number of possible degree sequences
(bounded by $p^C$) and the size of $\calg(d_1,\dots,d_C)$ (bounded by $2^{|R|p^C}$) are finite.
The compactness function allows to bound an induced constraint in advance,
at least (for now) in the realm of big picture functions:

\begin{observation}\label{obs:induce}
Let $d_1,\dots,d_C<p$ be a degree sequence, for which a function $g: \F_p^C \to 2^{[R]}$
partially induces some constraint from $\cala$. Then $g$ will necessarily
partially  induce some $(A^i,\sigma^i) \in
\cala$ whose size is at most $\Psi_\cala(|\calb|)$.
\end{observation}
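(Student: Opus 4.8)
The plan is to simply unwind Definition \ref{def:psi}; there is essentially no work beyond bookkeeping. Write $C=|\calb|$. By hypothesis $g$ partially $(d_1,\dots,d_C)$-induces some $(A^j,\sigma^j)\in\cala$, which is exactly the statement that $g\in\calg(d_1,\dots,d_C)$ in the notation of Definition \ref{def:psi}.

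First I would note that the inner minimum $\min_{(A^i,\sigma^i)\text{ partially induced by }g} m_i$ is a minimum over a nonempty set of positive integers (nonempty since $(A^j,\sigma^j)$ belongs to it), hence by well-ordering of $\N$ it is attained by some constraint $(A^i,\sigma^i)\in\cala$ that $g$ partially $(d_1,\dots,d_C)$-induces. Then, using that $\Psi_\cala(C)$ is defined as a double maximum, for our particular degree sequence $d_1,\dots,d_C<p$ and our particular $g\in\calg(d_1,\dots,d_C)$ we get
\[
\Psi_\cala(C)\;\ge\;\max_{g'\in\calg(d_1,\dots,d_C)}\ \min_{(A^k,\sigma^k)\text{ partially induced by }g'} m_k\;\ge\;\min_{(A^k,\sigma^k)\text{ partially induced by }g} m_k\;=\;m_i,
\]
where $i$ is the index attaining the inner minimum. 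Therefore $g$ partially $(d_1,\dots,d_C)$-induces $(A^i,\sigma^i)\in\cala$ with $m_i\le\Psi_\cala(C)$, which is the claim.

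The only nontrivial point in the background is that $\Psi_\cala(C)$ is finite, and this is already observed right after Definition \ref{def:psi} (finitely many degree sequences and finitely many functions $g:\F_p^C\to 2^{[R]}$). So I do not expect any genuine obstacle: the observation is just the assertion that the finite quantity $\Psi_\cala(C)$, built precisely as the worst case over all admissible $(d_1,\dots,d_C)$ and $g$ of the size of the \emph{smallest} partially induced constraint, performs its intended role as an a priori size bound, mirroring the compactness step of \cite{AS08a}.
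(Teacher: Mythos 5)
Your proposal is correct and follows essentially the same route as the paper, which simply observes that $g$ belongs to $\calg(d_1,\dots,d_C)$ and invokes Definition \ref{def:psi}. You have merely spelled out the bookkeeping (nonempty minimum, unwinding the double maximum) that the paper leaves implicit.
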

\begin{proof}
This is immediate, as a $g$ satisfying the above in particular belongs to $\calg(d_1,\dots,d_C)$.
\end{proof}

For our proofs, we will refer first not to $f$ itself, but to some small modification of $f$ that will make it a ``perfect'' representation of some cells from $f$ according to some factor, which will be selected as per Corollary \ref{cor:subatom}.

\begin{definition}[Function cleanup] \label{def:clean}
Suppose we have a factor $\calb'$ that is a syntactic refinement of $\calb$, and some $s\in\F_n^{|\calb'|-|\calb|}$.
The {\em $\zeta$-cleanup} $F$ of $f:\F_p^n\to [R]$ according to $\calb$, $\calb'$ and $s$ is constructed by executing the
following steps in order (where as usual $(c,s)$ denotes the concatenation of $c$ and $s$):
\begin{enumerate}
\item
For every $z \in \F_p^n$ that is not covered by the cases below, let $F(z) = f(z)$.
\item
For every cell $c$ of $\calb$ for which $|\Pr[f(x)=i~|~c]-\Pr[f(x)=i~ |~(c,s)]|>\zeta$
for any $i \in [R]$, do
the following. For every $z \in \calb^{-1}(c)$, set $F(z) = \arg\max_{j \in
  [R]}\Pr[f(x)=j~|~(c,s)]$, the most popular value inside the
subcell $(c,s)$ (breaking ties arbitrarily, but consistently within each cell $c$).
\item
For every cell $c$ of $\calb$, for every $i \in [R]$ such that
$\Pr[f(x)=i~|~(c,s)]<\zeta$, set $F(z) =  \arg\max_{j \in
  [R]}\Pr[f(x)=j~|~(c,s)]$ for every $z \in f^{-1}(i)\cap
\calb^{-1}(c)$ (breaking ties arbitrarily, but consistently within each cell $c$).
\end{enumerate}
\end{definition}

\begin{lemma}\label{lem:cleanup}
If $f$, $\calb$, $\calb'$ and $s$ are such that $\calb$ is of rank at least $r_{\ref{thm:rankreg}}(p,\beta/p^{|\calb|})$, and $\Pr_{c \in \F_p^{|\calb|}}[|\E[f|c] - \E[f|(c,s)]|>\zeta]<\zeta$, then the corresponding $\zeta$-cleanup $F$ is $(2R+1+\beta)\zeta$-close to $f$.
\end{lemma}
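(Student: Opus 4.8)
The plan is to bound the fraction of points $z$ where $F(z) \neq f(z)$ by summing the contributions of the three modification steps in Definition \ref{def:clean}. The main point is that Step 1 changes nothing, so we only need to account for Steps 2 and 3, and crucially we must track the distortion introduced by using the subcell $(c,s)$ rather than the cell $c$ itself, as well as the distortion from the cells sizes not being exactly $p^{-|\calb|}$.

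First I would handle Step 2. The set of ``bad'' cells $c$ of $\calb$ for which $|\Pr[f(x)=i\mid c] - \Pr[f(x)=i\mid (c,s)]| > \zeta$ for some $i \in [R]$ is contained in (or closely related to) the set of cells where $|\E[f|c] - \E[f|(c,s)]| > \zeta$ appears; more carefully, the hypothesis $\Pr_{c}[|\E[f|c] - \E[f|(c,s)]| > \zeta] < \zeta$ controls the fraction of such cells --- here one uses that for $R$-valued functions, by considering the indicator decomposition $f^{(i)}$, a discrepancy of $\zeta$ in some coordinate forces a comparable discrepancy (up to a factor depending on $R$) in one of the $R$ real-valued averages, so the bad cells number at most $R\zeta \cdot p^{|\calb|}$ in the worst case --- this is where a factor like $2R$ enters. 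By Lemma \ref{lem:cellsize} (invoked with $\eps = \beta/p^{|\calb|}$, which the rank hypothesis on $\calb$ guarantees), each cell has relative size at most $p^{-|\calb|} + \beta/p^{|\calb|} = (1+\beta)p^{-|\calb|}$, so the total measure of points lying in bad cells is at most $R\zeta \cdot (1+\beta)$. Since all of these points may be changed, Step 2 contributes at most roughly $(2R + \beta)\zeta$ or so to the distance; the exact bookkeeping of the constant is routine.

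Next I would handle Step 3, which operates \emph{within each (not-yet-handled) cell} $c$: for each value $i$ with $\Pr[f(x)=i\mid(c,s)] < \zeta$, we recolor the points of $f^{-1}(i) \cap \calb^{-1}(c)$. Within a fixed cell $c$ that survived Step 2, we know $|\Pr[f(x)=i\mid c] - \Pr[f(x)=i\mid (c,s)]| \le \zeta$ for all $i$, so $\Pr[f(x)=i\mid(c,s)] < \zeta$ implies $\Pr[f(x)=i\mid c] < 2\zeta$. Summing over the at most $R$ such values $i$, the fraction of points of $c$ affected by Step 3 is at most $2R\zeta$ --- wait, that is too crude; rather, since the $\Pr[f(x)=i\mid c]$ are a probability distribution, the total mass on values with $\Pr[f(x)=i\mid c] < 2\zeta$ is at most... no, one must be careful, but the honest bound is that each such value contributes less than $2\zeta$, giving at most $2R\zeta$; a cleaner route gives $\zeta$ times a small constant once one notes only values that were ``rare in the subcell'' are touched and these are rare in the cell too, so across the whole domain Step 3 contributes at most a constant times $\zeta$. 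Combining the Step 2 bound ($\sim (2R+\beta)\zeta$) with the Step 3 bound ($\sim \zeta$) and the ever-present possibility that the two steps overlap or compound, one gets $\Pr_z[F(z)\neq f(z)] \le (2R + 1 + \beta)\zeta$ as claimed.

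The main obstacle I anticipate is the precise constant-chasing: one has to be careful that (i) Step 3 operates on the output of Step 2, so ``cells not handled in Step 2'' is the right universe, (ii) the conversion between the $R$-dimensional statement $|\Pr[f(x)=i\mid c]-\Pr[f(x)=i\mid(c,s)]|>\zeta$ and the one-dimensional hypothesis $|\E[f|c]-\E[f|(c,s)]|>\zeta$ (here $f$ is viewed as $[R]$-valued, so this requires interpreting $\E[f|c]$ appropriately, presumably via the indicator functions $f^{(i)}$ of Subsection \ref{subsec:multi}) does not lose more than the advertised factors, and (iii) the cell-size fluctuation from Lemma \ref{lem:cellsize} is what contributes the additive $\beta$. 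None of these steps is deep, but getting the final constant to be exactly $2R+1+\beta$ rather than, say, $3R\zeta(1+\beta)$ requires matching the bookkeeping in Definition \ref{def:clean} exactly; since the lemma only needs \emph{some} bound of this shape for later use, I would not over-optimize and would simply verify that the sum of the three contributions is at most $(2R+1+\beta)\zeta$.
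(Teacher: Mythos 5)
Your overall plan is the same as the paper's (bound each modification step separately and use Lemma~\ref{lem:cellsize} for the $(1+\beta)$ cell-size factor), but your bookkeeping has the roles of Steps~2 and~3 essentially swapped, and an extraneous factor of $R$ in the wrong place.

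For Step~2: the hypothesis $\Pr_{c}[\,|\E[f|c]-\E[f|(c,s)]|>\zeta\,]<\zeta$ should be read in the multi-function sense of Subsection~\ref{subsec:multi}, i.e.\ it is supplied in the application by Theorem~\ref{thm:subatom2} as $\Pr_c[\exists i\in[R]\ |\E[f^{(i)}|c]-\E[f^{(i)}|(c,s)]|>\zeta]<\zeta$, and since $\E[f^{(i)}|c]=\Pr[f(x)=i\,|\,c]$ this is \emph{literally the same} condition that triggers Step~2. So Step~2 alters at most a $\zeta$ fraction of the cells --- not $R\zeta$. By Lemma~\ref{lem:cellsize} each cell has measure at most $(1+\beta)p^{-|\calb|}$, so Step~2 contributes at most $(1+\beta)\zeta$; no factor of $R$ appears here.

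For Step~3 your first instinct was right and the second-guessing was a mistake. In a cell surviving Step~2, $\Pr[f(x)=i\,|\,(c,s)]<\zeta$ does force $\Pr[f(x)=i\,|\,c]<2\zeta$, and each of the at most $R$ such values contributes at most $2\zeta$ of the cell's mass, giving $2R\zeta$. There is no ``cleaner route'' that gives $\sim\zeta$ --- the factor $R$ really belongs here. Adding, $(1+\beta)\zeta + 2R\zeta = (2R+1+\beta)\zeta$, which is the stated bound; in your write-up the $R$ landed on Step~2 and was (almost) talked out of Step~3, so the pieces you added up do not actually reproduce the lemma's constant.
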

\begin{proof}
Observe that the second step changes the value of $F$ on at most
a $\zeta$ fraction of the cells, by the condition involving $s$
in the statement of the lemma. By Lemma \ref{lem:cellsize}, each cell
occupies at most a $(1+\beta)p^{-C}$ fraction of the entire
domain. So, the fraction of points whose values changed in the second
step is at most $\zeta p^C\cdot (1+\beta)p^{-C}=(1+\beta)\zeta$.

The third step does not apply to any cell of
$\calb$ affected by the second step. Therefore, in the third case,
for every $i\in [R]$, if $\Pr[f(x)=i~|~\calb'(x) = (c,s)]<\zeta$ then
$\Pr[f(x)=i~|~\calb(x) = c]<2\zeta$. Hence, the total fraction of the
domain modified in the third case is at most $2R\zeta$. The total distance of
$F$ from $f$ is therefor bounded by $(2R+1+\beta)\zeta$.
\end{proof}

\subsection{More about Algebra of Linear Forms}\label{subsec:more}

A linear form $a(X_1,\ldots,X_\ell)=\sum_{i=1}^{\ell}\alpha_i X_i$ can be identified with a linear function over $\F_p^\ell$, and thus a transformation in the spirit of ``a change of basis'' can be formulated.

\begin{definition}[Change of view]
We identify the form $a(X_1,\ldots,X_\ell)=\sum_{i=1}^{\ell}\alpha_i X_i$ with the linear function $a:\F_p^\ell\to\F_p$ given by $a(v)=\sum_{i=1}^{\ell}\alpha_i v_i$, where $v=(v_1,\ldots,v_\ell)\in\F_i^\ell$ (in essence this is obtained by letting the $X_i$ range over scalars from $\F_p$ rather than vectors from some space $\F_p^n$).

Given an invertible $\ell\times\ell$ matrix $M$ over $\F_p$, the corresponding {\em change of view} of $a$ is the linear form $a'(X_1,\ldots,X_\ell)=\sum_{i=1}^{\ell}\alpha'_i X_i$ obtained by the following process: Consider the linear function corresponding to $a$, perform on its domain $\F_p^\ell$ the change of variables corresponding to $M$, and then take the linear form corresponding its representation $a'$ in the new basis.
\end{definition}

The reason that we use the term ``change of view'' is to not confuse it with a change of basis of $\F_p^n$. The following observation is easy:

\begin{observation}
If $(A,\sigma)$ is an affine constraint, and $A'$ is obtained by performing the same change of view over all linear forms of $A$, then a function $f:\F_p^n\to\F$ satisfies $(A,\sigma)$ if and only is it satisfies $(A',\sigma)$.

Additionally, a change of view does not affect the complexity of the affine constraint.
\end{observation}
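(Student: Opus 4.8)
The plan is to unwind the two definitions and observe that a change of view simply relabels the variables $X_1,\ldots,X_\ell$ by an invertible linear substitution, and that such a substitution sets up a bijection on the tuples $(x_1,\ldots,x_\ell)\in(\F_p^n)^\ell$ over which the relevant probabilities/expectations are taken. Concretely, let $M$ be the invertible $\ell\times\ell$ matrix defining the change of view, and for each linear form $a(X_1,\ldots,X_\ell)=\sum_i \alpha_i X_i$ of $A$ let $a'$ be the resulting form. By the definition of change of view, if we write $a$ as the linear functional $v\mapsto \sum_i\alpha_i v_i$ on $\F_p^\ell$, then $a'$ is the functional $w\mapsto a(M^{-1}w)$ (or $a(Mw)$, depending on the convention — either way it is composition with an invertible map). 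The key point is that this identity is "formal'' in the variables: it holds when the $X_i$ are replaced by arbitrary vectors $x_i\in\F_p^n$, because a linear form evaluated at a tuple of vectors is computed coordinate-by-coordinate in $\F_p^n$, and in each coordinate it is exactly the $\F_p$-linear functional above.

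First I would make precise the substitution on vector tuples: define the map $T:(\F_p^n)^\ell\to(\F_p^n)^\ell$ that sends $(x_1,\ldots,x_\ell)$ to $(y_1,\ldots,y_\ell)$ where $y_j=\sum_k M_{jk}x_k$ (applying $M$ ``in each coordinate of $\F_p^n$''). Since $M$ is invertible over $\F_p$, $T$ is a bijection of $(\F_p^n)^\ell$. Then I would verify the commutation identity $a'(x_1,\ldots,x_\ell)=a(T(x_1,\ldots,x_\ell))$ for every form $a$ of $A$ and its image $a'$ under the change of view; this is exactly the coordinatewise extension of the $\F_p^\ell$ identity that defines the change of view. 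From this it follows immediately that $f$ induces $(A',\sigma)$ at $(x_1,\ldots,x_\ell)$ if and only if $f$ induces $(A,\sigma)$ at $T(x_1,\ldots,x_\ell)$; since $T$ is a bijection, the existence of a witnessing tuple for $(A',\sigma)$ is equivalent to the existence of one for $(A,\sigma)$, giving the first assertion. (One technical point to check: a genuine affine constraint requires $a_1(X_1,\ldots,X_\ell)=X_1$ and $a_i=X_1+\sum_{j\ge2}c_{ij}X_j$; after an arbitrary change of view the images may not be in this normal form, so either the observation should be read as a statement about systems of linear forms up to this relabeling, or one restricts to changes of view that fix the first coordinate and act on the remaining $X_2,\ldots,X_\ell$ — I would note this and use whichever convention the later application needs.)

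For the second assertion, that complexity is preserved, I would argue directly from Definition \ref{def:cplx}. The Cauchy–Schwarz complexity of a system $\{L_1,\ldots,L_m\}$ depends only on the linear-algebraic relations among the $L_i$ inside the dual space of linear forms, specifically on which $L_i$ lie in which spans of subsets of the others. A change of view is an invertible linear change of variables, hence induces an isomorphism of the space of linear forms that carries $\{L_1,\ldots,L_m\}$ to $\{L'_1,\ldots,L'_m\}$ preserving all span/membership relations: $L_i\in\mathrm{span}\{L_j:j\in S\}$ if and only if $L'_i\in\mathrm{span}\{L'_j:j\in S\}$. Therefore any partition of indices witnessing complexity $\le s$ for one system witnesses it for the other, and the two complexities are equal.

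I do not expect any serious obstacle here; the only thing requiring care is the normal-form issue for affine constraints mentioned above, and making sure the "apply $M$ in each coordinate of $\F_p^n$'' construction is stated cleanly (it is just the tensor $M\otimes I_n$ acting on $(\F_p^n)^\ell\cong\F_p^{\ell}\otimes\F_p^n$). Everything else is a routine unwinding of definitions and the observation that invertible linear maps preserve both bijectivity of the evaluation tuples and the span relations underlying Cauchy–Schwarz complexity.
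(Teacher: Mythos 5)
Your argument is correct, and the paper does not actually supply a proof of this observation (it is only asserted to be easy), so what you write is the natural argument the authors have in mind. The two halves are exactly right: the coordinatewise extension of $M$ to a bijection $T$ of $(\F_p^n)^\ell$ together with $a'(x_1,\ldots,x_\ell)=a(T(x_1,\ldots,x_\ell))$ gives the equivalence of satisfaction, and the fact that precomposition with an invertible linear map is an isomorphism on the dual space, hence preserves all span/membership relations among the $L_i$, gives equality of the Cauchy--Schwarz complexities. The normal-form point you flag is a genuine gap in the paper's phrasing: a generic change of view need not send the distinguished variable to itself, so $A'$ may not literally satisfy $a'_1=X_1$; the paper's only use of this observation, in the proof of Lemma~\ref{lem:conc}, handles this implicitly by choosing the new basis so that some variable $X_\ell$ takes over the role of $X_1$, i.e.\ the observation is read modulo the relabeling you describe.
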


This yields the following lemma:

\begin{lemma}\label{lem:conc}
Any affine constraint $(A,\sigma)$ is equivalent to one whose number of variables is not more than the number constraints.
\end{lemma}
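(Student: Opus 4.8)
The plan is to apply a \emph{change of view} that collapses all the linear forms of $A$ into a small set of variables, and then simply delete the variables that no longer appear. Write $A=(a_1,\dots,a_m)$ over $\ell$ variables and identify each form $a_i(X_1,\dots,X_\ell)=\sum_{j=1}^{\ell}\alpha^{(i)}_jX_j$ with its coefficient vector $\alpha^{(i)}\in\F_p^{\ell}$. By the definition of an affine constraint, $\alpha^{(1)}=e_1$, and for each $i\ge 2$ we have $\alpha^{(i)}=e_1+\beta^{(i)}$ with $\beta^{(i)}\in\mspan(e_2,\dots,e_\ell)$. Set $W=\mspan\{\beta^{(2)},\dots,\beta^{(m)}\}$ and $t=\dim W$, so that $t\le m-1$ and also $t\le \ell-1$; moreover $e_1\notin W$ since $W\subseteq\mspan(e_2,\dots,e_\ell)$.

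First I would construct an invertible linear map $N\colon\F_p^{\ell}\to\F_p^{\ell}$ with $Ne_1=e_1$ and $N(W)=\mspan(e_2,\dots,e_{t+1})$: choose a basis $w_1,\dots,w_t$ of $W$, extend $\{e_1,w_1,\dots,w_t\}$ to a basis $\{e_1,w_1,\dots,w_t,u_{t+2},\dots,u_\ell\}$ of $\F_p^{\ell}$, and declare $Ne_1=e_1$, $Nw_k=e_{k+1}$ for $1\le k\le t$, and $Nu_k=e_k$ for $t+2\le k\le \ell$; this sends a basis to a basis, hence is invertible. Since the map sending a linear form to its image under a change of view ranges over all invertible transformations of the space of forms as the defining matrix ranges over $\mathrm{GL}_\ell(\F_p)$, there is a change of view taking each $a_i$ to the form $a_i'$ whose coefficient vector is $N\alpha^{(i)}$. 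Then $a_1'$ has coefficient vector $Ne_1=e_1$, so $a_1'=X_1$, and for $i\ge 2$ its coefficient vector is $N\alpha^{(i)}=e_1+N\beta^{(i)}\in e_1+\mspan(e_2,\dots,e_{t+1})$, so $a_i'=X_1+\sum_{j=2}^{t+1}c'_{i,j}X_j$ for suitable $c'_{i,j}\in\F_p$. Thus $A'=(a_1',\dots,a_m')$ is again a legitimate affine constraint, and none of its forms involve $X_{t+2},\dots,X_\ell$; by the observation following the definition of change of view, a function $f\colon\F_p^n\to[R]$ is $(A,\sigma)$-free if and only if it is $(A',\sigma)$-free.

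Finally, because none of the forms of $A'$ use the variables $X_{t+2},\dots,X_\ell$, deleting these variables yields an affine constraint $A''$ on the $t+1$ variables $X_1,\dots,X_{t+1}$, and $f$ induces $(A',\sigma)$ at some tuple $x_1,\dots,x_\ell$ exactly when it induces $(A'',\sigma)$ at $x_1,\dots,x_{t+1}$; hence $(A'',\sigma)$ is equivalent to $(A,\sigma)$, and it has $t+1\le m$ variables and $m$ forms, as required. The only point that needs attention is arranging for the auxiliary map $N$ to fix $e_1$: this is precisely what guarantees that $A'$ (and then $A''$) still has the special shape demanded of an affine constraint, namely $a_1'=X_1$ and every form having $X_1$-coefficient equal to $1$. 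Everything else is routine linear algebra, so I expect no real obstacle here beyond this bookkeeping.
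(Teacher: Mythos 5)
Your proof is correct and essentially mirrors the paper's argument: both use a change of view (i.e., a $\mathrm{GL}_\ell(\F_p)$ change of basis on the ``view space'' $\F_p^\ell$) to compress the forms into at most $m$ variables while keeping the $X_1$-coefficient equal to $1$. The only difference is bookkeeping — you work directly with the span $W$ of the shifted coefficient vectors $\beta^{(i)}=\alpha^{(i)}-e_1$ and map it into $\mspan(e_2,\dots,e_{t+1})$, whereas the paper works with the dual object, the common kernel of the forms (which is exactly $(\mspan(e_1)\oplus W)^{\perp}$), filling the first $\ell-m$ basis slots with kernel vectors and placing $e_1$ last; the two constructions are related by transposition of the basis-change matrix.
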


\begin{proof}
Assume that $A=(a_1,\ldots,a_m)$ take $\ell$ variables for $\ell>m$, and consider the linear functions from $\F_p^\ell$ to $\F$ corresponding to $a_1,\ldots,a_m$. By a linear dimension argument there are $\ell-m$ linearly independent vectors $u_1,\ldots,u_{\ell-m}\in\F_p^{\ell}$ for which $a_i(v_j)=0$ for all $i\in [m]$ and $j\in [\ell-m]$. Complete these vectors to a basis $u_1,\ldots,u_\ell$ of $\F_p^\ell$, making sure that $u_\ell$ equals the vector that is $1$ on its first coordinate and zero everywhere else (this vector is not in the span of $u_1,\ldots,u_{\ell-m}\in\F_p^{\ell}$, because by the definition of an affine constraint $a_1$ sends it to $1$).

Now perform on the members of $A$ the change of view corresponding to the change to this basis of $\F_p^\ell$. Denoting the resulting linear forms by $A'=(a'_1,\ldots,a'_m)$, we note now that no $a'_i$ has any mention of the variables $X_1,\ldots,X_{\ell-m}$, and so the constraint $(A',\sigma)$ in fact takes at most $m$ variables. $A'$ will also have the standard form of an affine constraint with $X_\ell$ taking the place of $X_1$.
\end{proof}

We need the above because the test would eventually query a number of places that is a function of $p$ and the maximum number of variables in a subset of the constraints of $\cala$, where this subset is only guaranteed a bound on the number of linear forms per constraint; we thus need $\cala$ to satisfy the following definition:

\begin{definition}[Concise collections] \label{def:concise}
The collection $\mathcal{A} = \{(A^1, \sigma^1), (A^2, \sigma^2),\ldots\}$ is called {\em concise} if for every $A_i$, the total number of its variables does not exceed the number of its linear forms.
\end{definition}

Lemma \ref{lem:conc} implies that every collection of linear constraints is equivalent to a concise one.

We would also need to know the (lack of) affect that a change of view has on the $d$-dimension, and hence the $(d_1,\ldots,d_C)$-dimension, of $A$.

\begin{lemma}\label{lem:samedim}
If $A=(a_1,\ldots,a_m)$ is a sequence of linear forms, and $A'=(a'_1,\ldots,a'_m)$ is a sequence of the resulting forms after a fixed change of view, then $A$ and $A'$ have the same $d$-dimension for any $d$.
\end{lemma}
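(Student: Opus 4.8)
The plan is to reduce the statement to the elementary fact that an invertible linear map preserves the dimension of a span, applied to the $d$-fold tensor (Kronecker) power of the matrix that implements the change of view.

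First I would translate the change of view into an action on coefficient vectors. Write $\alpha_i\in\F_p^\ell$ for the coefficient vector of $a_i$, so that under the natural identification $\F_p^{\ell^d}\cong(\F_p^\ell)^{\otimes d}$ we have $a_i^{\otimes d}=\alpha_i^{\otimes d}$. Unwinding the definition of change of view (which performs one fixed invertible change of variables $M$ on the common domain $\F_p^\ell$ of all the $a_i$ and then reads off the new coefficients), one sees that $\alpha_i'=N\alpha_i$ for a single invertible $\ell\times\ell$ matrix $N$ depending only on $M$; the precise identity ($N=M^{\mathsf T}$ or $N=(M^{-1})^{\mathsf T}$, depending on conventions) is immaterial, all that matters is that $N$ is invertible and is the \emph{same} matrix for every $i$.

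Next I would invoke functoriality of the tensor power: $a_i'^{\otimes d}=(N\alpha_i)^{\otimes d}=N^{\otimes d}\,\alpha_i^{\otimes d}=N^{\otimes d}\,a_i^{\otimes d}$, where $N^{\otimes d}$ is the $d$-fold Kronecker power of $N$, an $\ell^d\times\ell^d$ matrix. Since $(N^{-1})^{\otimes d}N^{\otimes d}=(N^{-1}N)^{\otimes d}=I$, the matrix $N^{\otimes d}$ is invertible. Hence the family $(a_1'^{\otimes d},\dots,a_m'^{\otimes d})$ is the image of $(a_1^{\otimes d},\dots,a_m^{\otimes d})$ under an invertible linear map; such a map is injective and carries a spanning set of a subspace to a spanning set of its image, so it preserves the dimension of the span. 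Therefore $\dim\mspan\{a_1'^{\otimes d},\dots,a_m'^{\otimes d}\}=\dim\mspan\{a_1^{\otimes d},\dots,a_m^{\otimes d}\}$, which is exactly the claim that $A$ and $A'$ have the same $d$-dimension; summing this identity over $d\in\{d_1,\dots,d_C\}$ gives the corresponding statement for the $(d_1,\dots,d_C)$-dimension as well.

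There is essentially no obstacle here; the only point requiring a moment's care is the bookkeeping identity $a_i'^{\otimes d}=N^{\otimes d}a_i^{\otimes d}$ with one matrix $N$ shared by all $i$, which is precisely where the hypothesis of a \emph{fixed} change of view (the same $M$ applied to every form of $A$) enters — with form-dependent changes the statement would be false.
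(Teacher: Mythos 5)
Your proof is correct and takes essentially the same route as the paper: the paper phrases it in terms of the change of basis lifting to an invertible linear operation on the space of multilinear forms, while you phrase it as the Kronecker power $N^{\otimes d}$ acting on coefficient vectors in $\F_p^{\ell^d}$, but these are the same invertible map expressed abstractly versus in coordinates, and both then conclude by invariance of span dimension under an invertible linear map.
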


\begin{proof}
We use the identification of linear forms with linear functions from $\F_p^\ell$ to $\F_p$, and by extension for a linear form $a$ we consider the vector $a^{\otimes d}$ as the multilinear function $a^{\otimes d}:(\F_p^\ell)^d\to\F_p$ that sends $(v^{(1)},\ldots,v^{(d)})$ to $\prod_{i=1}^d a(v^{(i)})$; the representation of this multilinear function in the standard basis indeed corresponds to the vector originally defined as $a^{\otimes d}$.

The operation that takes $a$ to $a^{\otimes d}$ is not linear in itself; however, a change of basis over $\F_p^\ell$ (corresponding to the change of view) can be extended to an invertible linear operation over the linear space of all multilinear functions of $d$ vectors (not all of which come from linear forms). Namely, if $M$ is the basis change matrix, then the change of view for $a$ sends it to the function defined by $a'(v)=a(Mv)$, and $(a')^{\otimes d}$ in fact corresponds to $\prod_{i=1}^d a(Mv^{(i)})$. Now by basic linear algebra, the operation that sends any multilinear form $b:(\F_p^\ell)^d\to\F_p$ to the form $b'$ defined by $b'(v^{(1)},\ldots,v^{(d)})=a(Mv^{(1)},\ldots,Mv^{(d)})$ is linear and invertible; thus the $d$-dimension, and in fact the exact corresponding linear dependencies, do not change when moving from $A=(a_1,\ldots,a_m)$ to $A'=(a'_1,\ldots,a'_m)$.
\end{proof}

We end this section with a lemma about a ``juxtaposition'' of two sets of identical forms while sharing one variable.

\begin{lemma}\label{lem:doubledim}
Suppose that $(a'_1,\ldots,a'_m)$ are linear forms over $(X_1,\ldots,X_\ell)$ of $d$-dimension $q$, where for some $k$ the form $a'_k$ sends $(X_1,\ldots,X_\ell)$ to $X_1$.
The $d$-dimension of the following $2m$ linear forms over $(Z,X_2,\ldots,X_\ell,Y_2,\ldots,Y_\ell)$:
$$(a'_1(Z,X_2,\dots,X_\ell), \dots,a'_m(Z,X_2, \dots,X_\ell),
a'_1(Z, Y_2, \dots, Y_\ell),\dots,a'_m(Z,Y_2,\dots,Y_\ell))$$
is exactly $2q-1$.
\end{lemma}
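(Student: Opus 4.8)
The plan is to work with the identification of linear forms with linear functions on $\F_p^\ell$, extended to multilinear functions via tensor powers, exactly as in the proof of Lemma \ref{lem:samedim}. Write $V = \F_p^\ell$ with coordinates $(X_1,\dots,X_\ell)$, and let $e_1 \in V$ be the first standard basis vector, so that by hypothesis $a'_k$ is the coordinate functional $v \mapsto v_1$, i.e.\ $a'_k = e_1^*$. The $2m$ new forms live on $W = \F_p^{2\ell-1}$ with coordinates $(Z, X_2,\dots,X_\ell,Y_2,\dots,Y_\ell)$; there are two natural linear maps $\pi_X : W \to V$ (sending $(z,x_2,\dots,x_\ell,y_2,\dots,y_\ell)$ to $(z,x_2,\dots,x_\ell)$) and $\pi_Y : W \to V$ (sending it to $(z,y_2,\dots,y_\ell)$), and the new forms are $a'_j \circ \pi_X$ and $a'_j \circ \pi_Y$. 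Since $(a'_j\circ\pi_X)^{\otimes d}$ is the pullback of $(a'_j)^{\otimes d}$ along $\pi_X^{\otimes d}$ (and similarly for $Y$), the $d$-dimension we want is $\dim \mathrm{span}\{(\pi_X^{\otimes d})^*(a_j'^{\otimes d}), (\pi_Y^{\otimes d})^*(a_j'^{\otimes d}) : j \in [m]\}$ inside the space of multilinear forms on $W^d$.

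The key step is to analyze the linear dependencies. A dependency $\sum_j \lambda_j\, a_j'^{\otimes d}\circ\pi_X^{\otimes d} + \sum_j \mu_j\, a_j'^{\otimes d}\circ\pi_Y^{\otimes d} = 0$ in multilinear forms on $W$. The point is that $W = \ker(\pi_X)\oplus\cdots$ — more usefully, note that $\pi_X$ is surjective and its kernel is $\{(0,0,\dots,0,y_2,\dots,y_\ell)\}$, the "$Y$-only'' subspace, while $\pi_Y$ has kernel the "$X$-only'' subspace; together they span all of $W$ and intersect in the $Z$-line. I would restrict the dependency to suitable subspaces of $W$: restricting to the subspace where $Y_2=\dots=Y_\ell=0$ (on which $\pi_Y$ becomes $v\mapsto(v_1,0,\dots,0) = (\text{the }Z\text{-value})\cdot e_1$, so $a'_j$ becomes $Z\mapsto a_j'(Z,0,\dots,0)\cdot$ something — actually $a_j'(Z,0,\dots,0) = c_j Z$ where $c_j$ is the $X_1$-coefficient of $a'_j$, which equals $1$ for every $a'_j$ coming from an affine constraint — here we only know it for one specific $k$, so more care is needed). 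Concretely: restricting the putative dependency to the "$X$-only'' subspace kills all the $\pi_Y$-terms except their $Z$-part; restricting to the "$Y$-only'' subspace kills all the $\pi_X$-terms except their $Z$-part. From the first restriction one gets $\sum_j\lambda_j a_j'^{\otimes d} = 0$ as forms on $V$ (viewing the $X$-only subspace as a copy of $V$ via $(z,x_2,\dots,x_\ell)$), i.e.\ $(\lambda_j)$ lies in the dependency space of $\{a_j'^{\otimes d}\}$, which has dimension $m-q$; likewise $(\mu_j)$ lies in that same space. Plugging back, the remaining constraint involves only the $Z$-coordinate, forcing $\sum_j(\lambda_j+\mu_j)(a_j'(e_1))^{\otimes d}\cdot(\text{rank-one form in }Z)=0$; since $a'_k(e_1)=1$, evaluating appropriately this gives one more linear relation tying $(\lambda_j)$ and $(\mu_j)$ together, cutting the dimension of the dependency space from $2(m-q)$ down to $2(m-q)-1$. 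Hence the rank is $2m - (2(m-q)-1) = 2q-1$.

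I would organize the write-up as: (1) set up the two projections and the pullback identity for tensor powers; (2) compute the upper bound $\le 2q-1$ by exhibiting $2(m-q)-1$ independent dependencies — take any basis of the $(m-q)$-dimensional dependency space of $\{a_j'^{\otimes d}\}$, use each to make both a "pure $X$'' and a "pure $Y$'' dependency among the new forms ($2(m-q)$ of them), then observe these satisfy exactly one relation because the "all-ones coordinate'' ($X_1^{\otimes d}$, i.e.\ the $Z^{\otimes d}$ monomial) is common to both halves and appears in the $Z$-restriction; (3) for the lower bound $\ge 2q-1$, show any dependency must arise this way by the restriction-to-subspaces argument above, so the dependency space has dimension exactly $2(m-q)-1$.

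The main obstacle I anticipate is making the "$Z$-coordinate'' bookkeeping fully rigorous — precisely identifying which coordinate of the tensor $a_j'^{\otimes d}$ is shared between the $\pi_X$ and $\pi_Y$ pullbacks, and showing this shared part contributes exactly one extra relation (not zero, not more). This is essentially the same phenomenon as in Lemma \ref{lem:present}: the $X_1^{\otimes d}$ coordinate (the one indexed by $(1,\dots,1)$) is forced to value $1$ for a form from an affine constraint, and here the hypothesis that $a'_k(Z,X_2,\dots,X_\ell)=Z$ guarantees this coordinate is genuinely "active'' so that the gluing at $Z$ is non-degenerate and produces exactly the single relation $\sum_j \lambda_j = -\sum_j\mu_j$ (or a scalar multiple thereof). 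Handling the degenerate case $d=0$ (where all tensor powers are the constant $1$, $q\in\{0,1\}$) separately, or simply assuming $d\ge 1$ as the surrounding usage guarantees, avoids a minor annoyance.
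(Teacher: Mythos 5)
Your high-level plan is a dual of the paper's proof: you compute the dimension of the dependency space of the $2m$ pulled-back tensors (and subtract from $2m$), whereas the paper directly computes the dimension of their span via $\dim(U+W)=\dim U+\dim W-\dim(U\cap W)$, exhibiting a size-$q$ basis on each side that shares exactly the vector $Z^{\otimes d}$. Both routes are viable, but your execution has two genuine errors that lead to the wrong answer.

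First, the constraint you derive from restricting to the $X$-only (resp.\ $Y$-only) subspace is too strong. Restricting a relation $\sum_j \lambda_j (a'_j\circ\pi_X)^{\otimes d} + \sum_j \mu_j (a'_j\circ\pi_Y)^{\otimes d}=0$ to $\{Y_2=\dots=Y_\ell=0\}$ does not give $\sum_j\lambda_j a_j'^{\otimes d}=0$; it gives $\sum_j\lambda_j a_j'^{\otimes d}=c_\lambda\cdot (a'_k)^{\otimes d}$ for a scalar $c_\lambda$ (the contribution of the surviving $Z^{\otimes d}$ piece of the $\pi_Y$-pullbacks), since $Z^{\otimes d}=(a'_k)^{\otimes d}$ lives on the $X$-only subspace too. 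Equivalently, $\lambda$ need not lie in the dependency space $D$ of $\{a_j'^{\otimes d}\}$; only $\lambda-c_\lambda e_k$ must. This enlarges the admissible set for each of $\lambda$ and $\mu$ from dimension $m-q$ to $m-q+1$. Moreover, once you (incorrectly) insist $\lambda,\mu\in D$, the extra ``$Z$-coordinate'' constraint $\sum_j(\lambda_j+\mu_j)(\alpha_{j,1})^d=0$ is automatically satisfied (it is exactly the $(1,\dots,1)$-coordinate of $\sum_j\lambda_j a_j'^{\otimes d}$ plus the same for $\mu$, and both vanish when $\lambda,\mu\in D$), so it does not cut the dimension at all; your claimed decrease from $2(m-q)$ to $2(m-q)-1$ does not happen under your hypotheses. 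The correct count is $2(m-q+1)-1=2(m-q)+1$, not $2(m-q)-1$.

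Second, even accepting your claimed kernel dimension, the arithmetic is off: $2m-\bigl(2(m-q)-1\bigr)=2q+1$, not $2q-1$. In fact everything becomes consistent once both errors are fixed: the kernel has dimension $2(m-q)+1$ (the $2(m-q)$ ``pure-$X$'' and ``pure-$Y$'' dependencies $(\lambda^{(i)},0)$, $(0,\lambda^{(i)})$, which are linearly independent of one another and have no relation among them, plus the genuinely new dependency $(e_k,-e_k)$ coming from $(a'_k\circ\pi_X)^{\otimes d}=(a'_k\circ\pi_Y)^{\otimes d}=Z^{\otimes d}$), giving rank $2m-(2(m-q)+1)=2q-1$. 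Note the direction of the ``extra relation'' is thus the opposite of what you wrote: the shared $Z^{\otimes d}$ does not impose a relation among the pure-$X$/pure-$Y$ dependencies, it furnishes an \emph{additional independent} dependency. If you fix these two points, your dual approach goes through; the paper's direct span argument avoids the bookkeeping entirely by observing that the intersection of the two size-$q$ spans is exactly $\mathrm{span}\{Z^{\otimes d}\}$, since $X$-pullbacks and $Y$-pullbacks are supported on disjoint coordinate sets except for the $(1,\dots,1)$ index.
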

\begin{proof}
We note that $a'_k(Z,X_2,\ldots,X_{\ell})=a'_k(Z,Y_2,\ldots,Y_{\ell})=Z$, and that all other
linear forms are distinct. Abusing notation somewhat, we let $Z$ denote also the linear form that returns the value of $Z$ from the variables $(Z,X_2,\ldots,X_\ell,Y_2,\ldots,Y_\ell)$; note that in particular $Z^{\otimes d}$ corresponds to the vector from $\F_p^{(2\ell-1)^d}$ that is $1$ on its coordinate corresponding to $(1,\ldots,1)$, and zero everywhere else.

Let $S \subseteq \{1,\ldots,m\}\setminus\{k\}$ be a set of size $q-1$
such that $\left\{\left(a'_j(Z, X_2, \dots, X_m)\right)^{\otimes d} : j \in S \cup \{k\} \right\}$ is a basis of size $q$ for the linear space $\textrm{span}\left\{\left(a'_j(Z, X_2, \dots, X_m)\right)^{\otimes d} : j \in [m] \right\}$. Clearly, $\left\{\left(a'_j(Z, Y_2, \dots, Y_m)\right)^{\otimes d} : j \in S \cup \{k\} \right\}$ is a basis for $\textrm{span}\left\{\left(a'_j(Z, Y_2, \dots,Y_m)\right)^{\otimes d} : j \in [m] \right\}$. Thus, the $d$-rank of the $2m$ linear forms is at most $2q -1$. To conclude, we will show that the $d$-rank is at least $2q-1$. To this end, we analyze the intersection
 $$\textrm{span}\left\{\left(a'_j(Z, X_2, \dots,   X_m)\right)^{\otimes d} : j \in S \cup \{k\} \right\}\cap \textrm{span}\left\{\left(a'_j(Z, Y_2, \dots,
    Y_m)\right)^{\otimes d} : j \in S \cup \{k\} \right\}.$$
It is clearly contained in $\textrm{span}\left\{Z^{\otimes d}\right\}$, since no other coordinate
can be non-zero in both sets (the left set can have only non-zero coordinates corresponding to sequences of length $d$
over $\{1,\ldots,\ell\}$, and the right set can have only non-zero coordinates corresponding to sequences of length $d$
over $\{1,\ell+1,\ldots,2\ell-1\}$). On the other hand, the intersection contains (and hence is equal to) $\textrm{span}\left\{Z^{\otimes d}\right\}$, because this vector appears on both sides (as $a'_k$).
This shows by a linear dimension argument that
the $d$-dimension of the $2m$ linear forms is exactly $2q -1$ as claimed.
\end{proof}

\subsection{The Proof of Testability}\label{subsec:proof}

We finally have all the building blocks in place to prove Theorem \ref{thm:main2}, which implies Theorem \ref{thm:main}.

\begin{proofof}{Theorem \ref{thm:main2}}
We begin with some
preliminaries. Let $d$ be the maximum complexity of an
affine constraint $A^i$ appearing in $\cala$. By hypothesis, $d <
p$. For $i \in [R]$, define $f^{(i)}:\F_p^n \to \zo$ so that
$f^{(i)}(x)$ equals $1$ when $f(x) = i$ and equals $0$
otherwise. Additionally, set the following parameters, where
$\Psi_\cala: \Z^+ \to \Z^+$ is the compactness function of $\cala$.
\begin{align*}
\alpha(C) &= p^{-2\Psi_A(C)C}\\
\rho(C) &= r_{\ref{thm:rankreg}}(d,\alpha(C))\\
\Delta(C) &= \frac{1}{16}\left(\frac{\eps}{8R}\right)^{\Psi_\cala(C)}\\
\eta(C) &= \frac{1}{8(3p)^{C\Psi_\cala(C)}}
\left(\frac{\eps}{8R}\right)^{\Psi_\cala(C)}\\
\zeta &= \frac{\eps}{8R}
\end{align*}
$\ell_{\cala}$ and $\delta_{\cala}$ will be defined, based on the above functions,
in (\ref{eqn:ell}) and (\ref{eqn:delta}) below.

Next, apply Theorem \ref{thm:subatom2} to the functions
$f^{(1)},f^{(2)}, \dots, f^{(R)}$ in order to get polynomial factors
$\calb' \preceq_{syn} \calb$ of degree $d$ and size at most
$C_{\ref{thm:subatom2}}(\Delta, \eta, p, \rho, \zeta, R)$, an element $s \in
\F_p^{|\calb'|-|\calb|}$, and functions
$f_1^{(i)}, f_2^{(i)}, f_3^{(i)}: \F_p^n \to \R$ for every $i \in
[R]$. The sequence of polynomials generating $\calb'$ will be denoted
by $P_1,\dots,P_{|\calb'|}$. Since $\calb'$ is a syntactic refinement,
$\calb$ is generated by the polynomials $P_1,\dots,P_{|\calb|}$.

Let $F$ be the $\zeta$-cleanup of $f$ with respect to $\calb$, $\calb'$ and $s$.
By Lemma \ref{lem:cleanup}, and what we know of these partitions and $s$,
$F$ is $\epsilon/2$-close to $f$, and hence by our assumption on the farness
of $f$, the function $F$ will still include an induced constraint from $\cala$.

By Observation \ref{obs:pinduce}, the big picture function $F_{\calb}$ of $F$
will $(d_1,\dots,d_{|\calb|})$-partially induce some constraint from $\cala$,
and hence by Observation \ref{obs:induce} it will partially induce some
$(A^i,\sigma^i)$ for which $m_i \leq \Psi_\cala(|\calb|)$. This will be the
constraint of which we will find many copies in the original $f$.
Let $m \eqdef m_i$, let $\ell~\eqdef~\ell_i$,
and let $\sigma_1, \dots, \sigma_m$ denote $\sigma^i_1,\dots,\sigma^i_m$
respectively. Since a concise $\cala$ means that $\ell_i\leq m_i$, we can now
define
\begin{equation}\label{eqn:ell}
\ell_{\cala}(\eps)=\Psi_\cala(C_{\ref{thm:subatom2}}(\Delta, \eta, p, \rho, \zeta, R)).
\end{equation}

Denote the linear forms in $A^i$ by $a_1, \dots, a_{m}$ and denote $\sigma^i=(\sigma_1,\ldots,\sigma_m)$.
Let $c_1 = (c_{1,1},\dots, c_{|\calb|,1}),\dots,c_{m} =
(c_{1,m},\dots, c_{|\calb|,m})\in \F_p^{|\calb|}$ index the cells of
$\calb$ where $(A^i,\sigma^i)$ is partially induced by $F_{\calb}$, the big picture function
of the cleanup function $F$, i.e., $c_1,\ldots,c_m$ are consistent,
and $\sigma_i\in F_{\calb}(c_i)$ for every $j \in [m]$. Also, let
$c_1',\dots,c_{m}' \in \F_p^{|\calb'|}$ index the associated subcells
of $\calb'$, obtained by letting $c_j' = (c_j,s)$ for every $j \in
[m]$.

Our goal will now be to lower bound:
\begin{align}
& \Pr_{x_1,\dots,x_{\ell}\in \F_p^n}\left[f(a_1(x_1,\dots,x_\ell))=\sigma_1
  \wedge\cdots\wedge f(a_m(x_1,\dots,x_\ell))=\sigma_m\right] \nonumber\\
& =\E_{x_1,\dots,x_{\ell}\in \F_p^n}\left[f^{(\sigma_1)}(a_1(x_1,\dots,x_\ell))\cdots
  f^{(\sigma_m)}(a_m(x_1,\dots, x_\ell))\right]\label{eqn:obj}
\end{align}
The theorem obviously follows if  the above
expectation is more than the respective $\delta_\cala(\eps)$. We rewrite the
expectation as:
\begin{equation}\label{eqn:obj2}
\E_{x_1,\dots,x_{\ell}\in \F_p^n}\left[(f_1^{(\sigma_1)}+
  f_2^{(\sigma_1)} + f_3^{(\sigma_1)})(a_1(x_1,\dots,x_\ell))\cdots
  (f_1^{(\sigma_m)}+  f_2^{(\sigma_m)} + f_3^{(\sigma_m)})(a_m(x_1,\dots, x_\ell))\right]
\end{equation}

We can expand the expression inside the expectation as a sum of $3^m$
terms. The expectation of any term which is a multiple of $f_2^{(\sigma_j)}$
for any $j \in [m]$ has an absolute value upper bound of
$\|f_2^{(\sigma_j)}\|_{U^{d+1}} \leq \eta(|\calb'|)$, because of Lemma
\ref{lem:cnt} and the fact that the complexity of $A^i$ is
bounded by $d$. Hence, the expression (\ref{eqn:obj2}) is at least:
\begin{equation}\label{eqn:obj3}
\E_{x_1,\dots,x_{\ell}}\left[(f_1^{(\sigma_1)}+
  f_3^{(\sigma_1)})(a_1(x_1,\dots,x_\ell))\cdots
  (f_1^{(\sigma_m)}+  f_3^{(\sigma_m)})(a_m(x_1,\dots, x_\ell))\right]
- 3^{m} \eta(|\calb'|)
\end{equation}

Before we continue, to ease notation, for the rest of the proof we will now define an indicator function.
$\cali_{(a_1,\ldots,a_m)}^{(c'_1,\ldots,c'_m)}(x_1,\ldots,x_{\ell})$ will be set to $1$ if $\calb'(a_j(x_1,\dots,x_\ell)) = c'_j$ for every $j\in [m]$, and it will be set to $0$ otherwise.

Now, because of the non-negativity of $f_1^{(\sigma_j)}+f_3^{(\sigma_j)}$
for every $j \in [m]$, the expectation in (\ref{eqn:obj3}) is at
least:
\begin{equation*}%\label{eqn:obj4}
\E_{x_1,\dots,x_{\ell}}\left[
%\begin{array}{l}
\left(f_1^{(\sigma_1)}+  f_3^{(\sigma_1)}\right)(a_1(x_1,\dots,x_\ell))\cdots
  \left(f_1^{(\sigma_m)}+  f_3^{(\sigma_m)}\right)(a_m(x_1,\dots, x_\ell))\cdot 
\cali_{(a_1,\ldots,a_m)}^{(c'_1,\ldots,c'_m)}(x_1,\ldots,x_{\ell})
%\end{array}
\right]
\end{equation*}
%Here $\mathbf{1}$ is the indicator function that is $1$ when its input is true and $0$ otherwise.
In other words, what we are doing now is
counting only patterns that arise from the selected subcells $c_1',
\dots, c_m'$. We next expand the product inside the expectation into $2^m$
terms. The main contribution will come from:
\begin{equation}\label{eqn:obj5}
\E_{x_1,\dots,x_{\ell}}\left[
f_1^{(\sigma_1)}(a_1(x_1,\dots,x_\ell))\cdots
 f_1^{(\sigma_m)}(a_m(x_1,\dots, x_\ell))\cdot
\cali_{(a_1,\ldots,a_m)}^{(c'_1,\ldots,c'_m)}(x_1,\ldots,x_{\ell})
\right]
\end{equation}

But first, let us show that the contribution from each of the other
$2^m - 1$ terms is small. Consider a term that contains
$f_3^{(\sigma_k)}$ for some $k \in [m]$. Letting $g$ denote an arbitrary
function with $\|g\|_\infty \leq 1$, such a term is of the form:
\begin{equation}\label{eqn:obj6}
\E_{x_1,\dots,x_\ell}\left[f_3^{(\sigma_k)}(a_k(x_1,\dots,x_\ell))
  g(x_1,\dots,x_\ell) \cdot
\cali_{(a_1,\ldots,a_m)}^{(c'_1,\ldots,c'_m)}(x_1,\ldots,x_{\ell})
\right]
\end{equation}
By our definition of affine constraints, $a_k(x_1,\dots,x_\ell)$
is of the form $x_1 + \sum_{i \in [\ell]} \alpha_i x_i$ for some
$\alpha_i \in \F_p$. We now change the summation variables
of the expectation by replacing $x_1$ with $z=x_1 + \sum_{i \in [\ell]} \alpha_i x_i$,
affecting a change of view for $a_1,\ldots,a_m$. Letting $a_1',\dots,a_m'$ denote
the linear forms as they appear after the change, we first note that $a'_k(Z,X_2,\ldots,X_{\ell})$
will equal $Z$. We can now bound the
square of (\ref{eqn:obj6}) using Cauchy-Schwarz as:
{\allowdisplaybreaks
\begin{align}
&\left(\E_{x_1,\dots,x_\ell}\left[f_3^{(\sigma_k)}(a_k(x_1,\dots,x_\ell))
  g(x_1,\dots,x_\ell) \cdot
\cali_{(a_1,\ldots,a_m)}^{(c'_1,\ldots,c'_m)}(x_1,\ldots,x_{\ell})
\right]\right)^2\nonumber\\
&\leq \left(\E_{z,x_2,\dots,x_\ell}\left[\left|f_3^{(\sigma_k)}(z)\right|
 \cdot
\cali_{(a'_1,\ldots,a'_m)}^{(c'_1,\ldots,c'_m)}(z,x_2,\ldots,x_{\ell})
\right]\right)^2\nonumber\\
&\leq \E_{z}\left[|f_3^{(\sigma_k)}(z)|^2 \cdot
  \cali_{(\mathrm{id})}^{(c'_k)}(z)\right] \cdot
\E_{z}\left[\left(\E_{x_2,\dots,x_\ell}\left[\cali_{(a'_1,\ldots,a'_m)}^{(c'_1,\ldots,c'_m)}(z,x_2,\ldots,x_{\ell})\right]\right)^2\right]\nonumber\\
&\leq \Delta^2(|\calb|)\cdot \Pr_{z}[\calb'(z) = c_k'] \cdot
\E_{z}\left[\left(\E_{x_2,\dots,x_\ell}\left[\cali_{(a'_1,\ldots,a'_m)}^{(c'_1,\ldots,c'_m)}(z,x_2,\ldots,x_{\ell})\right]\right)^2\right]\nonumber\\
&\leq \Delta^2(|\calb|) \cdot (p^{-|\calb'|}+\alpha(|\calb'|)) \cdot
\E_{z}\left[\left(\E_{x_2,\dots,x_\ell}\prod_{i \in [|\calb'|] \atop j \in
    [m]} \frac{1}{p}\sum_{\lambda_{i,j}\in
    \F_p}\expo{\lambda_{i,j}\cdot(P_i(a_j'(z,x_2,\dots,x_\ell))-c'_{i,j})}\right)^2\right]\nonumber\\
&\leq \frac{2\Delta^2(|\calb|)}{p^{2|\calb'|m + |\calb'|}}\E_{z}\left[\left(\sum_{\lambda_{i,j}\in \F_p:\atop i \in
    [|\calb'|], j \in [m]} \expo{-\sum_{i \in [|\calb'|]\atop j \in
      [m]} \lambda_{i,j}c'_{i,j}} \E_{x_2,\dots,x_\ell}\expo{\sum_{i \in [|\calb'|]\atop j \in
      [m]} \lambda_{i,j}P_i(a'_j(z,x_2,\dots,x_\ell))}\right)^2\right]\nonumber\\
&\leq \frac{2\Delta^2(|\calb|)}{p^{2|\calb'|m +
    |\calb'|}}\sum_{\lambda_{i,j}, \tau_{i,j} \in \F_p:\atop i \in
    [|\calb'|], j \in [m]} \left( \expo{-\sum_{i \in [|\calb'|]\atop j \in
      [m]} \lambda_{i,j}c'_{i,j}} \expo{\sum_{i \in [|\calb'|]\atop j \in
      [m]} \tau_{i,j}c'_{i,j}} \cdot \right. \nonumber \\
& \qquad\qquad\qquad \left. \E_{z,x_2,\dots, x_\ell \atop y_2, \dots, y_\ell}\left[\expo{\sum_{i \in [|\calb'|]\atop j \in
      [m]}
    \lambda_{i,j}P_i(a'_j(z,x_2,\dots,x_\ell))}\expo{-\sum_{i \in [|\calb'|]\atop j \in
      [m]}
    \tau_{i,j}P_i(a'_j(z,y_2,\dots,y_\ell))}\right]\right)\nonumber\\
&\leq \frac{2\Delta^2(|\calb|)}{p^{2|\calb'|m+|\calb'|}}\!\!\!\!\!\!\!\!\sum_{\lambda_{i,j},
  \tau_{i,j} \in \F_p:\atop i \in [|\calb'|], j \in [m]} \left|
  \E_{z,x_2,\dots, x_\ell \atop y_2, \dots, y_\ell}\left[\expo{\sum_{i \in [|\calb'|]\atop j \in
      [m]} \lambda_{i,j}P_i(a'_j(z,x_2,\dots,x_\ell))-\!\!\!\!\sum_{i \in [|\calb'|]\atop j \in
      [m]} \tau_{i,j}P_i(a'_j(z,y_2,\dots,y_\ell))}\right]\right|\label{eqn:obj7}
\end{align}}
Now, by Lemma \ref{lem:samedim}, the $(d_1,\dots,d_{|\calb'|})$-dimension of
$\{a_1,\dots,a_m\}$ equals the $(d_1,\dots,d_{|\calb'|})$-dimension of
$\{a_1',\dots,a_m'\}$.

Let $q$ denote the $(d_1,\dots,d_{|\calb'|})$-dimension of
$\{a_1,\dots,a_m\}$. By Lemma \ref{lem:doubledim}, summing over all of $(d_1,\dots,d_{|\calb'|})$,
we know that the $(d_1,\dots,d_{|\calb'|})$-dimension of
$$\left(a'_1(Z,X_2,\dots,X_\ell), \dots,a'_m(Z,X_2, \dots,X_\ell),
a'_1(Z, Y_2, \dots, Y_\ell),\dots,a'_m(Z,Y_2,\dots,Y_\ell)\right)$$
is exactly $q-|\calb'|$.

Now, just as in the proof of Theorem \ref{thm:density}, the
above information is enough to upper-bound (\ref{eqn:obj7}). 
The above $(d_1,\dots,d_{|\calb'|})$-dimension bound and Lemma \ref{lem:count} allow us to count
the number of $\lambda_{i,j}$ and $\tau_{i,j}$ such that the quantity
inside the expectation in (\ref{eqn:obj7}) is identically $1$, and Lemma \ref{lem:hl1}
along with the high-rank condition on the polynomials $P_i$  bounds
the expectation otherwise. It follows that 
(\ref{eqn:obj7}), and therefore the square of (\ref{eqn:obj6}), is at most: 
\begin{equation}\label{eqn:obj8}
\frac{2\Delta^2(|\calb|)}{p^{2m|\calb'|+|\calb'|}} \left(p^{2m|\calb'|-(2q-|\calb'|)} +
  p^{2m|\calb'|}\alpha(|\calb'|) \right) \leq 2\Delta^2(|\calb|)\cdot(p^{-2q} + \alpha(|\calb'|))
\end{equation}

Finally, we lower-bound the contribution from the main
term (\ref{eqn:obj5}). To begin with, we need to convince ourselves
that $f$ induces many copies of $(A^i,\sigma^i)$ among the
subcells $c_1',\dots,c_m'$. Recall
that $c_1,\dots,c_m$ are consistent with $d_1,\ldots,d_{|\calb|}$ and $A^i$,
and that $\sigma_i\in F_{\calb}(c_i)$ for every $i\in [m]$. By Lemma \ref{lem:present}
$c'_1,\dots,c'_m$ are consistent with $d_1,\ldots,d_{|\calb'|}$ and $A^i$ as well.

We can now lower-bound (\ref{eqn:obj5}) as follows:
\begin{align}
&\E_{x_1,\dots,x_{\ell}}\left[
f_1^{(\sigma_1)}(a_1(x_1,\dots,x_\ell))\cdots
 f_1^{(\sigma_m)}(a_m(x_1,\dots, x_\ell))\cdot
\cali_{(a_1,\ldots,a_m)}^{(c'_1,\ldots,c'_m)}(x_1,\ldots,x_{\ell})
\right] \nonumber\\
&= \Pr[\calb'(a_1(x_1,\dots,x_\ell)) = c'_1\wedge\cdots\wedge\calb'(a_m(x_1,\dots,x_\ell)) = c'_m] \cdot \nonumber\\
&\qquad\qquad \E_{x_1,\dots,x_{\ell}}\left[
%\begin{array}{l}
f_1^{(\sigma_1)}(a_1(x_1,\dots,x_\ell))\cdots
 f_1^{(\sigma_m)}(a_m(x_1,\dots, x_\ell))|
\forall j\in [m] ~ \calb'(a_j(x_1,\dots,x_\ell)) = c'_j
%\end{array}
\right]\nonumber\\
&\geq (p^{-q}-\alpha(|\calb'|))\cdot \left(\frac{\eps}{8R}\right)^m\label{eqn:obj9}
\end{align}
Let us justify the last line. The first term is due to Lemma
\ref{lem:present} and the lower bound on the probability from Theorem
\ref{thm:density}. The second term in (\ref{eqn:obj9}) is
because each $f_1^{(\sigma_j)}$ is constant on the cells of $\calb'$,
and because by construction, the big picture function $F_{\calb}$ of the cleanup function $F$, on which
$(A^i,\sigma^i)$ was partially induced, supports a value inside a cell $c$ of
$\calb$ only if the original function $f$ acquires the value on at
least an $\eps/(8R)$ fraction of the subcell $(c,s)$.

Combining the bounds from (\ref{eqn:obj3}), (\ref{eqn:obj8}) and
(\ref{eqn:obj9}), and using our parameter settings,  we get that
  (\ref{eqn:obj}) is at least:
\begin{align*}
&(p^{-q}-\alpha(|\calb'|))\cdot \left(\frac{\eps}{8R}\right)^m -
\sqrt{2\Delta^2(|\calb|)\cdot(p^{-2q} + \alpha(|\calb'|))} - 3^m\cdot
\eta(|\calb'|)\\
&>
\frac{p^{-q}}{2}\cdot\left(\frac{\eps}{8R}\right)^{\Psi_\cala(|\calb|)} -
2\Delta(|\calb|) \cdot p^{-q} - 3^{\Psi_\cala(|\calb|)}\cdot
\eta(|\calb'|)\\
&>\frac{p^{-\Psi_A(|\calb|)|\calb'|}}{4}\cdot\left(\frac{\eps}{8R}\right)^{\Psi_\cala(|\calb|)}
\end{align*}
where both $|\calb|$ and $|\calb'|$ are upper-bounded by
$C_{\ref{thm:subatom2}}(\Delta, \eta, p, \rho, \zeta, R)$ .
We can now define
\begin{equation}\label{eqn:delta}
\delta_{\cala}(\eps)=\frac14p^{-\Psi_A(C_{\ref{thm:subatom2}}(\Delta, \eta, p, \rho, \zeta, R))C_{\ref{thm:subatom2}}(\Delta, \eta, p, \rho, \zeta, R)}\cdot\left(\frac{\eps}{8R}\right)^{\Psi_\cala(C_{\ref{thm:subatom2}}(\Delta, \eta, p, \rho, \zeta, R))}
\end{equation}
to conclude the proof.
\end{proofof}

\bibliographystyle{alpha}
\bibliography{testing}

\end{document}